\newcommand{\backrefnotcitedstring}{\relax}
\newcommand{\backrefcitedsinglestring}[1]{(Cited on page~#1.)}
\newcommand{\backrefcitedmultistring}[1]{(Cited on pages~#1.)}
		   \renewcommand*{\backref}[1]{}  
		   \renewcommand*{\backrefalt}[4]{
		      \ifcase #1 %
		         \backrefnotcitedstring%
		      \or%
		         \backrefcitedsinglestring{#2}%
		      \else%
		         \backrefcitedmultistring{#2}%
		      \fi}%
\newcommand{\etal}{et~al.\xspace}
\theoremstyle{definition}
\newtheorem{observation}{Observation}
\theoremstyle{definition}
\newtheorem{remark-type-2}{Remark}
\theoremstyle{plain}
\theoremstyle{definition}
\newtheorem{definition}{Definition}
\theoremstyle{plain}
\theoremstyle{plain}
\title{%
  \MakeUppercase{Approximating the Smallest \texorpdfstring{$k$}{k}-Enclosing Geodesic Disc in a Simple Polygon}%
  \thanks{This research was funded in part by the Natural Sciences and Engineering Research Council of Canada (NSERC). A preliminary version of this paper appeared in WADS $2023$ \cite{DBLP:conf/wads/BoseDD23}. A full version of this paper also appeared as a chapter in Anthony D'Angelo's thesis \cite{anthonydangelo2023thesis}. This paper has undergone minor corrections since publication.}
}
\author{%
  Prosenjit~Bose,%
  \thanks{\affil{School of Computer Science, Carleton University, Ottawa, Canada}, 
          \email{jit@scs.carleton.ca}}\,
  Anthony~D'Angelo,%
  \thanks{\affil{Carleton University, Ottawa, Canada},
          \email{anthony.dangelo@carleton.ca}}\,
  Stephane~Durocher%
  \thanks{\affil{University of Manitoba, Winnipeg, Canada}, 
          \email{stephane.durocher@umanitoba.ca}}
}
\begin{document}


\maketitle

\begin{abstract}

We consider the problem of finding a geodesic disc of smallest 
radius containing at least 
$k$ points from a set of $n$ points in a simple polygon that has 
$m$ vertices, 
$r$ of which are reflex vertices.
We refer to such a disc as a SKEG disc.
We present an algorithm to compute a SKEG disc 
using higher-order geodesic Voronoi 
diagrams with worst-case time 
$O(k^{2} n + k^{2} r + \min(kr, r(n-k)) + m)$
ignoring polylogarithmic factors.

We then present two $2$-approximation algorithms that
find a geodesic disc containing at least $k$ points
whose radius is at most twice that of a SKEG disc.
The first algorithm computes a $2$-approximation 
with high probability
in 
$O((n^{2} / k) \log n \log r + m)$ worst-case
time with $O(n + m)$ space.
The second algorithm runs in $O(n \log^{2} n \log r  + m)$ expected time 
using $O(n + m)$ expected space, independent of $k$.
Note that the first algorithm is faster when $k \in \omega(n / \log n)$.
\end{abstract}

\begin{keywords}

\noindent
Minimum / smallest enclosing circle / disc 
$\cdot$ Geodesic 
$\cdot$ Simple polygon 
$\cdot$ \texorpdfstring{$2$}{2}-approximation 
$\cdot$ Computational geometry
$\cdot$ SKEG disc
$\cdot$ $2$-SKEG disc
\end{keywords}


\section{Introduction}
\label{sec: intro}
The \emph{smallest enclosing disc} problem\footnote{Also known as the \emph{minimum enclosing disc} or \emph{minimum enclosing circle} problem.}
takes as input a set $S$ of $n$ points in the plane 
and returns the smallest Euclidean disc
that contains $S$.
This can be solved in $O(n)$ expected time \cite{welzlDisks} and
$O(n)$ worst-case time \cite{DBLP:journals/siamcomp/Megiddo83a}.
The \emph{smallest $k$-enclosing disc} problem is a generalization that asks 
for a smallest disc that
contains at least 
$k \leq |S|$ points\footnote{In this paper, we use the notation $|Z|$ 
to denote the number of points in $Z$ if $Z$ is a point set, or the number of 
vertices of $Z$ if $Z$ is a face or a polygon.} of $S$, 
for any given $k$, and has been well-studied 
\cite{DBLP:journals/jal/AggarwalIKS91,DBLP:journals/jal/DattaLSS95,DBLP:journals/comgeo/EfratSZ94,DBLP:journals/dcg/EppsteinE94,har2005fast,DBLP:journals/ipl/Matousek95,DBLP:journals/dcg/Matousek95-few-violated}.
It is conjectured that an exact algorithm that computes
the smallest $k$-enclosing disc in the plane
requires $\Omega(nk)$ time \cite[\textsection $1.5$]{har2011geometric}.

Agarwal et al.\ \cite{DBLP:journals/jal/AggarwalIKS91} gave an algorithm
to find a set of $k$ points in $S$ with minimum diameter 
(i.e., the selected set minimizes the maximum 
distance between any two points in the set, which 
differs from the
 smallest $k$-enclosing disc)
in $O(k^{5/2}n\log k +n \log n)$ time. 
Efrat, Sharir, and Ziv \cite{DBLP:journals/comgeo/EfratSZ94} used parametric search to 
compute the smallest $k$-enclosing disc
using $O(nk)$ space in $O(nk\log^2 n)$ time, 
and using $O(n\log n)$ space 
in $O(nk\log^2 n \log (n / k))$ time. 
Eppstein and Erickson \cite{DBLP:journals/dcg/EppsteinE94} 
showed how to compute a smallest $k$-enclosing disc
in $O(n\log n + nk\log k)$ time using $O(n\log n + nk + k^2\log k)$ space. 
Datta et al.\ \cite{DBLP:journals/jal/DattaLSS95} were able to 
improve on the result of Eppstein and Erickson
by reducing the space
to $O(n+k^2\log k)$.

Matou\v{s}ek \cite{DBLP:journals/ipl/Matousek95} presented an algorithm that first computes a 
constant-factor approximation\footnote{A $\beta$-approximation means that 
the disc returned has a radius at most $\beta$ times the radius of an optimal solution.} 
in $O(n\log n)$ time and $O(n)$ space
(recently improved to $O(n)$ expected-time for a $2$-approximation
that uses $O(n)$ expected space \cite{har2005fast}),
and then uses that approximation to seed an algorithm
for solving the problem exactly in $O(n\log n + nk)$ expected time using $O(nk)$ space
(which Datta et al.\ \cite{DBLP:journals/jal/DattaLSS95} improved to $O(n+k^2)$ space)
or $O(n\log n + nk\log k)$ expected time using
$O(n)$ space
(recently improved to $O(nk)$ expected time using 
$O(n+k^2)$ expected space \cite{DBLP:journals/jal/DattaLSS95,har2005fast}).
Matou\v{s}ek \cite{DBLP:journals/dcg/Matousek95-few-violated} also presented an algorithm 
for computing the smallest disc that contains all but at most $q$ of $n$
points 
in $O(n\log n + q^3n^{\epsilon})$ time for $\epsilon$  
``a positive constant that can be made arbitrarily small by adjusting the parameters
of the algorithms; multiplicative constants in the $O()$ notation may depend on $\epsilon$''
\cite{DBLP:journals/dcg/Matousek95-few-violated}.

In this paper we generalize the
smallest $k$-enclosing disc
problem to simple polygons using the \emph{geodesic metric}.
This means the Euclidean distance function has been replaced by the 
\emph{geodesic} distance function.
The \emph{geodesic distance} between two points $a$ and $b$
in a simple polygon $P$ is the length
of the shortest path between $a$ and $b$ that lies completely inside
$P$. 
When $P$ is a simple polygon with a finite set of vertices,
a shortest path from $a$ to $b$ in $P$ is a polyline whose length is
the sum of the Euclidean lengths of its edges.
Let this shortest path starting at $a$ and ending at $b$ be 
denoted $\Pi(a,b)$ and let its length be denoted
as $d_g(a,b)$.
For any two points $a$ and $b$ of $P$, the vertices
on $\Pi(a, b)$ are either $a$, $b$, or reflex vertices of $P$
\cite{Aronov89,DBLP:journals/networks/CheinS83,DBLP:journals/networks/LeeP84}.
A \emph{geodesic disc} $D(c, \rho)$ 
of radius $\rho$ centred at $c\in P$ is the set of all points 
in $P$ whose geodesic distance to $c$ is at most $\rho$.
The main focus of our article is the following problem.

\begin{description}
	\item[Smallest $k$-Enclosing Geodesic (\emph{SKEG}) Disc Problem] ~\\
	Consider a simple polygon $P_{in}$ 
	defined by a sequence of $m$ vertices in $\mathbb{R}^2$, 
	$r$ of which are reflex vertices,
	and a set $S$ 
	of $n$ points of $\mathbb{R}^2$ contained in 
	$P_{in}$.\footnote{When we refer to 
	a point $p$ being in a polygon $P$, we mean that $p$ is in the interior of $P$ 
	or on the boundary, $\partial P$.}
	Find a \emph{SKEG disc}, i.e., a geodesic disc of minimum radius $\rho^*$ 
	in $P_{in}$ that contains at least $k$ points of $S$.
\end{description}

	We make the general position assumptions that no two points of $S$ are 
	equidistant to a vertex of $P_{in}$, and no four points of $S$ are
	geodesically co-circular.
	Under these assumptions, a SKEG disc contains
	exactly $k$ points.
	Let $D(c^*, \rho^*)$ be a SKEG disc for the 
    points of $S$ in $P_{in}$.
    For convenience, at times we will refer to this as simply $D^*$.	
	A $k$-enclosing geodesic disc (\emph{KEG} disc) is a geodesic disc 
	in $P_{in}$ that contains
    exactly $k$ points of $S$.
    A $2$-SKEG disc is a KEG disc
    with radius at most $2\rho^*$ (i.e., it is a $2$-approximation).
The main result of our article is the following theorem.

\begin{restatable*}{theorem}{NewMainResultTheorem}
\label{theorem: new main result}
If $k \in O(n / \log n)$,
\textbf{\ref{alg: main}} 
computes a $2$-SKEG disc 
in expected time $O(n \log^2 n \log r  + m)$ and 
expected space $O(n + m)$, independent of $k$;
if $k \in \omega(n / \log n)$,
\textbf{\ref{alg: main}} 
computes a $2$-SKEG disc
with high probability\footnote{We say an event 
happens with high probability if the probability is at least
$1 - n^{-\lambda}$ for some constant $\lambda$.} 
in $O((n^{2}/k) \log n \log r + m)$ deterministic
time with $O(n + m)$ space.
\end{restatable*}

\subsection{Related Work}
\label{subsec: related}
We are not aware of other work tackling the subject of this paper,
but 
below we highlight
related work on
geodesic discs in polygons.

A region $Q$ is \emph{geodesically convex} relative to a polygon $P$ if for all 
points $u,v \in Q$, the geodesic shortest path from $u$ to $v$ in 
$P$ is in $Q$.
The \emph{geodesic convex hull} $CH_g$ of a set of points $S$ in a polygon 
$P$ is the intersection of all geodesically convex regions in $P$ that 
contain $S$. 
The geodesic convex hull of $n$ points in a simple $m$-gon
can be computed in $O( n\log (mn) + m)$ time 
(or $O(n\log (mn))$ time once the shortest-path data structure
of Guibas and Hershberger \cite{GUIBAS1989126,HERSHBERGER1991231} 
is built) using $O(n+m)$ space
\cite{GUIBAS1989126,toussaint1989computing}.
The expression of the runtime can be simplified 
to $O(n\log n + m)$ 
using the following argument \cite{DBLP:journals/algorithmica/Liu20,DBLP:conf/soda/Oh19,DBLP:conf/compgeom/000121}.
If $n\leq m / \log m$, then we have
$    n\log m \leq (m / \log m) \log m = m$. 
If $m / \log m < n$, then we have
$    \log(m / \log m) < \log n$
which implies that $\log m$ is $O(\log n)$.

The \emph{geodesic centre} problem asks for a smallest 
enclosing geodesic disc
that lies in the polygon and encloses all vertices of the polygon 
(stated another way, a point that minimizes the geodesic distance to the farthest point).
This problem is well-studied \cite{DBLP:journals/dcg/AhnBBCKO16,asano1987computing,DBLP:conf/cgi/BoseT96,DBLP:journals/dcg/PollackSR89,toussaint1989computing}
and can be solved in $O(m)$ time and space
\cite{DBLP:journals/dcg/AhnBBCKO16}.
The geodesic centre for polygonal domains\footnote{As defined in Bae et al.~\cite{DBLP:journals/comgeo/BaeKO19}, a \emph{polygonal domain} with $h$ holes and $m$ corners is a connected and closed subset of $\mathbb{R}^{2}$ having $h$ holes whose boundary consists of $h+1$ simple closed polygonal chains of $m$ total line segments.} 
has also been studied 
\cite{DBLP:journals/comgeo/BaeKO19,DBLP:journals/jocg/Wang18} and can 
be solved in $O(m^{11}\log m)$ time \cite{DBLP:journals/jocg/Wang18}.
The geodesic centre problem for simple polygons has been generalized 
to the geodesic $j$-centre problem\footnote{The $j$-centre problem
in the Euclidean plane, and thus also
in our geodesic setting, is NP-hard \cite{DBLP:journals/siamcomp/MegiddoS84}.} 
where we seek a set of $j$ geodesic discs of minimum radii such that all 
vertices of the simple polygon are contained in some disc, though
only $j=2$ has been studied so far 
\cite{DBLP:journals/comgeo/OhCA18,DBLP:journals/corr/Vigan13}.
The current best algorithm by Oh et al.\ 
\cite{DBLP:journals/comgeo/OhCA18} runs
in $O(m^2\log^2 m)$ time and $O(m)$ space.
The geodesic $j$-centre problem has been generalized to finding the geodesic
$j$-centre of a set of points $S$ inside a simple polygon, though we still only have
results for $j=1$ and $j=2$.
For $j=1$, the problem can be solved by finding the geodesic centre of 
the weakly simple polygon formed by the geodesic convex hull 
of the point set $S$ \cite{DBLP:journals/dcg/AronovFW93},
 and thus runs in 
 $O(n\log n + m)$ time. 
For this approach, 
computing the solution 
is dominated by the time to compute $CH_g$. 
Oh et al.\ \cite{DBLP:journals/comgeo/OhBA19} 
recently gave an algorithm for $j=2$ that runs in
$O(n(m+n)\log^3 (m+n))$ time.
There is also the notion of the \emph{geodesic edge centre} of a simple polygon 
(which is a point in the polygon that minimizes the maximum geodesic distance to any edge of the polygon)
for which Lubiw and Naredla have presented a linear-time algorithm to compute \cite{DBLP:conf/compgeom/LubiwN23}.

Vigan \cite{DBLP:journals/corr/Vigan13} 
has worked on packing and covering a simple polygon with geodesic 
discs.
He provides a $2$-approximation for the maximum cardinality 
packing of geodesic unit discs in simple polygons in
$O(Y(m+Y)\log^2 (m +Y))$ time,
where $Y$ is the output size, and shows that finding a packing of geodesic 
unit discs is NP-hard in polygons with holes;
he shows that a minimum-size covering of
a polygon with holes using geodesic unit discs
is NP-hard;
he shows that covering a polygon using $j$ geodesic discs whose maximal
radius is minimized is NP-hard and provides a $2$-approximation algorithm
that runs in $O(j^2(m+j)\log (m+j))$ time;
he shows that packing $j$ geodesic discs whose minimum radius is maximized 
into a polygon is NP-hard and gives an 
$O(j^2(m+j)\log (m+j))$-time $4$-approximation algorithm;
lastly, he gives an $O(m^8\log m)$-time exact algorithm for covering a simple
polygon with two geodesic discs of minimum maximal radius.

Rabanca and Vigan \cite{DBLP:journals/corr/RabancaV14}
worked on covering the boundary of a simple 
polygon with the minimum number of geodesic unit discs 
and gave an $O(m\log^2 m + j)$-time $2$-approximation algorithm, 
where $j$ is the number of discs, and they also showed that if the perimeter 
$L$ of the polygon is such that $L \geq m^{1+\delta}, \delta > 0$, then a 
simple $O(m)$-time algorithm achieves an approximation ratio that goes to 
$1$ as $L/m$ goes to infinity.

Borgelt, van~Kreveld, and Luo \cite{DBLP:journals/ijcga/BorgeltKL11}
studied geodesic discs in simple polygons
to solve the following clustering problem:
given a set $S$ of $n$ points inside a simple polygon $P$ with $m$
vertices, determine all subsets of $S$ of size at least $k$ for which
a point $q$ in $P$ exists that has geodesic distance at most $\rho$ to all
points in the subset.
The input parameters are $k$ and $\rho$.
They solve their problem by generating
the boundaries of radius-$\rho$ 
geodesic discs centred at the points of $S$ and finding points that lie
in at least $k$ discs.
They provide an output-sensitive algorithm
to compute the set of geodesic discs centred at the point sites with the 
given radius that runs in $O((m+ (Ym)^{(2 / 3)}+Y)\log^c m)$ time
for some constant $c$ and output size $Y$.
With $m$ vertices of the polygon and $n$ input points, 
each of the $n$ discs could have $O(m)$ arcs/pieces
due to reflex vertices as well as the edges of the polygon
that contain points closer than $\rho$ to the centre.
Thus, the upper bound on $Y$ is $O(nm)$.
After computing the boundaries of the discs individually,
they then build the arrangement of the discs, count the depth of
a cell of the arrangement, then traverse the arrangement 
updating the depth as they enter new cells.
The runtime to solve the clustering problem 
also comes to depend on the number of intersections
in the arrangement, which is
$O(Y + n^2)$ \cite{DBLP:journals/ijcga/BorgeltKL11}.
The runtime becomes 
$O((m+ (Ym)^{(2 / 3)}+Y)\log^c m + Y\log Y + n^2)$.

Dynamic $k$-nearest neighbour queries were studied by
de Berg and Staals \cite{DEBERG2023101976}.
They presented a static data structure for geodesic 
$k$-nearest neighbour queries for $n$ sites in a simple $m$-gon
that is built in $O(n(\log n \log^2 m + \log^3 m))$ expected time 
using $O(n\log n \log m + m)$ expected space
and answers queries in 
$O(\log (n+m)\log m + k\log m)$ expected time.\footnote{
We note that depending on the relations of the values $m$, $r$, and $n$ to each other,
there may be situations in which our algorithms may be improved by polylogarithmic factors
by using the 
$k$-nearest neighbour query data structure.
Details of these trade-offs are discussed in \cref{subsec: knn disc}.}

If $P_{in}$ has no reflex vertices, it is a convex polygon
and the SKEG
disc problem is solved by the algorithm for planar instances which uses 
a grid-refinement strategy.
This works in the plane because 
$\mathbb{R}^2$ with
the Euclidean metric, $d_{e}(\cdot,\cdot)$, is a doubling metric space,
meaning that for any point $c$ in the space $\mathbb{R}^2$ and any radius $\rho>0$, 
the closed disc
$B(c, \rho) = \{ u \in \mathbb{R}^2 : d_{e}(u, c) \leq \rho\}$ can be covered
with the union of $O(1)$ closed discs of radius $\rho / 2$
\cite{metricBook} (i.e., unlike the geodesic metric, 
the Euclidean metric has \emph{bounded doubling dimension}).
The advantage of bounded doubling dimension to the gridding strategy is that 
a cell of the grid has a constant number of neighbours that need to be
considered when processing the cell, allowing us 
to locate $O(k)$ candidate neighbour points when processing a point \cite{har2011geometric,har2005fast}.
It is straightforward to prove 
(using an example such as the one in \cref{fig: bounded dd} from \cref{app: doubling geodesic})
that the geodesic metric does not have bounded doubling dimension
(its doubling dimension is $\Theta(r)$ in a simple polygon with $r$ reflex vertices).
Thus, it is not clear whether 
a gridding strategy can be used with the geodesic
metric to get efficient algorithms.
The straightforward
adaptation 
to simple polygons
of the expected linear-time
approximation algorithm that relies on grids \cite{har2005fast}
no longer seems to run in expected linear time;
in a simple polygon with $r$ reflex vertices,
a given cell of a traditional grid approach could have $\Omega(r)$
neighbours to be considered when processing a cell,
and it is not straightforward to process a cell that contains
a reflex vertex.

Another difficulty of the geodesic metric is that 
for two points $u$ and $v$ of $S$ on opposite sides of 
a given chord, their geodesic bisector (formed by concatenating their 
bisector and hyperbolic arcs) can cross the chord 
more than once.
See \cref{fig: bisector r crossings,fig: bisector r crossings zoomed a,fig: bisector r crossings zoomed b}
in \cref{app: bisector crosses chord}.

\subsection{An Exact Algorithm}
\label{subsec: exact}
In this section we present an exact algorithm for the SKEG disc
problem that uses higher-order geodesic Voronoi diagrams to find the
exact solution. 
Omitted details appear in \cref{app: vd}.

Rather than working with our $m$-gon,
we use the polygon simplification algorithm of 
Aichholzer \etal\ \cite{aichholzer2014geodesic} to transform $P_{in}$ into
a simple polygon $P \supseteq P_{in}$ of size $O(r)$ in $O(m)$ time and space
such that $P$ preserves the visibility of points in $P_{in}$, 
as well as their shortest paths.
The reflex vertices
in $P_{in}$ also appear in $P$.
We again assume that $S$ is in general position with respect to $P$.
The polygon simplification allows the running time of the 
algorithm to depend on the number of \emph{reflex} vertices of $P_{in}$ rather 
than the total number of vertices of $P_{in}$.
Thus, the algorithm runs faster on polygons with fewer 
reflex vertices.

For $k \leq n-1$ we use the shortest-path data structure of
Guibas and Hershberger \cite{GUIBAS1989126,HERSHBERGER1991231}.
This shortest-path data structure can be built in
$O(r)$ time and space
(since we have linear-time polygon triangulation 
algorithms \cite{DBLP:conf/compgeom/AmatoGR00,Chazelle1991})
and, given any two query points in the polygon,
returns a tree of 
$O(\log r)$ 
height in $O(\log r)$ time
representing the shortest path
between the two query points as well as the length of this path.
The in-order traversal of this tree gives the shortest path.
The data structure uses additional $O(\log r)$ space to build the 
result of the query (i.e., the tree) by linking together precomputed
structures.
Given the result of a query, we can perform a search through
this tree to find the midpoint of the shortest path in 
$O(\log r)$  time
\cite[Lemma $3$]{DBLP:conf/compgeom/000121},
or traverse the resulting tree to a leaf to get 
the first or last edge of the path in $O(\log r)$ time.

Higher-order Voronoi diagrams have been considered to
solve the smallest $k$-enclosing disc problem in the plane
\cite{DBLP:journals/jal/AggarwalIKS91,DBLP:journals/comgeo/EfratSZ94}.
This approach can be generalized to a point set $S$ contained in
simple polygons,
but it requires computing the order-$k$ geodesic Voronoi diagram
(OKGVD),\footnote{Stated briefly, the order-$k$ Voronoi diagram
is a generalization of the Voronoi
diagram such that each face is the locus of points whose
$k$ nearest neighbours are the $k$ points of $S$ 
associated with (i.e., that define) the face.}
or the order-($k-1$) and order-$(k-2)$ diagrams.

The order-$1$ and the order-$(n-1)$ geodesic Voronoi diagrams
(a.k.a., the \emph{nearest-point} and \emph{farthest-point} geodesic Voronoi diagrams, respectively)
have complexity $\Theta(n + r)$ \cite{Aronov89,DBLP:journals/dcg/AronovFW93}.
There is an algorithm for the geodesic nearest-point Voronoi diagram
that runs in $O( n\log n + r)$ time and uses 
$O(n\log n + r)$
space 
\cite{DBLP:conf/soda/Oh19}, and one for the
farthest-point diagram that runs in  
$O( n\log n + r)$ time and uses $\Theta(n + r)$ space 
\cite{DBLP:conf/compgeom/Barba19,DBLP:conf/compgeom/000121}.

Oh and Ahn \cite{Oh2019} showed that the complexity 
of the 
OKGVD
of $n$ points in a simple $r$-gon is
$\Theta(k(n-k) + \min(kr, r(n-k)))$.
The fastest algorithms to compute the OKGVD
are the one by Oh and Ahn~\cite{Oh2019} that runs in time 
$O(k^2n\log n \log^2 r + \min(kr, r(n-k)))$,
and the one by Liu and Lee~\cite{DBLP:conf/soda/LiuL13} that runs in
$O(k^2n\log n + k^2 r \log r)$  
time.
They both use $\Omega(k(n-k) + \min(kr, r(n-k)))$ 
space (which is 
$\Omega(n+r)$ 
for constant $k$ or $k$ close to $n$ 
and 
$\Omega(kn + kr) = \Omega(n^2+nr)$ 
for $k$ a constant fraction of $n$). 

When using the order-$k$ diagram to find a SKEG disc, the approach is to traverse the diagram
and, for each face, compute the geodesic centre of the $k$ points that define
the face (either by using the approach of Oh and Ahn~\cite[Lemma $3.9$]{Oh2019},
or by computing $CH_g$ for the points and then computing the geodesic centre
of $CH_g$).
While computing the geodesic centre for the $k$ points that define a face,
we also find one of these $k$ points farthest from this centre.
As such, we can then get the radius of a SKEG disc for these points in $O(\log r)$
time using shortest-path queries.
A solution to the SKEG disc problem is then the smallest of these discs.

\begin{restatable}{remark-type-2}{genPosRemark}
\label{rem: gen pos}
Assuming general positioning,
a 
SKEG
disc has either two or three points of $S$
on its boundary.
\end{restatable}

The second approach for computing a SKEG disc with geodesic Voronoi diagrams
computes the order-($k-1$) and order-$(k-2)$ diagrams.
If there are three points of $S$ on the boundary of a SKEG disc, then
the centre of a SKEG disc is a Voronoi vertex of the order-$(k-2)$ diagram
and the radius is the geodesic distance from that Voronoi vertex to the 
points of $S$ that define it.
Given the order-$(k-2)$ diagram, we traverse it and compute a SKEG disc
for the $k$ points defining each Voronoi vertex in $O(\log r)$ time each 
(i.e., the three faces adjacent to a given Voronoi vertex will be defined by $k$ 
distinct points of $S$, and the SKEG disc for those points will be the geodesic disc
centred at the Voronoi vertex and whose radius is the distance to any one of the three points
of $S$ that defined the vertex).
If there are two points of $S$ on the boundary of a SKEG disc, 
then the centre of a SKEG disc lies along a Voronoi edge of the order-($k-1$) diagram,
more specifically at the midpoint of the shortest path between the two points of $S$ that defined
the geodesic bisector defining the Voronoi edge.
Given the order-$(k-1)$ diagram, we traverse it and compute a SKEG disc for 
the $k$ points defining each Voronoi edge in $O(\log r)$ time each 
(i.e., the two faces adjacent to a given Voronoi edge will be defined by $k$
distinct points of $S$, and the SKEG disc for those points will be the geodesic
disc centred at the midpoint of the aforementioned shortest path 
and whose radius is half the length of said shortest path).
The smallest computed disc is a SKEG disc for the points of $S$.

When $k = n-1$, it is more efficient to use the order-$k$
diagram to solve the SKEG disc problem.
Otherwise, for $1 < k < n-1$,
it is more efficient to use the order-($k-1$) and 
order-$(k-2)$ diagrams to solve the SKEG disc problem.

\begin{restatable}{theorem}{okvdExactApproach}
\label{theorem: exact answer order k VDs}
Given an integer $1 < k \leq n$,
a SKEG disc of the points of $S$
can be computed in: \\
\begin{tabular}{|c|l|l|}
\hline
\bm{$k$} & \textbf{Time} & \textbf{Space}            \\ 
\hline 
\hline
$n$                                         & $O(n\log n + m)$          &    $\Theta(n+m)$            \\ 
\hline
\multirow{5}{*}{$n-1$}   & if $k\log k \log^2 r \in O(r)$       &   \\
                        & \quad \textbullet\ $O(nr\log n \log^2 r +  m)$     &    $\Theta(n+m)$  \\
                        & else  &  \\
                        & \quad \textbullet\ $O(n^2\log n + n^2\log r + nr\log n + nr\log r + r^2 +  m)$  &   \\ 
\hline
$2$                                         & $O(n\log n + r\log r + m)$          &    $O(n\log n +m)$            \\ 
\hline
\multirow{5}{*}{else}   &  if $n\log n \log r \in O(r)$                     &   $\Omega(k(n-k)$     \\ 
                        &  \quad \textbullet\ $O(k^2 n\log n \log^2 r +  \min(rk,r(n-k)) \log r + m)$    & \quad $+ \min(rk, r(n-k))$ \\
                        &   else                                        & \quad $+ m)$  \\
                        &  \quad \textbullet\ $O(k^2 n\log n + k^2 r\log r + k(n-k)\log r +$  & \\
                        & \qquad \quad $\min(kr, r(n-k)) \log r + m)$   & \\
\hline
\end{tabular}

Ignoring polylogarithmic factors,
the worst-case runtime 
for $k=\{2, n\}$ is $O(n + m)$;
for $k=n-1$: if $k\log k \log^2 r \in O(r)$ is
$O(nr + m)$, 
else, is
$O(n^2 + nr + r^2 + m)$;
for $k < n-1$: if $n\log n \log r \in O(r)$
is $O(k^{2} n + \min(rk,r(n-k)) + m)$,
and $O(k^{2} n + k^{2} r + \min(kr, r(n-k)) + m)$ otherwise.
\end{restatable}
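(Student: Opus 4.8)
The plan is to assemble the running times and space bounds from the building blocks already described in the text, treating each value of $k$ as a separate case. First I would handle the preprocessing common to all cases: applying the simplification algorithm of Aichholzer \etal\ \cite{aichholzer2014geodesic} costs $O(m)$ time and space and yields a polygon $P$ of size $O(r)$, and we re-establish general position with respect to $P$; this explains the additive $+m$ term and the $+m$ in the space bounds everywhere. After this step every subroutine operates on an $O(r)$-gon with $n$ points, so the remaining terms are expressed in $n$ and $r$ only. For the shortest-path queries used when extracting a disc's radius (or the midpoint of a geodesic path), we build the Guibas--Hershberger structure \cite{GUIBAS1989126,HERSHBERGER1991231} in $O(r)$ time and space, supporting each midpoint/first-edge query and each distance query in $O(\log r)$ time, as recalled earlier.

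Next I would dispatch on $k$. For $k=n$ (the geodesic centre of all $n$ points): compute $CH_g$ of $S$ in $O(n\log n + m)$ time and $O(n+m)$ space, then the geodesic centre of that weakly simple polygon in linear time \cite{DBLP:journals/dcg/AhnBBCKO16,DBLP:journals/dcg/AronovFW93}; this gives $O(n\log n+m)$ time and $\Theta(n+m)$ space. For $k=n-1$ and for $2<k<n-1$ we use the order-$k$ (resp.\ order-$(k-1)$ and order-$(k-2)$) geodesic Voronoi diagrams. Here I would invoke the two known constructions — Oh and Ahn \cite{Oh2019} in $O(k^2 n\log n\log^2 r + \min(kr,r(n-k)))$ time, and Liu and Lee \cite{DBLP:conf/soda/LiuL13} in $O(k^2 n\log n + k^2 r\log r)$ time — and take whichever is faster, which is governed by comparing $n\log n\log r$ against $r$ (for $k<n-1$) or $k\log k\log^2 r$ against $r$ (for $k=n-1$, where one substitutes $k=n-1$ into the generic bounds and simplifies $k(n-k)=\Theta(n)$, $\min(kr,r(n-k))=\Theta(r)$). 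Both diagrams have complexity $\Theta(k(n-k)+\min(kr,r(n-k)))$ by \cite{Oh2019}, which is the stated space bound; traversing the diagram and, for each of the $O(k(n-k)+\min(kr,r(n-k)))$ faces/vertices/edges, extracting the candidate KEG disc in $O(\log r)$ time via shortest-path queries contributes the extra $\log r$ factor on the combinatorial-size term. Correctness of returning the smallest candidate follows from \cref{rem: gen pos}: a SKEG disc is determined by two or three points of $S$ on its boundary, hence its centre is a Voronoi vertex of the order-$(k-2)$ diagram or the midpoint of a geodesic bisector edge of the order-$(k-1)$ diagram (or, when $k=n$, it is the geodesic centre), so one of the enumerated candidates equals $D^*$. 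For $k=2$: the order-$1$ (nearest-point) geodesic Voronoi diagram suffices; build it in $O(n\log n + r)$ time and $O(n\log n + r)$ space \cite{DBLP:conf/soda/Oh19}, traverse it spending $O(\log r)$ per edge, and add the $+m$ terms, giving $O(n\log n + r\log r + m)$ time and $O(n\log n + m)$ space.

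Finally I would verify the ``ignoring polylogarithmic factors'' summary line by suppressing every $\log$ and $\log r$ factor in each row and reading off $O(n+m)$ for $k\in\{2,n\}$, $O(nr+m)$ or $O(n^2+nr+r^2+m)$ for $k=n-1$ according to whether $k\log k\log^2 r\in O(r)$, and $O(k^2 n + \min(rk,r(n-k)) + m)$ or $O(k^2 n + k^2 r + \min(kr,r(n-k)) + m)$ for $2<k<n-1$ according to whether $n\log n\log r\in O(r)$. I expect the main obstacle to be bookkeeping rather than mathematical depth: carefully matching the combinatorial size of each higher-order diagram to the per-element query cost so that the $\min(kr,r(n-k))\log r$ term is correctly separated from the diagram-construction term, and confirming that the ``else'' branches (when $r$ is not the dominant parameter) genuinely favor the Liu--Lee bound over the Oh--Ahn bound; a secondary subtlety is checking that the $k=n-1$ specialization of the generic two-diagram approach does not require the order-$(k-2)=$ order-$(n-3)$ diagram to be rebuilt from scratch more expensively than claimed, which is why that case is split off and phrased via the order-$k$ diagram instead.
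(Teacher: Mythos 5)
Your handling of the cases $k=n$, $k=2$, and $1<k<n-1$ matches the paper's proof: simplification plus the Guibas--Hershberger structure for the additive $O(m)$ and the $O(\log r)$ queries, the geodesic centre of $CH_g$ for $k=n$, the order-$1$ diagram for $k=2$, and for $1<k<n-1$ the order-$(k-2)$/order-$(k-1)$ diagrams built by whichever of Oh--Ahn \cite{Oh2019} or Liu--Lee \cite{DBLP:conf/soda/LiuL13} is faster (the threshold $n\log n\log r\in O(r)$), followed by an $O(\log r)$-per-element traversal justified by \cref{rem: gen pos} and \cref{lemma: ok minus one gvd cells enough}.

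The gap is in the $k=n-1$ case, where your plan would not produce the stated bounds. You propose to build the order-$(n-1)$ diagram via the generic Oh--Ahn/Liu--Lee constructions and then extract each candidate in $O(\log r)$ time, with the condition $k\log k\log^2 r\in O(r)$ governing the choice of construction algorithm. Substituting $k=n-1$ into those construction times gives $O(n^3\log n\log^2 r+r)$ and $O(n^3\log n+n^2r\log r)$, both of which exceed the theorem's claimed $O(nr\log n\log^2 r+m)$ and $O(n^2\log n+n^2\log r+nr\log n+nr\log r+r^2+m)$ (e.g.\ $n^3\log n\log^2 r$ is not $O(nr\log n\log^2 r)$ even under $n\log n\log^2 r\in O(r)$). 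What the paper actually does is build the \emph{farthest-point} geodesic Voronoi diagram directly in $O(n\log n+r)$ time and $\Theta(n+r)$ space \cite{DBLP:conf/compgeom/Barba19,DBLP:conf/compgeom/000121}, so the diagram construction is cheap; the dominant cost is that, by \cref{lemma: okgvd cells enough}, each of the $O(n+r)$ faces requires computing the \emph{geodesic centre of the $k=n-1$ points defining it} --- not an $O(\log r)$ query. The condition $k\log k\log^2 r\in O(r)$ selects between the two subroutines for that per-face centre computation: Oh--Ahn's $O(k\log k\log^2 r)$-time method \cite[Lemma~$3.9$]{Oh2019} versus computing $CH_g$ of the $k$ points in $O(k\log r+k\log k)$ time and then the linear-time centre of the resulting $O(r+k)$-size weakly simple polygon. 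Multiplying the cheaper per-face cost by the $O(n+r)$ faces (and using that the condition forces $n\in O(r)$ in the first branch) is what yields the two stated runtimes; your version conflates this per-face choice with the diagram-construction choice and omits the per-face geodesic-centre work entirely.
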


\begin{proof}
Omitted details appear in \cref{app: vd}.

When $k=n$, the SKEG disc problem is the same as the problem of
computing the geodesic centre of a set of points inside a simple polygon.

We can use the order-$1$ geodesic Voronoi diagram
when $k = 2$.
Considering the preprocessing time, the construction time,
and the time to traverse the diagram performing the 
appropriate $O(\log r)$-time queries, we can solve the SKEG disc problem
in $O(n\log n + r\log r + m)$ time (simplified from $O(n\log n + n\log r + r\log r + m)$).

The farthest-point geodesic Voronoi
diagram is the same as the order-$(n-1)$ diagram.
When $k=n-1$,
it is more efficient to compute a SKEG disc using the farthest-point 
geodesic Voronoi diagram and, 
for each face, 
compute the geodesic centre of the $n-1$ closest points.
Oh and Ahn~\cite[Lemma $3.9$]{Oh2019} presented a method to find
the geodesic centre of $k$ points in 
$O(k\log k \log^2 r)$ time and $O(k+r)$ space.
The other method for computing the appropriate geodesic centres is to 
first compute $CH_g$ for the $k$ points defining each face, and then compute
the geodesic centre of the weakly simple polygon formed by the $CH_g$.
The time for the Oh and Ahn approach (i.e., $O(k\log k \log^2 r)$) always dominates
the time to compute $CH_g$ (i.e., $O(k\log r + k \log k)$).
Thus, the Oh and Ahn approach is only more efficient for 
computing the geodesic centres of the requisite points when its runtime
is dominated by the time to compute the geodesic centre of $CH_g$.
As has already been mentioned, 
the time to compute the geodesic centre of a weakly simple polygon is linear in its
size.
The size of $CH_g$ of the $k$ points of $S$ defining
a face of the OKGVD is $O(r + k)$.
Consequently, the Oh and Ahn~\cite[Lemma $3.9$]{Oh2019} approach is more efficient when 
$k\log k \log^2 r \in O(r)$.
Performing this computation for each face of the OKGVD,
we spend  $O((r+n)\cdot k\log k \log^2 r)$ time over all faces
and $O(r+n)$ space.
Substituting $k=n-1$, noting that $k\log k \log^2 r \in O(r)$
means $n \in O(r)$, 
and considering the preprocessing,
the time we use is	$O(nr\log n \log^2 r + m)$.

Otherwise, for $k=n-1$ it is more efficient to
compute $CH_g$ of the $k$ points of $S$ associated
with each face of the diagram to obtain weakly simple polygons, 
and then compute the geodesic centre of each of
those geodesic convex hulls.
We spend 
$O(r+n)\cdot O(k\log r + k \log k)$
time to compute $CH_g$ of each subset of $k$ points
that defines a face;
then we compute the geodesic centres in
$O(r+n)\cdot O(r+k)$ time
and $O(r+n)$ space.
Substituting
$k = n-1$ and considering the preprocessing cost,
the time used becomes $O(n^2\log n	+  n^2\log r + nr\log n + nr\log r + r^2 +  m)$.

For $1 < k < n-1$, it suffices
to compute the order-$(k-2)$ and order-$(k-1)$ geodesic Voronoi diagrams
and consider: the vertices of the order-$(k-2)$ diagram;
and the midpoints of the shortest paths 
between the points defining edges shared by adjacent cells
of the order-$(k-1)$ diagram.

Oh and Ahn \cite{Oh2019} showed that the complexity 
of the 
OKGVD
of $n$ points in a simple $r$-gon is
$\Theta(k(n-k) + \min(kr, r(n-k)))$.
The fastest algorithms to compute the OKGVD
are the one by Oh and Ahn~\cite{Oh2019} that runs in time 
$O(k^2n\log n \log^2 r + \min(kr, r(n-k)))$,
and the one by Liu and Lee~\cite{DBLP:conf/soda/LiuL13} that runs in
$O(k^2n\log n + k^2 r \log r)$  
time.
They both use $\Omega(k(n-k) + \min(kr, r(n-k)))$ 
space (which is 
$\Omega(n+r)$ 
for constant $k$ or $k$ close to $n$ 
and 
$\Omega(kn + kr) = \Omega(n^2+nr)$ 
for $k$ a constant fraction of $n$). 
The algorithm of Oh and Ahn~\cite{Oh2019}
is more efficient when $n\log n \log r \in O(r)$,
which means the algorithm of Liu and Lee~\cite{DBLP:conf/soda/LiuL13} 
is more efficient the rest of the time.
Using shortest-path queries
we traverse the diagrams in 
$O(k(n-k) + \min(kr, r(n-k)))\cdot O(\log r)$ time
while
comparing the candidates from the edges
and vertices.
Thus, for $n\log n \log r \in O(r)$, 
the time used including preprocessing is $O(k^2n\log n \log^2 r + \min(kr, r(n-k))\log r + m)$.
Otherwise, the time including preprocessing is
$O(k^2 n \log n + k^2 r \log r + k(n-k)\log r +  \min(kr, r(n-k)) \log r + m)$.
\end{proof}

The main result of our paper is an improvement in the 
runtime for computing a SKEG disc (\cref{theorem: exact answer order k VDs}),
but it comes
at the expense of a $2$-approximation. 
This is summarized in 
\cref{theorem: new main result}.
The runtime of \cref{theorem: new main result} 
is derived by balancing the runtimes of two algorithms:
\ref{alg: RSAlgo} 
(a random sampling algorithm
described in \cref{sec: random sampling})
and 
\ref{alg: DI}
(a Divide-and-Conquer algorithm described 
in \cref{sec: Divide et Impera}).

Ignoring polylogarithmic factors,
the expected runtime of the approximation 
algorithm of 
\cref{theorem: new main result} is $O(n + m)$,
matching the exact approach for $k = \{2, n\}$.
The approximation algorithm of \cref{theorem: new main result}
is roughly expected to be
faster by a factor of $k$ or $r$ for $k=n-1$, 
and a factor of $k^2$ otherwise. 
For example, if $k \in \Omega(n)$ and $k < n-1$, 
compare $O(n + m)$ to:
$O(n^3 + r + m)$ for $k$ close to $n$, 
or $O(n^3 + nr + m)$ for $k$ a fraction of $n$;
and $O(n^3 + n^2r + m)$.

\cref{section: quickselect merge step} shows
how to use a randomized iterative search as
the merge step in \cref{sec: Divide et Impera}  
to compute a $2$-approximation.
We summarize our results in \cref{sec: concluding remarks}.
\cref{app: doubling geodesic} illustrates an example where the 
doubling dimension in the geodesic setting is proportional to the
number of reflex vertices of the simple polygon being considered;
\cref{app: bisector crosses chord} provides an example where the geodesic
bisector of two points in a simple polygon
crosses a chord of that polygon multiple times;
\cref{app: vd} provides omitted details detailing the exact algorithm
that uses geodesic Voronoi diagrams (i.e., the approach described in this section)
and the calculation of the runtimes presented in \cref{theorem: exact answer order k VDs};
\cref{app: di-algo} provides details omitted from the section describing the Divide-and-Conquer
algorithm presented in this paper (i.e., the algorithm from \cref{sec: Divide et Impera});
\cref{app: knn queries} provides details omitted from the discussion in \cref{subsec: knn disc} about 
using $k$-nearest neighbour queries in our algorithms;
and \cref{app: code} provides pseudocode for the algorithms described 
in the remainder of this paper.

\section{Random Sampling Algorithm}
\label{sec: random sampling}
In this section we present a random sampling algorithm to compute a 
$2$-SKEG disc.
Our first algorithm, \ref{alg: RSAlgo},
uses the following preprocessing (repeated for convenience).

\begin{description}
	\item[Polygon Simplification]  
 	Convert $P_{in}$ into a simplified 
 	polygon $P$ consisting of $O(r)$ vertices using the algorithm of 
 	Aichholzer \etal \cite{aichholzer2014geodesic}. 	
 	The algorithm runs in $O(m)$ time and space and computes a polygon $P$ 
 	such that $P \supseteq P_{in}$, $|P|$ is $O(r)$, and the reflex vertices
 	in $P_{in}$ also appear in $P$.
 	Furthermore, 
 	the shortest path
 	between two points in $P_{in}$ remains unchanged in $P$.
 	As with $P_{in}$, we assume the points of $S$ are in general position 
 	with the vertices of $P$, and no four points of $S$ are geodesically
 	co-circular in $P$.

	\item[Shortest-Path Data Structure] 
	We use the $O(r)$-time and $O(r)$-space algorithm of 
	Guibas and Hershberger \cite{GUIBAS1989126,HERSHBERGER1991231}
	on $P$ to build a data structure
	that gives the length of the shortest path between
	any two points in $P$ in $O(\log r)$ time and space.
	Their algorithm runs
	in linear time since we have linear-time polygon triangulation 
	algorithms \cite{DBLP:conf/compgeom/AmatoGR00,Chazelle1991}.
	Querying the data structure with two points in $P$ 
	(i.e., a \emph{source} and \emph{destination})
	returns a tree of $O(\log r)$ height
	whose in-order traversal is the shortest path in $P$ between the two query points.
	The length of this path is returned with the tree, as is the length of the
	path from the source to each node along the path (which is stored at the
	respective node in the tree).
	The data structure uses additional $O(\log r)$ space to build the 
    result of the query (i.e., the tree) by linking together precomputed
    structures.
	In addition to the length of the shortest path between two points, this
	data structure can provide the first or last edge along the path between
	the two points in $O(\log r)$ additional time by traversing the tree to a leaf.
\end{description}

This completes the preprocessing.
The total time and space spent preprocessing is $O(m)$.
The polygon simplification allows the running time of the 
algorithm to depend on the number of reflex vertices of $P_{in}$ rather 
than its size.
Thus, the algorithm runs faster on polygons with fewer 
reflex vertices.
The shortest-path data structure allows us to quickly determine the geodesic 
distance between two points inside $P$.
\textbf{\ref{alg: RSAlgo}} proceeds as follows:
we repeatedly find the smallest 
KEG
disc
centred on the currently considered point of a computed random sample.
\begin{enumerate}
	\item Compute a random sample of $(n / k)\ln (n)$ points of $S$.
	\item For each point $c$ in the random sample, 
	find its $(k-1)$\textsuperscript{st}-closest 
	point in $S$ using the geodesic distance.
	\item Return the 
    KEG disc of minimum computed radius.
\end{enumerate}

\begin{restatable}{lemma}{RSAlgoProbChoosesGoodPt}
\label{lemma: rsalgo probably chooses a point in opt disc}
\ref{alg: RSAlgo} chooses
one of the $k$ points of $S$ contained in 
a SKEG disc $D^*$ 
with probability at least $1- n^{-1}$. 
\end{restatable}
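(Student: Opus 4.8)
The plan is a direct hitting-set calculation via the union bound. Write $t$ for the size of the random sample chosen in Step~1, so $t = \lceil (n/k)\ln n\rceil$, and let $G \subseteq S$ be the set of points of $S$ lying in the SKEG disc $D^*$. By the general position assumptions already in force, $|G| = k$, so it suffices to bound the probability of the \emph{bad} event that the sample avoids $G$ entirely.

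First I would fix the sampling model: the cleanest case is sampling $t$ points with replacement, and if instead the sample is drawn without replacement, the hypergeometric tail is dominated by the binomial one, so the with-replacement estimate is an upper bound on the failure probability in either model (and if $t \ge n$ the sample is all of $S$ and the claim is trivial). Under with-replacement sampling, each of the $t$ draws independently lands in $G$ with probability $|G|/n = k/n$, so the probability that every draw misses $G$ is exactly $(1-k/n)^t$.

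Then I would apply the inequality $1-x \le e^{-x}$, valid for all real $x$, together with $t \ge (n/k)\ln n$:
\[
  \Pr[\text{sample misses } G] \;\le\; \Bigl(1 - \tfrac{k}{n}\Bigr)^{t} \;\le\; e^{-kt/n} \;\le\; e^{-\ln n} \;=\; \frac{1}{n}.
\]
Hence with probability at least $1 - n^{-1}$ the sample contains at least one point of $G$, i.e.\ at least one of the $k$ points of $S$ contained in $D^*$, which is precisely the statement.

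There is no substantive obstacle here: the only things needing a word of care are the rounding in the sample size (rounding up only decreases the failure probability) and the choice of sampling model (as noted, without replacement only helps). Everything else is the one-line computation displayed above, relying on the already-established fact that a SKEG disc contains exactly $k$ points of $S$.
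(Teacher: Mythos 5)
Your proposal is correct and follows essentially the same route as the paper: both bound the miss probability by $(1-k/n)^{(n/k)\ln n} \le e^{-\ln n} = n^{-1}$ using $1-x \le e^{-x}$. Your extra remarks on rounding and the sampling model are sensible refinements the paper leaves implicit, but the core argument is identical.
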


\begin{proof}
Picking a point $c\in S$ uniformly at random, the probability that
$c$ is one of the $k$ points of $S$ in $D^*$ is
$P(c \in D^*) = k / n$, since each of the $k$ points
in $D^*$ has probability $1 / n$ of being chosen.
Thus we have $P(c \notin D^*) = 1 - (k / n)$.
The probability that 
none of the $k$ points of $S$ contained in $D^*$
is chosen among
the $(n / k)\ln (n)$ randomly chosen points is:

\[
(1 - (k / n))^{(n /k)\ln (n)} \leq  
(\rm{e}^{- (k /n)})^{(n /k)\ln (n)}
=  1 / n
\]
Therefore, 
after selecting $(n / k)\ln (n)$ points of $S$ 
uniformly at random,
we will have chosen one of the $k$ points of $S$
in $D^*$ with probability at least $1 - n^{-1}$.
\end{proof}

\begin{restatable}{lemma}{BFDiscTwoApprox}
\label{lemma: bfdisc 2-approx}
\ref{alg: RSAlgo} 
computes a 
$2$-SKEG disc
if some point $c \in S$ in the random sample
is one of the $k$ points of $S$ contained in
a SKEG disc $D^*$.
\end{restatable}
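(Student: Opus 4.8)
The goal is to show that if the random sample contains some point $c \in S$ that lies inside a SKEG disc $D^*$, then \textbf{\ref{alg: RSAlgo}} returns a KEG disc of radius at most $2\rho^*$. The algorithm, for each sampled point $c$, finds the $(k-1)$st-closest point of $S$ to $c$ in the geodesic metric; together with $c$ itself this determines a geodesic disc centred at $c$ containing exactly $k$ points of $S$, of some radius $\rho_c$. The algorithm returns the smallest such disc over all sampled points. So it suffices to exhibit, for the particular good point $c \in S \cap D^*$, that $\rho_c \le 2\rho^*$; the returned disc can only be smaller.

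**Key steps.** First I would fix $c \in S \cap D^*$ where $D^* = D(c^*, \rho^*)$. Since $c \in D^*$, we have $d_g(c, c^*) \le \rho^*$. Next, by the triangle inequality for the geodesic metric (which holds because $d_g$ is a metric on the simple polygon $P$), for every point $p \in D^*$ we have $d_g(c, p) \le d_g(c, c^*) + d_g(c^*, p) \le \rho^* + \rho^* = 2\rho^*$. Hence the geodesic disc $D(c, 2\rho^*)$ contains all $k$ points of $S$ that lie in $D^*$, so it contains at least $k$ points of $S$. Therefore the smallest geodesic disc centred at $c$ containing $k$ points of $S$ has radius $\rho_c \le 2\rho^*$. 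This is exactly the disc the algorithm computes for $c$ (via its $(k-1)$st-nearest-neighbour search, since $c$ counts as the $0$th point). Finally, the disc returned by the algorithm is the minimum-radius disc over all sampled centres, so its radius is at most $\rho_c \le 2\rho^*$, and by construction it contains exactly $k$ points (using the general position assumption that no four points are geodesically co-circular), so it is a KEG disc, hence a $2$-SKEG disc.

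**Main obstacle.** There is no real obstacle — this is essentially the classical "recentre at a point of the optimal disc" argument, and the only things to be careful about are (i) confirming that $d_g$ satisfies the triangle inequality inside $P$ (immediate, since a shortest path from $c$ to $p$ is no longer than the concatenation of shortest paths $c \to c^*$ and $c^* \to p$, both of which lie in $P$), and (ii) matching the quantity the algorithm actually computes (the $(k-1)$st-nearest geodesic neighbour of $c$, which yields the radius of the smallest geodesic disc centred at $c$ enclosing $k$ points of $S$, counting $c$ itself) to the bound $2\rho^*$. I would state both of these explicitly but briefly. The exact-count claim follows from the general position assumption, and it should be noted that combining this lemma with \cref{lemma: rsalgo probably chooses a point in opt disc} gives the success probability $1 - n^{-1}$.
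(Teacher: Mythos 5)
Your proposal is correct and follows essentially the same argument as the paper: the paper's proof also recentres at a point $x \in D^*$ and observes $D^* \subset D(x, 2\rho^*)$ (the triangle inequality you make explicit), concluding that the smallest $k$-enclosing geodesic disc centred there has radius at most $2\rho^*$. Your additional remarks about matching the $(k-1)$\textsuperscript{st}-nearest-neighbour computation to this bound and taking the minimum over the sample are just a more detailed spelling-out of the same route.
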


\begin{proof}
Let $x$ be a point in $D^*$.
By the definition of a disc, we have that $D^* \subset D(x, 2\rho^*)$.
Therefore, the smallest geodesic disc centred at $x$ containing $k$ points 
has a radius that is at most twice $\rho^*$ since $D^*$ contains $k$ points.
\end{proof}

Having established \cref{lemma: bfdisc 2-approx,lemma: rsalgo probably chooses a point in opt disc}, we are ready to prove
\cref{theorem: rs algo}.

\begin{restatable}{theorem}{RSAlgoThm}
\label{theorem: rs algo}
\textbf{RS-Algo} 
computes
a $2$-SKEG disc with 
probability
at least $1 - n^{-1}$
in deterministic time $O((n^{2} / k)\log n \log r + m)$
using $O(n + m)$ space.
\end{restatable}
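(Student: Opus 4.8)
The plan is to read off correctness and the success probability directly from the two preceding lemmas, and then to account for the preprocessing, sampling, distance computations, and selection to establish the time and space bounds.

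For correctness: by \cref{lemma: rsalgo probably chooses a point in opt disc}, with probability at least $1-n^{-1}$ the random sample of $(n/k)\ln n$ points contains at least one of the $k$ points of $S$ lying in a SKEG disc $D^*$. Conditioned on that event, \cref{lemma: bfdisc 2-approx} shows that the smallest KEG disc centred at that sampled point has radius at most $2\rho^*$; since \ref{alg: RSAlgo} returns the KEG disc of minimum computed radius over all sampled centres, the returned disc is a $2$-SKEG disc. Hence the algorithm outputs a $2$-SKEG disc with probability at least $1-n^{-1}$.

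For the running time: the preprocessing — polygon simplification into the $O(r)$-vertex polygon $P$ and construction of the Guibas--Hershberger shortest-path data structure on $P$ — takes $O(m)$ time and space (the $O(r)$ terms are absorbed since $r\le m$). Drawing the sample of $(n/k)\ln n$ points costs $O((n/k)\log n)$ time, which is dominated. For each sampled centre $c$, we issue $n$ shortest-path-length queries, one to each point of $S$, each in $O(\log r)$ time, giving $O(n\log r)$ time to obtain all $n$ geodesic distances from $c$; we then extract the $(k-1)$st smallest of these $n$ values in $O(n)$ worst-case time with a deterministic linear-time selection procedure (median-of-medians), and recover the corresponding radius and KEG disc in $O(\log r)$ further time. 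Summing over the $(n/k)\ln n$ sampled centres yields $O\bigl((n/k)\log n\cdot(n\log r+n)\bigr)=O\bigl((n^2/k)\log n\log r\bigr)$, and adding the preprocessing gives the claimed $O\bigl((n^2/k)\log n\log r+m\bigr)$. This bound is deterministic: the sample size $(n/k)\ln n$ is a fixed quantity and each per-centre step runs in deterministic worst-case time, so randomness affects only whether the output is a $2$-approximation, not the running time. For the space, beyond the $O(m)$ for $P$ and the $O(r)$ for the shortest-path structure, each $O(\log r)$-space query tree is discarded after its length is read, and while processing a single centre we store only the $n$ computed distances; this $O(n)$ working space is reused across centres, and the sample itself uses $O((n/k)\log n)\subseteq O(n)$ space, for a total of $O(n+m)$.

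I do not expect a real obstacle here, since the bulk of the argument is a resource count built on the two lemmas; the only point needing care is keeping the time bound \emph{deterministic}, which is why the $(k-1)$st-order statistic is obtained via a deterministic selection algorithm rather than randomized quickselect, and why we must note explicitly that the number of sampled centres is fixed in advance rather than data-dependent.
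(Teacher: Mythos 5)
Your proposal is correct and follows essentially the same route as the paper's proof: correctness and the success probability are read off from \cref{lemma: rsalgo probably chooses a point in opt disc} and \cref{lemma: bfdisc 2-approx}, and the time and space bounds come from $O(m)$ preprocessing plus, for each of the $(n/k)\ln n$ sampled centres, $n$ shortest-path queries at $O(\log r)$ each followed by a linear-time selection of the $(k-1)$\textsuperscript{st}-smallest distance. Your explicit remark that the running time is deterministic because the sample size is fixed and each per-centre step uses worst-case linear-time selection is a small clarification beyond what the paper states, but it does not change the argument.
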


\begin{proof}
We simplify our polygon $P_{in}$ in $O(m)$ time and space
\cite{aichholzer2014geodesic}.
We create our data structure for $O(\log r)$-time 
shortest-path queries in $O(r)$ time and space 
\cite{GUIBAS1989126,HERSHBERGER1991231}.

The runtime and space of the algorithm
is dominated
by the time and space used by the for-loop
to compute the $(k-1)$\textsuperscript{st}-closest 
neighbour of each of the $(n / k) \ln (n)$ points in the random sample.
Since our preprocessing allows us to compute the distance 
between two arbitrary points in $O(\log r)$ time and space, 
the runtime for 
computing the $(k-1)$\textsuperscript{st}-closest point of $S$ for
one point $c \in S$ in the random sample is $O(n \log r)$
and it uses $O(n + \log r)$ space.
Indeed,
in $O(n)$ time and space we create an array to store distances;
next, we iterate over each of the $O(n)$ points 
in $S \setminus \{c\}$, calculate their distance to $c$ in 
$O(\log r)$ time, 
and store the result in the next free space
in our array;
lastly, we need to extract the $(k-1)$\textsuperscript{st}-smallest value
from our array (this can be done using an $O(n)$-time rank-finding
algorithm~\cite{DBLP:conf/wea/Alexandrescu17,DBLP:journals/jcss/BlumFPRT73,DBLP:books/daglib/0023376}).
Therefore the runtime of the for-loop is 
$O(((n / k) \ln (n))\cdot n \log r) = O((n^{2} / k)\log n \log r)$
and the space is $O(n + \log r)$.

By \cref{lemma: rsalgo probably chooses a point in opt disc},
with probability at least
$1- n^{-1}$ we choose at least one 
of the $k$ points of $S$ from $D^*$ to be in our random sample
which, by \cref{lemma: bfdisc 2-approx}, means that with  
probability at least
$1- n^{-1}$
\ref{alg: RSAlgo} 
will output a 
$2$-approximation.
\end{proof}

\begin{remark-type-2}
There is a simple deterministic algorithm that runs in 
$O(n^2 \log r + m)$  time by computing the 
$(k-1)$\textsuperscript{st}-closest neighbour for each point.
However, we design a randomized algorithm, 
\ref{alg: RSAlgo},
that is faster when $k$ is $\omega(\log n)$.
Having a $k$ term in the denominator of the runtime
allows us to improve the runtime of our algorithm as $k$ approaches $n$.
When combining \ref{alg: RSAlgo} and \ref{alg: DI} 
(described in \cref{sec: Divide et Impera})
and ignoring polylogarithmic factors,
the runtime is at least a factor of $\Theta(n)$
faster than this simple approach. 
\end{remark-type-2}

It is worth pointing out that for $k \in \Omega (n)$, 
\textbf{\ref{alg: RSAlgo}} computes 
a $2$-SKEG disc with 
high probability
in $O(n\log n \log r + m)$ deterministic time 
using $O(n + m)$ space.

\section{Divide-and-Conquer Algorithm}
\label{sec: Divide et Impera}
In this section we describe 
\ref{alg: DI},
a Divide-and-Conquer
algorithm to compute a 
$2$-SKEG disc.
Omitted details appear in \cref{app: di-algo}.

Let $D^*$ be a SKEG disc for the 
points of $S$ in a polygon 
(the polygon being referred to will be clear from the context).
In each recursive call, we split the current polygon 
by a diagonal into two subpolygons of roughly equal size
and recursively compute a 
$2$-SKEG disc
for each of 
the two subpolygons. 
The merge step involves computing a $2$-approximation
to the optimal disc that contains $k$ points
under the assumption that
the optimal disc 
intersects the diagonal used to generate the recursive calls.
We delay discussion of
the merge step 
until \cref{section: quickselect merge step}.
\ref{alg: DI}
requires the following preprocessing
that 
constitutes \textbf{\ref{alg: DI preproc}}.
\ref{alg: DI preproc}
takes $O(n\log r + m)$ time
and $O( n + m)$ space.

\begin{description}
	\item[Simplification and Shortest-Path Data Structure] 
	Refer to \cref{sec: random sampling}.
	
	\item[Balanced Hierarchical Polygon Decomposition]
	We compute a balanced hierarchical polygon decomposition tree $T_B$
	\cite{DBLP:journals/dcg/ChazelleG89,GUIBAS1989126,GuibasHLST87}.
	(Note that $T_B$
    is built by the shortest-path
	data structure.)
  We continually insert diagonals into our polygon
	until all faces subdividing it are triangles;
	each polygonal face $f$ is split into two subpolygons 
	of 
	between $|f| / 3$ and $2|f| / 3$ 
	vertices
	by a diagonal of $f$.
	Decomposing our simplified polygon, this takes
	$O(r)$ time and $O(r)$ space
	\cite{chazelle1982theorem,GuibasHLST87}.
	We end up with a decomposition tree, based on a triangulation of our 
	polygon, of $O(\log r)$ height 
	whose leaves store the triangles.
	The internal nodes store the diagonals of the triangulation.
	A diagonal's position in the tree corresponds to when it was inserted.
    We associate with each internal node 
	the subpolygon split by the diagonal contained in the node
	(though we do not store this subpolygon in the node).
    The root is the initial polygon.
	
	\item[Point-Location Data Structure]
	Recall that the leaves of $T_B$ represent triangles
	in a triangulation of $P$.
	We build Kirkpatrick's \cite{Chazelle1991,Kirkpatrick83} 
	$O(\log r)$ query-time
	point-location data structure on these triangles 
	in $O(r)$ time with $O(r)$ 
	space.
	
	\item[Augment $T_B$ for Point Location]
	We augment $T_B$ to store which points of $S$
	are in which subpolygon represented by the internal nodes of the tree.
\end{description}

    \begin{restatable}{lemma}{pointsInSubPolLemma}
        \label{lemma: we know which points are in which subpolygon}
        In $O(n\log r + r)$ time and $O(n+r)$ space,
        we augment the decomposition tree $T_B$ to know which points of $S$
        are in which subpolygon.
    \end{restatable}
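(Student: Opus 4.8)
The plan is to represent, for each internal node of $T_B$, the set of points of $S$ lying in its associated subpolygon \emph{implicitly}, as a contiguous range in a single globally reordered copy of $S$; this implicit representation is what keeps the space at $O(n+r)$ rather than $O(n\log r)$.

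First I would use the point-location structure built in the preprocessing (Kirkpatrick's structure on the $O(r)$ triangles that form the leaves of $T_B$) to locate, for each point $p\in S$, the leaf triangle of $T_B$ that contains $p$. This costs $O(\log r)$ per point, hence $O(n\log r)$ in total. A point of $S$ that happens to lie on a diagonal of the triangulation can be assigned to either incident triangle by a fixed consistent rule; under our general-position assumptions this does not affect the bounds.

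Next, fix the left-to-right order of the leaves of $T_B$ induced by a depth-first (in-order) traversal of the tree, numbering the leaves $1,\dots,t$ with $t\in O(r)$. Each point now carries an integer ``bucket'' in $\{1,\dots,t\}$, so a bucket (counting) sort places the points of $S$ into a single array $A$ sorted by leaf index in $O(n+r)$ time, recording simultaneously, for each leaf, the contiguous sub-range of $A$ of points inside that triangle. The key structural fact is that the leaves descending from any node of $T_B$ form a contiguous block in this leaf order; equivalently, the subpolygon stored at a node is the union (up to shared diagonals) of the triangles at the leaves of its subtree. Hence a single bottom-up pass over the $O(r)$ nodes of $T_B$ suffices: the range of an internal node is the concatenation of the two already-adjacent ranges of its children, computed in $O(1)$ per node, and I would store this pair of indices at each node. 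Summing up, point location is $O(n\log r)$, the bucket sort is $O(n+r)$, and the bottom-up pass is $O(r)$, giving $O(n\log r+r)$ time; the array $A$ uses $O(n)$ space and the index pairs use $O(r)$, giving $O(n+r)$ space.

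The main obstacle is precisely the space bound: the naive approach of appending $p$ to the point list of each of its $O(\log r)$ ancestors would cost $\Theta(n\log r)$ space, so the argument hinges on the observation that the depth-first leaf ordering makes every node's point set a contiguous range of $A$, so each point is stored exactly once. Everything else is routine (point location, counting sort, a tree traversal).
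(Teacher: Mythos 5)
Your proposal is correct and follows essentially the same approach as the paper: both locate each point's leaf triangle via Kirkpatrick's structure in $O(n\log r)$ time, place the points into a single array ordered by the left-to-right leaf order of $T_B$, and exploit the fact that each node's leaves form a contiguous block so that every internal node needs only $O(1)$ interval pointers into that array. The paper fills the array during a post-order traversal rather than by an explicit bucket sort, but this is the same idea with the same bounds.
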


\begin{proof}
Recall that the leaves of 
$T_B$ represent triangles.
Using Kirkpatrick's \cite{Chazelle1991,Kirkpatrick83} $O(\log r)$ query-time
point-location data structure built on these triangles 
in $O(r)$ time with $O(r)$ 
space,
in $O(n\log r)$ time we then make a list of which points of $S$ are 
in each triangle at the base of the decomposition tree.
This can be done by storing a list in each 
triangle whose elements are the points in the triangle. 
When the point-location query returns a triangle for a given point of $S$, 
that point of $S$ is appended to the end of the triangle's list.
Since there are $n$ points of $S$, these lists take $O(n)$ space.
We then do a post-order traversal of $T_B$
in $O(n+r)$ time to determine how many
points of $S$ and which of them 
are in each subpolygon 
on either side of a given chord.\footnote{The two subpolygons being considered are those whose union results in the subpolygon split by the insertion of the chord (i.e., the subpolygon we associate with the diagonal's node in $T_B$).}
This can be done by creating an array $A$ of capacity $n$ and inserting
the points of $S$ into $A$ in the order that their triangle is
seen in the post-order traversal (i.e., the left-to-right order of
the leaves of 
$T_B$).
The time spent copying the points from the lists in the triangles is $O(n)$.
The leaves of $T_B$ mark where their interval in $A$ 
starts\footnote{They will know where their interval 
starts by the current size of $A$ when they copy their lists into $A$.}
and ends, 
and
the internal nodes have three pointers dictating where their left child's
interval starts and ends, and where their right child's interval starts
and ends.\footnote{We only need three pointers because the intervals are
consecutive.}
The interval of an internal node is from the beginning of its left 
child's interval to the end of its right child's.
This takes $O(n)$ space for $A$ and $O(1)$ space for each of the 
$O(r)$ nodes of the tree; therefore it takes $O(n + r)$ space
(the point location queries take $O(1)$ space).
\end{proof}	

\subsection{Algorithm Description: \ref{alg: DI}}
\label{subsec: DI alg descr}

Let $\tau$ denote the node of $T_B$ associated with the current recursive call,
let $P_{\tau}$ denote the current polygon split by the diagonal $\ell$ stored
in $\tau$, and let $S_{\tau}$ be the set of points of $S$ in $P_{\tau}$.
For ease of discussion, we abuse notation and say $P_{\tau}$ is stored in $\tau$.
Recall that the root of $T_B$ stores the initial polygon $P$.
Let $|P_{\tau}|=r'$ and $|S_{\tau}|=n'$.
Let $P_1$ (resp., $P_2$) be the 
subpolygon associated with the left (resp., right) child of $\tau$ on which we 
recurred
that contains the points $S_1 \subseteq S$ (resp., $S_2 \subseteq S$). 
We also use the notation $P_1 \cup P_2$ to refer to $P_{\tau}$.

The merge step (described in \cref{subsec: qs alg descr})
involves computing $D(c, \rho)$, a 
KEG disc 
for $S_1 \cup S_2$ 
where the centre $c \in \ell$. 
If $D^* \cap S_1 \neq \emptyset$
and $D^* \cap S_2 \neq \emptyset$ (see \cref{fig: merge disc a}),
then $D^*\cap \ell \neq \emptyset$ (as in \cref{fig: merge disc b}) and
either the specially-chosen centre $c$ returned from the merge step 
is inside $D^* \cap \ell$ 
(implying a $2$-approximation),
or $D(c, \rho)$ has a radius smaller 
than a disc centred on such a point
(also implying a $2$-approximation).
If 
either $D^* \cap S_1 = \emptyset$
or $D^* \cap S_2 = \emptyset$, 
then the optimal disc is centred in one of the subpolygons and contains
only points of $S$ in that subpolygon, thus we have a $2$-approximation by 
recursion.
Therefore, the smallest disc computed among the three discs
(i.e., the two discs returned by the
recursion and the disc computed in the merge step)
is a $2$-approximation to $D^*$.
For convenience, in the sequel we refer to 
the disc $D(c, \rho)$ whose centre is on the chord $\ell$
as a \emph{merge disc}.

\begin{figure}[t!]
\centering
	\begin{subfigure}{0.45\textwidth}
        \includegraphics[width=\linewidth,page=1]{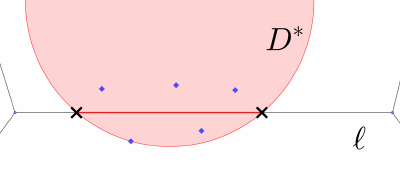}
		\caption{The red optimal disc $D^*$ contains points of $S_1$ 
		and $S_2$ (blue diamonds) from opposite sides of $\ell$.} \label{fig: merge disc a}
	\end{subfigure}
	\hspace*{\fill} 
	\begin{subfigure}{0.45\textwidth}
        \includegraphics[width=\linewidth,page=2]{figures/merge-larger-x.pdf}
		\caption{Centering a SKEG disc inside the red interval 
		$D^* \cap \ell$ (marked by two ``$\times$'' symbols) 
		produces a $2$-approximation.} \label{fig: merge disc b}
	\end{subfigure}
\caption{The case where the optimal disc $D^*$ contains points of $S$ from both sides of $\ell$.}
\label{fig: merge disc int ell}
\end{figure}

\textbf{\ref{alg: DI}} proceeds as follows.

\begin{enumerate}
	\item 
	(Base case) When the recursive step reaches a 
    triangle,\footnote{We could identify convex subpolygons before we get to the triangles, 
    but it would not improve the asymptotic runtime.}
	we use the planar 
	$2$-approximation algorithm 
	which runs in 
	expected time linear in
	the number of points of $S$ in the subpolygon under consideration
    \cite{har2011geometric,har2005fast}. 
	
	\item	
	We recur on the subpolygons
	$P_1$ and $P_2$ stored in the left and right child of $\tau$
	respectively.
	Note that a recursive call is not necessary if a subpolygon contains
	fewer than $k$ points of $S$ in it.
	Let the returned disc with the smaller radius be the current solution.
	
	\item (Merge Step)
	Consider the diagonal $\ell$ stored in $\tau$.
	Compute a merge disc centred on $\ell$ for the points of $S_1 \cup S_2$.
	
	\item
	Return the smallest of the three discs.

\end{enumerate}

\begin{restatable}{lemma}{mergeTwoApprox}
\label{lemma: merge is 2 approx}
\ref{alg: DI} produces a $2$-SKEG disc for $P$.
\end{restatable}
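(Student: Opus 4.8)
The plan is to prove \cref{lemma: merge is 2 approx} by induction on the height of the recursion tree $T_B$, showing that \ref{alg: DI} returns a KEG disc whose radius is at most $2\rho^*$, where $\rho^*$ is the radius of a SKEG disc $D^*$ for the points of $S$ in $P$. The base case is when the current node $\tau$ is a leaf, i.e., $P_\tau$ is a triangle; here the planar $2$-approximation algorithm of \cite{har2011geometric,har2005fast} is invoked. Since a triangle is a convex polygon, the geodesic metric restricted to it coincides with the Euclidean metric, so the planar algorithm correctly returns a disc containing $k$ points (when $|S_\tau| \ge k$) whose radius is at most twice the radius of the smallest $k$-enclosing disc restricted to that triangle, which is in particular at most twice $\rho^*$ if $D^*$ happens to be contained in (the point set of) that triangle. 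If $|S_\tau| < k$ no disc needs to be returned, so the claim is vacuous for that branch.

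For the inductive step, let $\tau$ be an internal node with children storing $P_1$ (containing $S_1$) and $P_2$ (containing $S_2$), and with diagonal $\ell$, so $P_\tau = P_1 \cup P_2$. The argument splits according to where the centre $c^*$ of $D^*$ sits relative to $\ell$ and which side(s) of $\ell$ the $k$ points of $D^* \cap S$ lie on. \textbf{Case 1:} $D^* \cap S_1 = \emptyset$ or $D^* \cap S_2 = \emptyset$. Then all $k$ points of $S$ in $D^*$ lie on one side of $\ell$, say in $S_1$; I claim $D^*$ (or a disc of no larger radius containing the same $k$ points) is realized as a geodesic disc within $P_1$, so the recursive call on $P_1$ returns by induction a KEG disc of radius at most $2\rho^*$. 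The subtlety here is that the geodesic metric inside $P_1$ may differ from the geodesic metric inside $P_\tau$: a shortest path between two points of $S_1$ that is allowed to leave $P_1$ and re-enter could be shorter. I would address this by noting that the smallest $k$-enclosing geodesic disc for $S_1$ \emph{computed within the subpolygon $P_1$} has radius at most that of any $k$-enclosing geodesic disc for $S_1$ whose centre lies in $P_1$ — and since $c^* \in P_1$ in this case (as all enclosed points are in $S_1$ and one can argue the optimal centre can be taken in $\mathrm{CH}_g(D^* \cap S) \subseteq P_1$, using \cref{rem: gen pos} and geodesic convexity), the subproblem's optimum is $\le \rho^*$, hence the recursion gives $\le 2\rho^*$. \textbf{Case 2:} $D^* \cap S_1 \ne \emptyset$ and $D^* \cap S_2 \ne \emptyset$. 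Then $c^*$ is geodesically within distance $\rho^*$ of a point on each side of $\ell$, and since shortest paths crossing between the sides must cross $\ell$, the disc $D^*$ intersects $\ell$; more to the point, there is a point $x \in D^* \cap \ell$. By the same reasoning as in \cref{lemma: bfdisc 2-approx}, $D^* \subseteq D(x, 2\rho^*)$, so the smallest KEG disc centred at $x$ has radius $\le 2\rho^*$. The merge step (to be detailed in \cref{subsec: qs alg descr}) is designed precisely so that the merge disc $D(c,\rho)$ it returns satisfies $\rho \le 2\rho^*$ whenever such a point $x$ exists on $\ell$ — this is the guarantee I would cite from the merge step's correctness.

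The conclusion in both cases is that the minimum-radius disc among the two returned by recursion and the merge disc has radius at most $2\rho^*$ and contains exactly $k$ points (using the general-position assumption that a minimal such disc contains exactly $k$ points, and noting every candidate disc is a KEG disc by construction). Hence \ref{alg: DI} returns a $2$-SKEG disc for $P$, and since the polygon simplification step preserves shortest paths and visibility between points of $P_{in}$, a $2$-SKEG disc for $S$ in $P$ is a $2$-SKEG disc for $S$ in $P_{in}$.

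\textbf{The main obstacle} I anticipate is the rigorous handling of the metric-change issue in Case 1: one must be careful that "recursing on $P_1$" solves the right subproblem. The cleanest fix is to observe that whenever the $k$ enclosed points all lie in one subpolygon, an optimal centre for them can be chosen inside the geodesic convex hull of those points (which lies inside that subpolygon), and for centres and endpoints all lying in the subpolygon the geodesic distance in $P_\tau$ equals the geodesic distance in the subpolygon — because $P_1$ is obtained from $P_\tau$ by cutting along a chord, and a shortest path in $P_\tau$ between two points of $P_1$ that stayed weakly on the $P_1$-side of $\ell$ is exactly a shortest path in $P_1$; and a geodesic between two points on the same side of a chord never needs to cross to the other side. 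Establishing this last fact carefully (it follows from convexity of the chord as a constraint and the uniqueness of geodesics in simple polygons) is the technical heart of the argument; the rest is the straightforward case analysis and the $D^* \subseteq D(x, 2\rho^*)$ observation already used in \cref{lemma: bfdisc 2-approx}.
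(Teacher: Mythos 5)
Your proposal is correct and follows essentially the same route as the paper's proof: induction over the decomposition (the paper inducts on the number of polygon vertices rather than tree height, which is equivalent here), the same two-case split on whether $D^*$ meets both $S_1$ and $S_2$, and the same appeal to the merge step's guarantee via \cref{lemma: something close in opt interval} together with the $D^* \subseteq D(x,2\rho^*)$ observation. The one substantive addition is your explicit treatment of the metric-restriction issue in Case 1 (that $P_1$ is geodesically convex in $P_\tau$, so the recursive subproblem's optimum is at most $\rho^*$); the paper's proof asserts this case follows from the inductive hypothesis without spelling that out, so your extra care is a welcome refinement rather than a divergence.
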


\begin{proof}
We proceed by induction on the number of vertices $r$ of the polygon. 
\begin{description}
    \item[Base Case] In the base case, we have $r = 3$ for a triangle. 
    We get a $2$-approximation by running the 
    planar $2$-approximation algorithm \cite{har2005fast}.

    \item[Inductive Hypothesis] 
    Assume the Lemma holds for polygons with at most $t$ vertices.
    
    \item[Inductive Step] 
    Let $|P_{\tau}| = t+1$.
    Partition $P_{\tau}$ into two simple subpolygons $P_1$ and $P_2$
    by the diagonal stored in $\tau$
    containing the points $S_1\subseteq S$ and $S_2\subseteq S$ respectively.
    Note that $|P_1|$ and $|P_2|$ are both at most $t$.
    By the inductive hypothesis, we have computed a $2$-SKEG disc
    for $P_1$ and $S_1$, and for $P_2$ and $S_2$.
    
    Consider the merge disc returned from our merge step.
    We show in \cref{section: quickselect merge step} that
    if
    both $D^* \cap S_1 \neq \emptyset$ and $D^* \cap S_2 \neq \emptyset$
    then the merge disc is a $2$-approximation. 
    Indeed, if both $D^* \cap S_1 \neq \emptyset$ and 
    $D^* \cap S_2 \neq \emptyset$,
    then $D^* \cap \ell \neq \emptyset$.
    In \cref{lemma: something close in opt interval} in \cref{section: quickselect merge step} 
    we prove that at least one point in $D^* \cap \ell$ is considered
    to be the centre of the merge disc.
    As such, 
    either the centre of the merge disc 
    is inside $D^* \cap \ell$ 
    (implying a $2$-approximation), 
    or 
    the radius of the merge disc is smaller than 
	that of a disc centred on
	at least one point in
	$D^* \cap \ell$
    (also implying a $2$-approximation).
    If either $D^* \cap S_1 = \emptyset$ or $D^* \cap S_2 = \emptyset$, 
    we have 
    a $2$-approximation by the inductive hypothesis 
    and the fact that in this case
    the optimal disc for $S_1 \cup S_2$ is the same as for whichever 
    set has a non-empty
    intersection with $D^*$.
    Of these three discs, the one with the smallest radius is 
    a $2$-approximation
    for $S_1 \cup S_2$.
\end{description}
\end{proof}

\begin{restatable}{theorem}{diAnalysis}
\label{cor: DC using quickselect}
\textbf{\ref{alg: DI}} computes a
$2$-SKEG disc
in 
$O(n \log^2 n \log r + m)$ expected time and 
$O(n + m)$ expected space.
\end{restatable}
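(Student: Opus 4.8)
The plan is to set up a recurrence for the expected running time of \textbf{\ref{alg: DI}} and solve it. Let $T(n', r')$ denote the expected time spent by a recursive call on a subpolygon $P_\tau$ with $|S_\tau| = n'$ points and $|P_\tau| = r'$ vertices, where $r$ is the total number of reflex vertices (so $r' \le r$ always). First I would account for the preprocessing: by the ``Simplification and Shortest-Path Data Structure'' and ``Balanced Hierarchical Polygon Decomposition'' steps together with \cref{lemma: we know which points are in which subpolygon}, \textbf{\ref{alg: DI preproc}} takes $O(n\log r + m)$ time and $O(n+m)$ space, which contributes the additive $m$ and is dominated by the main recursion otherwise. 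Then the recursion has two features. First, the polygon decomposition tree $T_B$ has height $O(\log r)$, and a diagonal splits a face of $r'$ vertices into two faces each with between $r'/3$ and $2r'/3$ vertices; so along any root-to-leaf path the polygon size geometrically decreases and the depth is $O(\log r)$. Second, and crucially, the points $S_\tau$ are partitioned by $\ell$ into $S_1$ and $S_2$ (they are disjoint), so the point sets shrink additively: $|S_1| + |S_2| = n'$.

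Next I would bound the cost of a single node. The merge step (to be analyzed in \cref{section: quickselect merge step}) computes a merge disc for $S_1 \cup S_2$; I will assume from that section that it runs in expected time $O(n' \log^2 n' \log r)$ — I would actually phrase the bound per level of recursion rather than per node. The base case on a triangle runs the planar $2$-approximation of \cite{har2005fast} in expected time $O(n')$ linear in the points in that triangle. So the recurrence is roughly
\[
T(n', r') \;\le\; T(|S_1|, r'_1) + T(|S_2|, r'_2) + O\!\left(n' \log^2 n' \log r\right),
\]
with $|S_1| + |S_2| = n'$, $r'_1, r'_2 \le 2r'/3$, and base case $T(n', 3) = O(n')$ in expectation. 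Because the point sets are disjoint across siblings, the total of $n'$ over all nodes at a fixed depth of $T_B$ is at most $n$; summing the per-node merge cost $O(n' \log^2 n' \log r) \le O(n' \log^2 n \log r)$ over one level gives $O(n \log^2 n \log r)$, and since $T_B$ has $O(\log r)$ levels this is $O(n \log^2 n \log r \cdot \log r)$ — wait, I would need to be careful here: the $\log r$ from the shortest-path query cost is already inside the per-node bound, and the $O(\log r)$ number of levels must not double-count it. I would instead observe that the sum of $n'$ over \emph{all} nodes of $T_B$ (not just one level) telescopes: since the recursion on the point set is an additive split, $\sum_{\tau} |S_\tau|$ over a root-to-leaf chain is $O(n \log r)$ in the worst case, but more carefully, $\sum_\tau |S_\tau| = O(n \log r)$ because each point of $S$ appears in the $S_\tau$ of at most $O(\log r)$ nodes (its ancestors in $T_B$ down to its leaf triangle). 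Hence $\sum_\tau O(|S_\tau| \log^2 n \log r) = O(n \log r \cdot \log^2 n \log r)$, which overshoots. The resolution is that the merge cost per node should be $O(n' \log^2 n')$ \emph{excluding} a separate $\log r$, or the $\log r$ factor should be charged only once per point per level; I would reconcile this by stating the merge-step bound as $O(n' \log n' \cdot (\log n' + \log r))$ and using $\sum_\tau |S_\tau| = O(n\log r)$ only for the part without the extra $\log r$, giving the claimed $O(n\log^2 n\log r)$ after simplifying $\log r'$ into $\log r$ and using $\log r = O(\log n)$ when $r > n/\log n$ — the same trick used for $CH_g$ in the introduction.

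For the space bound, I would argue that the recursion depth is $O(\log r)$, each frame stores only pointers into the preallocated array $A$ from \cref{lemma: we know which points are in which subpolygon} plus $O(1)$ working space, the shortest-path and point-location structures are $O(r)$ shared globally, and the planar base-case algorithm uses $O(n')$ expected space which is reused; together with the $O(m)$ from the untriangulated input this gives $O(n+m)$ expected space. The main obstacle, and the part I expect to require the most care, is precisely the bookkeeping of the $\log r$ factor: ensuring that the shortest-path query cost multiplies the $n\log^2 n$ term exactly once (so the bound is $O(n\log^2 n\log r)$ and not $O(n\log^3 n\log r)$ or $O(n\log^2 n\log^2 r)$), which forces me to pin down in \cref{section: quickselect merge step} a merge-step bound whose dependence on the shortest-path query time is separated cleanly from its dependence on $n'$, and then to invoke the $\log r = O(\log n)$ simplification to absorb any residual $\log r'$ factors coming from the $O(\log r)$ levels of $T_B$.
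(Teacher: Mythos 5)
Your proposal matches the paper's proof in all essentials: the same per-level accounting (using that the sets $S_\tau$ at a fixed depth of $T_B$ partition $S$, so each level costs $O(n\log n\log r + n\log^2 n)$ in expectation), the same $O(\log r)$ depth, the same merge-step bound $O(n'\log n'\log r + n'\log^2 n')$ imported from \cref{lemma: simple rand works}, and the same final absorption of the residual $O(n\log n\log^2 r)$ term into $O(n\log^2 n\log r + m)$ (the paper does this by deriving a contradiction from assuming that term dominates, which is your $\log r$-versus-$\log n$ dichotomy in different clothing). The only slip is your stated condition ``$\log r = O(\log n)$ when $r > n/\log n$,'' which should be a condition of the form $r \le n^{O(1)}$ (with $m \ge r$ absorbing the term otherwise), but the trick you cite from the introduction is exactly the right one.
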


\begin{proof}
As discussed earlier (at the beginning of \cref{sec: Divide et Impera}), 
preprocessing takes $O(n\log r + m)$ time and 
$O(n + m)$ space.

\ref{alg: DI}
is a recursive, Divide-and-Conquer algorithm.
The recurrence tree of the Divide-and-Conquer algorithm
mimics 
$T_B$
and has $O(\log r)$ depth.
The recursive algorithm visits each node of the tree
that represents a subpolygon that contains at least $k$ points of $S$.
For nodes representing subpolygons containing fewer than $k$ points of $S$,
there is no work to be done; such a node and the branch of $T_B$ 
stemming from it are effectively 
pruned from the recursion tree.
By \cref{lemma: merge is 2 approx}, the result of the merge step
is a $2$-approximation for $P_{\tau}$.
Therefore, when we finish at the root, we have a $2$-approximation 
to $D^*$.

The base case of the recursion is triggered when we reach
a leaf of $T_B$.
Here 
$P_{\tau}$
is a triangle.
When the base case is reached,
\ref{alg: DI} 
runs the planar $2$-approximation algorithm 
in expected time and expected space 
linear in the number of points of 
$S_{\tau}$ (i.e., $O(n')$)
\cite{har2011geometric,har2005fast}.

We assume for the moment that the merge step
runs in expected time
$O(n' \log n' \log r + n' \log^2 n')$ 
and uses $O(n' + r')$ space
(we show this in \cref{subsec: qs alg descr}).
The expected running time of the base case is
dominated by the time for the merge step.
This implies that the algorithm runs in 
$O(n \log n \log^2 r + n \log^2 n \log r + m)$ expected time,
including the time for preprocessing.
We can simplify this expression to $O(n\log^2 n\log r + m)$.
The space bound follows from the space for preprocessing
and the space in the merge step 
which is released after the merge.
\end{proof}

We can balance this expected runtime 
against the runtime of \cref{theorem: rs algo}.

\begin{align*}
&& n \log^2 n \log r & = \frac{n^2}{k}\log n \log r & \\
\Rightarrow && n \log n & = \frac{n^2}{k} & \\
\Rightarrow && k & = \frac{n}{\log n} &
\end{align*}

When $k \in O(n / \log n)$,
the expected 
runtime
of 
\ref{alg: DI} is faster
by \cref{cor: DC using quickselect};
when $k \in \omega(n / \log n)$ 
the 
runtime
of 
\ref{alg: RSAlgo}
is
faster by polylogarithmic factors 
by \cref{theorem: rs algo}.
Although the runtimes are asymptotically identical
when $k \in \Theta(n/\log n)$, 
\ref{alg: RSAlgo}
gets faster as $k$ gets larger.
For example, if $k \in \Omega (n)$ 
\ref{alg: RSAlgo} 
runs in
time $O(n \log n \log r + m)$ and finds a $2$-approximation with high probability,
whereas \cref{cor: DC using quickselect} finds a $2$-approximation in expected time 
$O(n \log^2 n \log r + m)$.

This leads to our main theorem and our main algorithm, \textbf{\ref{alg: main}}.
\ref{alg: main} first performs the preprocessing of \ref{alg: DI preproc}.
Then, if the polygon is convex, it runs the linear-time planar 
approximation algorithm \cite{har2011geometric,har2005fast}.
If the polygon is not convex and $k \in \omega(n/\log n)$, it runs \ref{alg: RSAlgo}.
Otherwise, it runs \ref{alg: DI}.

{\renewcommand\footnote[1]{}\NewMainResultTheorem}

\section{Merge}
\label{section: quickselect merge step}
In this section we describe how to perform
the merge step of our Divide-and-Conquer algorithm.
First we point out that it is not clear whether 
it is possible to apply the recursive
random sampling technique of Chan \cite{DBLP:journals/dcg/Chan99}.
His approach requires 
partitioning
the points of $S$ into a constant
number of fractional-sized
subsets such that the overall solution is the best of the solutions
of each of the subsets.
It is not clear how to partition the points of $S$ 
to allow for an efficient merge step.
In fact, this is an issue that Chan \cite{DBLP:journals/dcg/Chan99} points out in his paper 
for a related problem of finding the smallest square containing $k$ points
before showing how to circumvent this issue in an orthogonal setting.

Assume a SKEG disc in $P_{\tau}$ for $k$ points of $S_{\tau}$ intersects $\ell$
and contains at least one point of $S_1$ and one point of $S_2$.
In this case our merge step either returns:
$1)$ a KEG disc centred on a special point of $\ell$
that is
guaranteed to be inside a SKEG disc;
or $2)$ a KEG disc centred on $\ell$ whose radius is smaller than 
that of such a KEG disc as described in $1)$.

\subsection{Merge Algorithm}
\label{subsec: qs alg descr}
For each $u \in S_{\tau}$, 
let $u_c$ be the closest point of the chord $\ell$ to $u$.
Let the set of all such closest points be $S^{c}_{\tau}$.
We refer to the elements of $S^{c}_{\tau}$ as \emph{projections} 
of the elements of 
$S_{\tau}$ 
onto $\ell$ (see \cref{fig: projections}).
For any $u \in S_{\tau}$ 
and radius $\rho$, let the interval $I(u,\rho) \subseteq \ell$ be $D(u, \rho) \cap \ell$ 
(i.e., the set of points on $\ell$ within geodesic distance $\rho$ of $u$).
See \cref{fig: discs int ell a}.
$I(u,\rho)$ is empty if $\rho < d_g(u, u_c)$.
For $u_c \in S^{c}_{\tau}$ 
and radius $\rho$,
we say the \emph{depth} of $u_c$ is the number of 
intervals from points of 
$S_{\tau}$
that contain $u_c$ using the same
distance $\rho$ 
(i.e., the number
of discs of radius $\rho$ centred on points of 
$S_{\tau}$
that contain $u_c$).

\begin{figure}[t!]
		\includegraphics[width=\linewidth]{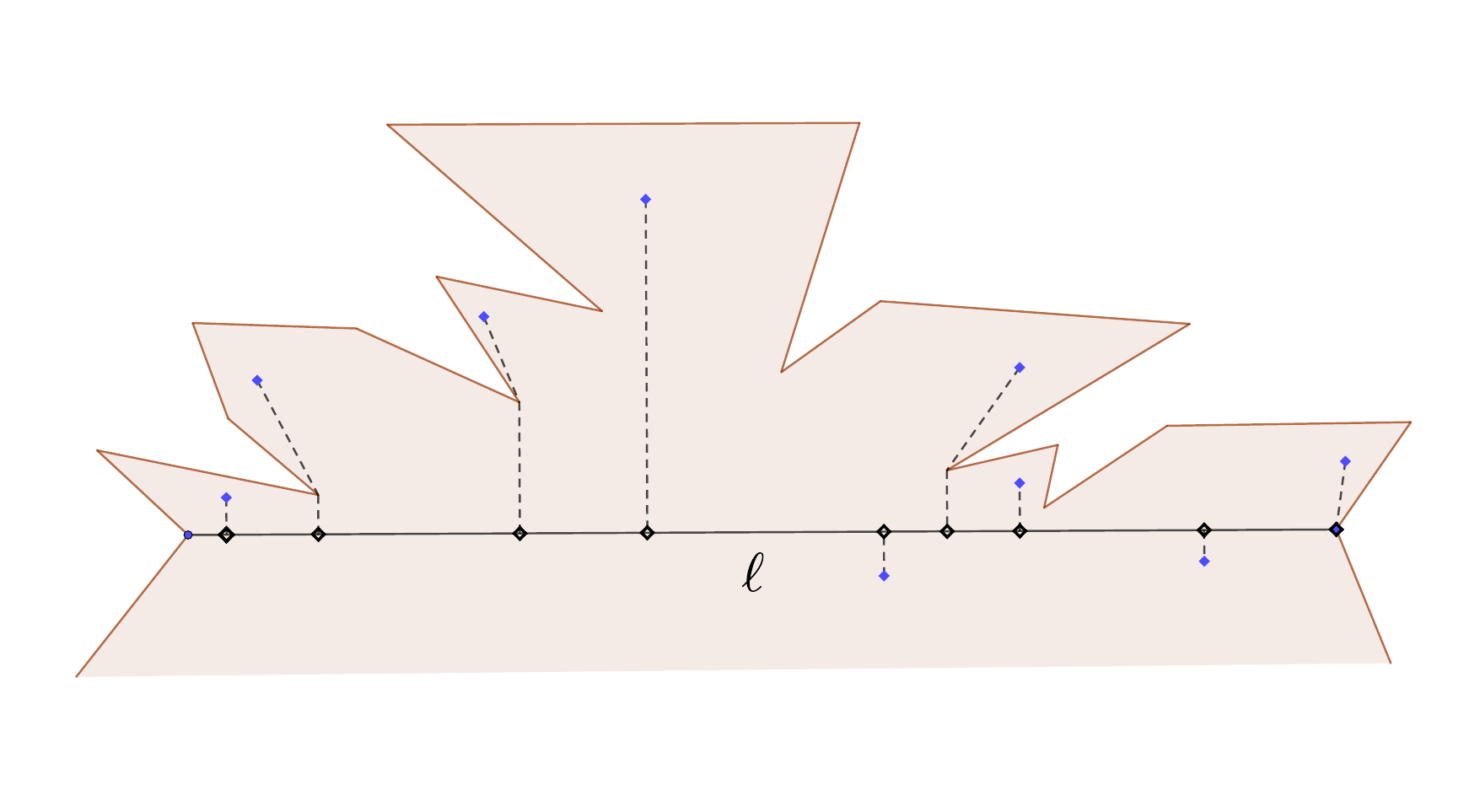}
		\caption{An example of points of $S$ (blue diamonds) and their projections onto $\ell$ (hollow black diamonds).} \label{fig: projections}
\end{figure}

\begin{figure}[t!]
		\includegraphics[width=\linewidth]{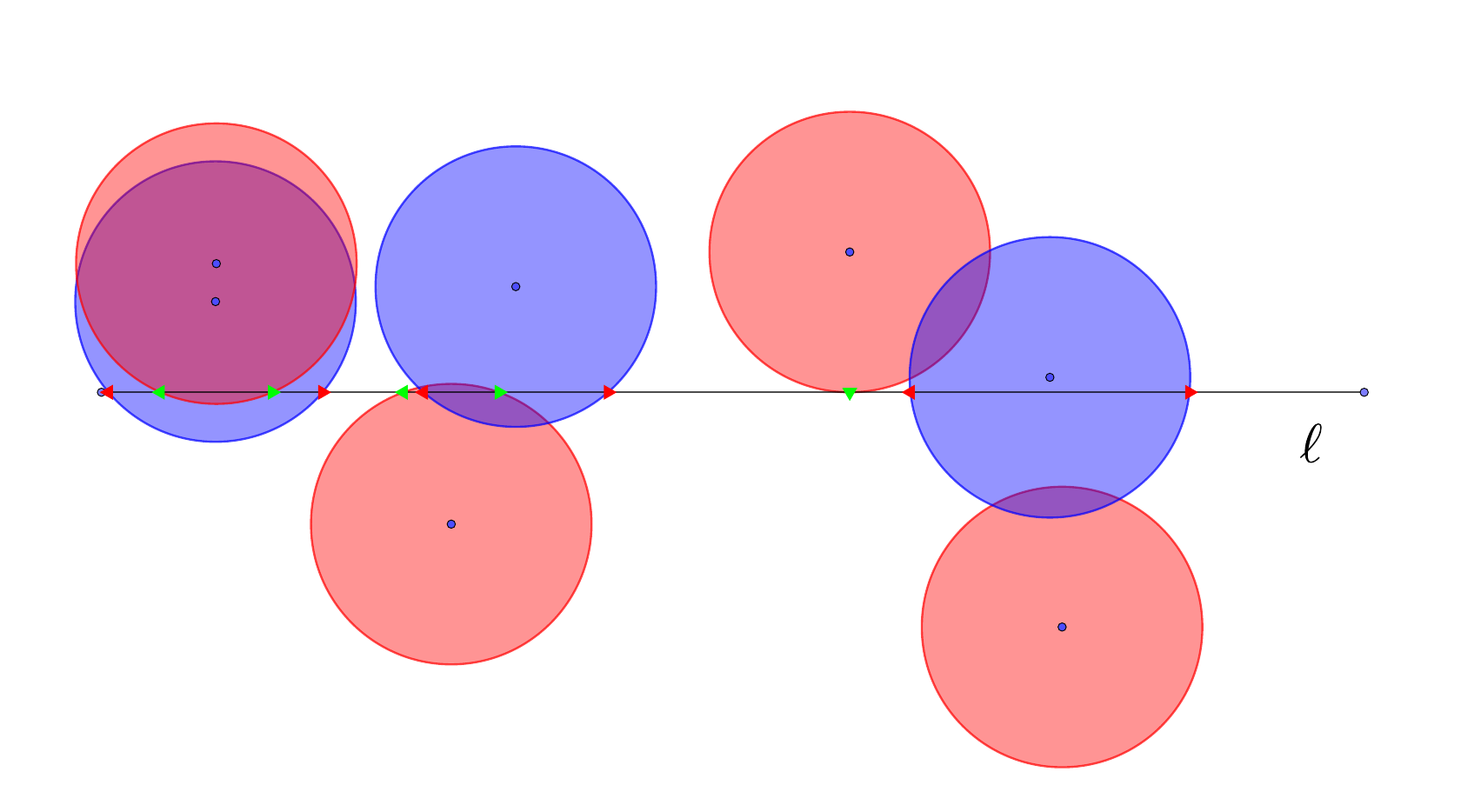}
		\caption{The geodesic discs (arbitrarily red and blue) of 
		radius $\rho$ centred on points of $S$ (blue points) 
		intersect $\ell$.
		The intersection points of red (blue) disc boundaries 
		with $\ell$ are 
		marked by green (red) triangles.
  The intervals $I(u,\rho)$ for the points $u \in S$ are the 
  intersections of $\ell$ with $D(u, \rho)$.
  Overlapping intervals illustrate points along $\ell$ where 
  centering a geodesic of radius $\rho$ will contain multiple 
  points of $S$ (i.e., the points of $S$ defining the overlapping intervals).} \label{fig: discs int ell a}
\end{figure}

\begin{observation}
If the SKEG disc $D^*$ contains at least one point of $S_1$ and at least one point of $S_2$,
then $D^* \cap \ell$ is a continuous non-empty interval of $\ell$ due to geodesic convexity.
\end{observation}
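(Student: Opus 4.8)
The plan is to derive both claims---non-emptiness and continuity of $D^* \cap \ell$---from a single structural fact: a geodesic disc is geodesically convex relative to $P_\tau$. I would first recall (or cite) this standard property of metric balls in the geodesic metric on a simple polygon, namely that for any $c \in P_\tau$ and $\rho \geq 0$, the disc $D(c,\rho)$ is geodesically convex, so that the shortest path between any two of its points stays inside it. With this in hand, the Observation reduces to two short arguments.

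First I would establish that $D^* \cap \ell \neq \emptyset$. By hypothesis there exist $a \in D^* \cap S_1$ and $b \in D^* \cap S_2$. Since $\ell$ is a diagonal separating $P_\tau$ into the two subpolygons $P_1$ (containing $a$) and $P_2$ (containing $b$), any path inside $P_\tau$ from $a$ to $b$ must cross $\ell$; in particular the geodesic $\Pi(a,b)$ does. Because $D^*$ is geodesically convex and contains both $a$ and $b$, we have $\Pi(a,b) \subseteq D^*$, so the point where $\Pi(a,b)$ meets $\ell$ lies in $D^* \cap \ell$, which is therefore non-empty.

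For continuity, I would take any two points $p, q \in D^* \cap \ell$. The key local observation is that $\ell$ is a chord of $P_\tau$, i.e.\ a straight segment lying entirely inside the polygon; hence the straight segment from $p$ to $q$ runs along $\ell$, stays in $P_\tau$, and is the shortest path $\Pi(p,q)$. Geodesic convexity of $D^*$ then gives $\Pi(p,q) \subseteq D^*$, so every point of $\ell$ between $p$ and $q$ belongs to $D^*$. This shows $D^* \cap \ell$ is connected, and combined with non-emptiness it is a single continuous interval of $\ell$.

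The only real content to justify carefully is the geodesic convexity of the disc itself; once that is invoked, the crossing argument and the collinearity of the shortest path with $\ell$ are immediate. I would therefore make sure to state explicitly that $D(c^*,\rho^*)$ is geodesically convex and that, because $\ell$ is a straight chord contained in $P_\tau$, the geodesic between two of its points coincides with the straight segment along $\ell$.
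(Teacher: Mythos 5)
Your argument is correct and matches the paper's intended justification: the paper states this Observation without a separate proof, attributing it entirely to geodesic convexity, and your write-up simply makes that explicit (the crossing of $\ell$ by $\Pi(a,b)$ for non-emptiness, and the fact that the geodesic between two points of the straight chord $\ell$ is the subsegment of $\ell$ itself for connectivity). The one ingredient worth stating carefully, as you note, is that $D^*$ is geodesically convex, which follows from the convexity of the geodesic distance function along shortest paths (Pollack, Sharir, and Rote, Lemma~$1$, cited elsewhere in the paper).
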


\textbf{\ref{alg: merge-me}}
proceeds as follows.

\begin{enumerate}
	\item 
	Compute $\mathbb{C} = S^{c}_{\tau}$.
	
	\item
    Initialize set $\mathbb{S} = S_{\tau}$.
	
	\item	
	While $|\mathbb{C}| > 0$:
	\begin{itemize}
	    \item 
	    Pick a point $u_c \in \mathbb{C}$ uniformly at random.
	    
	    \item
	    Find the point $z \in \mathbb{S}$ that is the $k$\textsuperscript{th}-closest neighbour of $u_c$.
	    
	    \item
	    Let $\rho = d_g(u_c, z)$.
	    
	    \item
	    If $|\mathbb{C}| = 1$, break the while-loop.
	    
	    \item
	    For each $v \in \mathbb{S}$, compute $I(v, \rho)$. 
	    If  $I(v, \rho)$ is empty, remove $v$ from $\mathbb{S}$.
	    
	    \item 
	    For each $w \in \mathbb{C}$, compute the depth of $w$.
	    If the depth of $w$ is less than $k$, remove $w$ from $\mathbb{C}$.
	    
	    \item
	    Remove $u_c$ from $\mathbb{C}$.
	\end{itemize}
	
	\item 
	Return $u_c$ and $\rho$.
\end{enumerate}

\ref{alg: merge-me}
computes
the point $u^{*}_{c} \in S^{c}_{\tau}$
whose distance to its $k$\textsuperscript{th}-nearest 
neighbour in $S_{\tau}$ is minimal.
In essence, we are given a set $\Gamma$
of $n$ numbers (i.e., the distances to the 
$k$\textsuperscript{th}-nearest neighbours)
and want to find the minimum.
At each step, the algorithm picks a random element
$g \in \Gamma$ and removes all elements in $\Gamma$ larger
than $g$.
When we prove the correctness of 
\ref{alg: merge-me}
we will 
show that the
number of iterations is 
$O(\log |\Gamma|)$ with high probability.

\begin{restatable}{lemma}{somethingCloseInOptInterval}
\label{lemma: something close in opt interval}
If a SKEG disc $D^*$ contains at least one point of 
$S_{\tau}$ 
from each side of $\ell$,
then for some 
$u \in S_{\tau}$ 
that is in $D^*$, $D^* \cap \ell$ contains 
$u_c \in S^{c}_{\tau}$.
\end{restatable}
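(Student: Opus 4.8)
The plan is to argue by contradiction: suppose that for every point $u \in S_\tau$ contained in $D^*$, its projection $u_c$ onto $\ell$ lies outside the interval $D^* \cap \ell$. Let $J = D^* \cap \ell$; by the preceding observation $J$ is a non-empty closed subinterval of $\ell$, say with endpoints $a$ and $b$. Partition the $k$ points of $S_\tau$ inside $D^*$ into those lying (weakly) on the $S_1$-side of $\ell$ and those on the $S_2$-side; by hypothesis both parts are non-empty. For a point $u$ on the $S_1$-side whose projection $u_c$ is not in $J$, the foot of the perpendicular from $u$ to the line through $\ell$ is either ``before'' $a$ or ``after'' $b$ along $\ell$; I would show that in fact all such feet must lie on the same side (say all before $a$), because otherwise one can find two points of the same side of $\ell$ inside $D^*$ whose geodesic paths to opposite endpoints of $J$, together with the sub-chord of $\ell$, would force the geodesic disc boundary to separate $J$ from itself — contradicting geodesic convexity of $D^*$ (Remark on geodesic convexity / the Observation).

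The key geometric step is the following: if $u \in S_1 \cap D^*$ and $u_c \notin J$, then consider the endpoint of $J$ nearest to $u_c$, call it $a$. Since $a \in D^*$ and $u \in D^*$, and $D^*$ is geodesically convex, the geodesic $\Pi(u,a)$ lies in $D^*$. I claim this geodesic must cross $\ell$, or else $u_c$, being the closest point of $\ell$ to $u$, would satisfy $d_g(u, u_c) \le d_g(u, x)$ for every $x \in \ell$ on the path, in particular giving a point of $\ell$ inside $D^*$ strictly between $u_c$ and $a$ — but $J$ is exactly the set of points of $\ell$ in $D^*$ and is an interval ending at $a$, so $u_c \in J$, a contradiction. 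So $\Pi(u,a)$ crosses $\ell$ at some point $p$; then $p \in J$. Now do the symmetric thing with a point $w \in S_2 \cap D^*$ and the other endpoint $b$ (or with $a$ as well), obtaining a crossing point $q \in J$. The pair of geodesics $\Pi(u,a)$ and $\Pi(w, b)$ both lie in $D^*$ and both touch $\ell$; combining their sub-paths with the segment of $\ell$ between the crossing points yields a closed region that should be contained in $D^*$ by geodesic convexity, and I want to push a point of $S_\tau$ (namely $u$ or $w$ itself, reflected appropriately) so that its projection must land inside $J$ — contradicting the assumption.

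The main obstacle, and where I expect to spend the most care, is handling the fact that the projection $u_c$ (closest point of $\ell$ to $u$ in the \emph{geodesic} metric) need not be the orthogonal foot, and that the shortest path $\Pi(u, u_c)$ may bend around reflex vertices before reaching $\ell$. I would deal with this by using the characterization that $u_c$ is the point of $\ell$ minimizing $d_g(u, \cdot)$, hence the function $x \mapsto d_g(u,x)$ restricted to $\ell$ is \emph{unimodal} (it decreases to $u_c$ then increases), which follows from geodesic convexity of geodesic discs centred at $u$. Unimodality immediately gives that $\{x \in \ell : d_g(u,x) \le \rho^*\}$ is an interval containing $u_c$; intersecting over the $k$ points of $S_\tau$ in $D^*$ and using that $J$ is their common ``slice'', at least one of these per-point intervals must have $u_c$ as an interior-or-boundary point inside $J$, because the common interval $J$ cannot avoid \emph{all} of the $u_c$'s when the points sit on both sides of $\ell$. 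Making ``cannot avoid all'' precise — via the two-sided argument that a point on the $S_1$-side with projection left of $J$ and one on the $S_2$-side with projection right of $J$ would force $D^*$ to be disconnected along $\ell$ — is the crux, and I would write it as the contradiction described above.
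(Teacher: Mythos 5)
Your proposal does not close, and the gap is exactly the step you yourself flag as ``the crux.'' The paper's proof is direct, not by contradiction, and its decisive move is one your sketch never makes: it brings the \emph{centre} $c^*$ of $D^*$ into play. Since $D^*$ contains points of $S_{\tau}$ on both sides of $\ell$, one may choose $u \in D^* \cap S_{\tau}$ on the side of $\ell$ opposite $c^*$, so that $\Pi(c^*, u)$ crosses $\ell$ at some point $z$. Because $u_c$ minimizes $d_g(u,\cdot)$ over $\ell$, the edge of $\Pi(u, u_c)$ incident to $u_c$ meets $\ell$ at angle $\pi/2$, and the angle lemma of Pollack, Sharir, and Rote (Corollary~$2$ of that paper) applied to the geodesic triangle on $u_c$, $z$, $u$ gives $d_g(z, u_c) < d_g(z, u)$. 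Then $d_g(c^*, u_c) \le d_g(c^*, z) + d_g(z, u_c) < d_g(c^*, z) + d_g(z, u) = d_g(c^*, u) \le \rho^*$, so $u_c \in D^* \cap \ell$ and the lemma follows with no case analysis on where the other projections fall. Without $c^*$ there is no handle on membership in $D^*$, which is why your argument stalls.

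Concretely, three steps in your route fail. First, $J = D^* \cap \ell$ is \emph{not} the ``common slice'' of the per-point intervals $\{x \in \ell : d_g(u,x) \le \rho^*\}$: $J$ is the sublevel set of $d_g(c^*,\cdot)$ on $\ell$, and a point of $J$ is only guaranteed to be within $2\rho^*$ of the points of $S_{\tau}$ in $D^*$, so neither containment between $J$ and the intersection of those intervals holds; the unimodality argument therefore says nothing about whether some $u_c$ lands in $J$. Second, the claim that a projection left of $J$ on the $S_1$-side together with one right of $J$ on the $S_2$-side ``would force $D^*$ to be disconnected along $\ell$'' is precisely the content of the lemma and is nowhere argued; geodesic convexity gives you $\Pi(u,a) \subseteq D^*$ and that $J$ is an interval, but $u_c$ is defined by proximity to $u$, not by membership in $D^*$, so convexity of $D^*$ alone cannot place it. Third, the ``must cross $\ell$'' step is internally inconsistent: a point of $D^* \cap \ell$ strictly between $u_c$ and the endpoint $a$ of $J$ would lie \emph{outside} $J$ (since $a$ is the endpoint of $J$ nearest $u_c$ and $u_c \notin J$), which contradicts $J = D^* \cap \ell$ rather than yielding $u_c \in J$ as you write. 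The perpendicularity of $\Pi(u,u_c)$ at $u_c$, combined with the triangle inequality through the crossing point of $\Pi(c^*,u)$ with $\ell$, is what actually does the work, and it is absent from your plan.
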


\begin{proof}
Without loss of generality, assume $u \in D^*$ is below $\ell$ and
the centre $c^*$ of $D^*$ is 
on or above $\ell$.
Assume that the chord $\ell$ is horizontal for ease of reference.
It was shown by Pollack et al.\ \cite[Corollary $2$]{DBLP:journals/dcg/PollackSR89} that,
for any triplet of points $p$, $v$, and $w$ in a simple polygon,
if the first edge of $\Pi(p,v)$ makes an angle of $\pi/2$ or 
greater with the first edge of $\Pi(p,w)$,
then $d_g(v, w)$ is larger than both
$d_g(p, v)$ and $d_g(p,w)$.
We use this  
fact to complete the proof.

Consider $\Pi(c^*, u)$.
Since $c^*$ and $u$ are on different sides of $\ell$,
$\Pi(c^*, u)$ must intersect $\ell$ at some point $z$.
We want to show that $u_c \in D^*$.
If $z = u_c$, then $\Pi(c^*, z) = \Pi(c^*, u_c) \subseteq \Pi(c^*, u)$
and thus $u_c \in D^*$.
If $z \neq u_c$, then let us consider the geodesic triangle
formed by the union of $\Pi(u_c, z)$, $\Pi(u_c, u)$, and $\Pi(z, u)$.
Notice that if $u_c$ is not an endpoint of $\ell$ then
the angle at $u_c$ (i.e., the angle formed by the first edges of
$\Pi(u_c, z)$ and $\Pi(u_c, u)$) is $\pi/2$ since $\Pi(u_c, z) \subseteq \ell$;
and by the Pythagorean theorem the first edge on $\Pi(u_c, u)$
forms an angle of $\pi/2$ with $\ell$ (since $u_c$ is the closest point of $\ell$ to $u$).
This means that $|\Pi(z, u)| > |\Pi(u_c, z)|$
(by Pollack et al.\ \cite[Corollary $2$]{DBLP:journals/dcg/PollackSR89}).
Thus, in this case, $u_c \in D^*$ by the triangle inequality:
$|\Pi(c^*, u_c)| \leq |\Pi(c^*, z)| + |\Pi(z, u_c)| < |\Pi(c^*, z)| + |\Pi(z, u)| = |\Pi(c^*, u)|$.
If $z \neq u_c$ and $u_c$ is an endpoint of $\ell$, 
then the angle at $u_c$ is at least $\pi/2$ and the same argument applies.
\end{proof}

\begin{restatable}{lemma}{mergeRuntime}
\label{lemma: simple rand works}
\textbf{\ref{alg: merge-me}} 
runs
in $O(n' \log n' \log r + n' \log^2 n')$ 
time with high probability 
and $O(n'+r')$ space and produces a $2$-approximation
if $D^* \cap S_1 \neq \emptyset$ and $D^* \cap S_2 \neq \emptyset$.  
\end{restatable}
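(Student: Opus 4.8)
The plan is to split the statement into three independent claims: correctness (the returned pair $(u_c,\rho)$ yields a $2$-approximation under the stated hypothesis), the high-probability bound on the iteration count, and the per-iteration cost (from which the time and space bounds follow). For correctness, first observe that by \cref{lemma: something close in opt interval}, when $D^* \cap S_1 \neq \emptyset$ and $D^* \cap S_2 \neq \emptyset$ there is some $u \in S_\tau \cap D^*$ whose projection $u_c$ lies in $D^* \cap \ell$; since $u_c \in D^*$, a disc of radius $2\rho^*$ centred at $u_c$ contains all of $D^*$ and hence at least $k$ points, so the $k$\textsuperscript{th}-nearest neighbour of $u_c$ in $S_\tau$ is within distance $2\rho^*$. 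I would then argue that this particular $u_c$ is never removed from $\mathbb{C}$ prematurely: a point $w \in \mathbb{C}$ is discarded only when its depth (with respect to the current radius $\rho$) drops below $k$, and the radius $\rho$ chosen in each round equals the distance from the randomly picked centre to its own $k$\textsuperscript{th}-nearest neighbour, which is an upper bound on the smallest achievable radius over the \emph{surviving} candidates; since the optimal $u_c$ realizes a value at most $2\rho^*$, a short monotonicity argument shows it survives until it is itself selected, at which point $\rho = d_g(u_c, z) \le 2\rho^*$ and we return a KEG disc of radius at most $2\rho^*$. I also need to note the filtering of $\mathbb{S}$ does not affect correctness: removing $v$ with $I(v,\rho)=\emptyset$ only removes points that cannot be among the $k$ closest to any surviving centre at radius $\le \rho$, and $\rho$ is non-increasing across iterations.

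For the iteration count, this is the standard randomized-selection (QuickSelect-style) argument alluded to in the text: we are effectively finding the minimum of a set $\Gamma$ of $n' = |S_\tau|$ values by repeatedly picking a uniformly random surviving element $g$ and deleting every surviving element with value $> g$. A random pivot is below the median of the survivors with probability $1/2$, so the surviving set shrinks by a constant factor in expectation each round; a Chernoff bound over $O(\log n')$ rounds gives that after $c \log n'$ iterations the set is empty with probability at least $1 - n'^{-\lambda}$, hence $O(\log n')$ iterations with high probability. I would state this as a lemma-internal claim and cite the analysis of randomized selection.

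For the per-iteration cost: computing $\mathbb{C} = S^c_\tau$ takes $O(n' \log r)$ time via the shortest-path data structure (one projection / first-edge query per point). Inside an iteration, finding the $k$\textsuperscript{th}-nearest neighbour of $u_c$ in $\mathbb{S}$ costs $O(|\mathbb{S}| \log r)$ time for the distance computations plus $O(|\mathbb{S}|)$ for linear-time rank selection; computing all intervals $I(v,\rho)$ costs $O(|\mathbb{S}| \log r)$; and computing the depth of every $w \in \mathbb{C}$ is a one-dimensional stabbing count on $\ell$, doable in $O((|\mathbb{S}| + |\mathbb{C}|)\log(|\mathbb{S}|+|\mathbb{C}|))$ time by sorting the $2|\mathbb{S}|$ interval endpoints together with the $|\mathbb{C}|$ query points and sweeping. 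Since $|\mathbb{S}|, |\mathbb{C}| \le n'$, each iteration is $O(n' \log r + n' \log n')$, and over $O(\log n')$ iterations with high probability this is $O(n' \log n' \log r + n' \log^2 n')$; the space is dominated by the endpoint arrays and the shortest-path query scratch space, i.e., $O(n' + r')$ (the $r'$ term being the size of the current subpolygon / the portion of the shortest-path structure touched). The main obstacle I anticipate is the correctness argument that the optimal projection $u_c$ is not culled before it is selected: I need to show carefully that the depth test at radius $\rho$ (which changes from round to round) never kills $u_c$ while $\rho$ stays $\ge 2\rho^*$, which relies on $\rho$ being monotonically non-increasing and on the fact that once $\rho < 2\rho^*$ we have already found a disc at least as good, so returning at that point is still a $2$-approximation. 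Getting the interaction between the shrinking radius, the shrinking candidate set $\mathbb{C}$, and the shrinking point set $\mathbb{S}$ exactly right — so that no iteration can simultaneously discard $u_c$ and fail to have already recorded a radius $\le 2\rho^*$ — is the delicate part, and I would isolate it as a short invariant maintained across the while-loop.
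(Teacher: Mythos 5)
Your proposal is correct, and its correctness and per-iteration-cost components follow the paper's proof essentially step for step: the $2$-approximation guarantee comes from \cref{lemma: something close in opt interval} combined with the monotone non-increase of $\rho$ and the observation that a candidate is culled only when its $k$\textsuperscript{th}-nearest-neighbour distance exceeds the current $\rho$ (so the in-$D^*$ projection is either eventually picked or is discarded only after $\rho$ has already dropped below its value, i.e.\ below $2\rho^*$), and the per-iteration bounds are exactly those of \cref{lemma: logr closest point}, \cref{lemma: comp intervals}, and \cref{lemma: decision alg}. One remark on phrasing: your claim that the good projection ``survives until it is itself selected'' is not literally true --- it can be culled --- but you repair this yourself with the invariant that culling it forces the current (hence final) radius below $2\rho^*$, which is the same dichotomy the paper uses. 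The only genuinely different ingredient is the high-probability bound on the number of iterations of the while-loop. The paper maps the sequence of randomly chosen pivots to the down-records of a random permutation of the $k$\textsuperscript{th}-nearest-neighbour distances and invokes Devroye's result that the depth of a key in a random binary search tree is $O(\log n')$ with high probability. You instead use the standard randomized-selection argument: a uniformly random pivot falls at or below the median of the survivors with probability at least $1/2$ independently of the history, so $c\log n'$ rounds contain $\log_2 n'$ halving rounds except with probability $n'^{-\Omega(c)}$ by a Chernoff bound. Both arguments are valid and give the same bound; yours is more elementary and self-contained (no external citation needed, and the constant in the exponent of the failure probability is explicit in $c$), while the paper's records viewpoint packages the whole pivot sequence as a single known random-BST quantity. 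In either case one must note, as both you and the paper do, that pruning $\mathbb{S}$ does not perturb the $k$\textsuperscript{th}-nearest-neighbour values of surviving candidates, since the removed sites lie at distance greater than the current $\rho$ from every point of $\ell$.
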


\begin{proof}
If $D^* \cap S_1 \neq \emptyset$ and $D^* \cap S_2 \neq \emptyset$, 
then we either return a disc centred
on a projection of some point of 
$S_{\tau}$
inside $D^*$, or a disc centred on a projection
whose radius is less than that of such a disc.
Thus, the result of the algorithm is a $2$-approximation 
by \cref{lemma: something close in opt interval}. 

Recall that $|S_{\tau}| = n'$ and $|P_{\tau}| = r'$.
We assume for the moment that we can
compute 
$S^{c}_{\tau}$
in $O(n'\log r)$ time 
and $O(n'+r')$ space (we will show this in 
\cref{subsubsec: testing closest points}).
Given the $n'$ elements of 
$S^{c}_{\tau}$ or $S_{\tau}$:
we build sets $\mathbb{C}$ and $\mathbb{S}$ in $O(n')$ time and space;
we can delete an identified element from the sets
in constant time and space;
we can iterate through a set
in constant time and space per element;
and we can pick a point $u_c \in \mathbb{C}$ uniformly at random 
in $O(1)$ time and space.

Using the shortest-path data structure, we can find the 
$k$\textsuperscript{th}-closest 
neighbour of $u_c$ in $O(n'\log r)$ time and $O(n' + r')$ space by first
computing the distance of everyone in $\mathbb{S}$ to 
$u_c$ in $O(\log r)$ time
and $O(r')$ space each (the space is re-used for the next query), 
and then using a linear-time rank-finding algorithm
to find the $k$\textsuperscript{th}-ranking 
distance in $O(n')$ time and space
\cite{DBLP:conf/wea/Alexandrescu17,DBLP:journals/jcss/BlumFPRT73,DBLP:books/daglib/0023376}.
Let $\rho$ be this $k$\textsuperscript{th}-ranking distance.
We assume for the moment that we can compute,
for all $O(n')$ elements $v \in \mathbb{S}$,
the intervals $I(v, \rho)$ in 
$O(n'\log r)$ time
and 
$O(n'+r')$ space 
(we will show this in \cref{subsec: decision prob par s}).
Given the intervals, we also assume for the moment that we can compute,
for all $O(n')$ elements $w \in \mathbb{C}$, the depth of $w$
in overall $O((|\mathbb{S}| + |\mathbb{C}|)\log (|\mathbb{S}| + |\mathbb{C}|)) = O(n' \log n')$ time 
and $O(|\mathbb{S}| + |\mathbb{C}|) = O(n')$ space
(we will show this in \cref{subsec: decision prob par s}).
In 
$O(n')$ 
time and space we can remove from $\mathbb{S}$
and $\mathbb{C}$ elements whose interval $I$ is empty or whose
depth is less than $k$, respectively.
Elements of $\mathbb{S}$ whose intervals are empty are too far away from
$\ell$ to be contained in the geodesic disc of any element of $\mathbb{C}$ for 
any radius of value $\rho$ or less, and since $\rho$ is non-increasing in each
iteration, we can remove these elements of  $\mathbb{S}$ from consideration.\footnote{At the moment, 
removing points from  $\mathbb{S}$ is a practical consideration; no lower bound is clear on
the number of elements from $\mathbb{S}$ that can be discarded in each iteration, though
if one were to find a constant fraction lower bound, one could shave a logarithmic factor off the runtime.}
Elements of $\mathbb{C}$ whose depth is less than $k$ will contain fewer than $k$ elements of $\mathbb{S}$
in their geodesic discs for any radius of value $\rho$ or less.
Since we seek to minimize the value $\rho$, 
we can remove these elements of $\mathbb{C}$ from consideration.
The removal of these elements
ensures $\rho$ is non-increasing in each iteration.
Computing the $k$\textsuperscript{th}-nearest neighbour of a projection,
the intervals $I$, 
and the depths of elements of $\mathbb{C}$ 
dominates the complexity of the while-loop.

We now analyze how many iterations of the while-loop are performed.
We associate with each projection 
in $\mathbb{C}$ 
the distance associated with its 
$k$\textsuperscript{th}-nearest neighbour 
(and thus the disc's radius) and assume 
these distances are unique.
Consider a random permutation of these distances.
This permutation can be seen as the order in which
the elements are inserted into a random binary search tree.
Devroye \cite{DBLP:journals/acta/Devroye88}
pointed out that the depth (i.e., the level in the binary search tree)
of the next item to be inserted
is proportional to the sum of: 
the number of up-records seen so far whose
value is less than the item to be inserted
(i.e., the number of elements in the ordered sequence so far that
were bigger than everything that came before them, 
but still less than the item to be inserted);
and of the number of down-records seen so far
whose value is larger than the item to be inserted 
(i.e., the number of elements in the ordered sequence so far that
were smaller than everything that came before them, 
but still more than the item to be inserted).
Every node visited on the path from the root to this item
in the constructed tree is one of these records.
Devroye \cite{DBLP:journals/acta/Devroye88} then
showed that with high probability 
the depth of the last element 
is $O(\log n')$ for a sequence of $n'$ items.
In our while-loop,
we are searching for the element of smallest rank.
This can be seen as following the path from the root of
this random binary search tree to the smallest element.
In each iteration, we pick 
an element of $\mathbb{C}$ uniformly at random.
The radius defined by this element will always be a down-record
since it is always $k$-deep (since it defines the radius),
and since the next step in the iteration is to remove
elements whose $k$\textsuperscript{th}-nearest neighbour
is farther than the chosen radius.
Each time we pick an element at random, it is like encountering
a down-record in the random permutation of distances
that can be used to create the random binary search tree.
The elements that are discarded in an iteration are like the
elements in the sequence between two down-records.
Thus, with high probability, we require $O(\log n')$ iterations.

The work done in one iteration given sets $\mathbb{S}$ and $\mathbb{C}$ 
with $|\mathbb{S}| = n'$ and $|\mathbb{C}| \leq n'$ is $O(n'\log r + n' \log n')$ using $O(n'+r')$ space.
With high probability we have $O(\log n')$ iterations,
thus $O(n'\log n' \log r + n' \log^2 n')$ time with high probability.
\end{proof}

\subsection{With Regards to Shortest Paths}
\label{app: discussing shortest paths and distance functions}
The shortest-path data structure of Guibas and Hershberger
\cite{GUIBAS1989126,HERSHBERGER1991231} represents the 
shortest path between two points in $P$ as a tree
of $O(\log r)$ height whose in-order traversal reveals
the edges of the shortest path in order. 
A query between any source and destination points in $P$ 
is performed in $O(\log r)$ time and additional space
\cite{GUIBAS1989126,HERSHBERGER1991231}.
The data structure stores the distance from a vertex of the 
shortest-path to the source in the query, so given a vertex of the path
the distance to the source can be reported in $O(1)$ time and space.

\begin{description}
	\item[Funnel] Now let us briefly review the notion of a \emph{funnel} 
	\cite{GUIBAS1989126,DBLP:journals/networks/LeeP84}.
	The vertices of the geodesic shortest path between two points 
	$a$ and $b$, $\Pi (a,b)$, consists of the vertices $a$, $b$, and a subset
	of the vertices of the polygon $P$ forming a polygonal chain 
	\cite{DBLP:journals/networks/CheinS83,DBLP:journals/cacm/Lozano-PerezW79}.
	Consider a diagonal $\ell$ of $P$, its two endpoints $\ell_1$ and $\ell_2$,
	and a point $p \in \{P \cup S\}$.
	The union of the three paths $\Pi(p, \ell_1)$, $\Pi(p, \ell_2)$, and $\ell$ 
	form what is called a funnel.
	As Guibas and Hershberger \cite{GUIBAS1989126} point out, this funnel
	represents the shortest paths from $p$ to the points on $\ell$ in that
	their union is the funnel.
	Starting at $p$, the paths $\Pi(p, \ell_1)$ and $\Pi(p,\ell_2)$ may overlap
	during a subpath, but there is a unique vertex $p_a$ (which is the farthest
	vertex on their common subpath from $p$) where the two paths diverge.
	After they diverge, the two paths never meet again.
	This vertex $p_a$ is called the \emph{apex}\footnote{Sometimes, as in 
	\cite{DBLP:journals/networks/LeeP84}, this is called a \emph{cusp}.
	In \cite{DBLP:journals/networks/LeeP84} the funnel is defined as beginning at the
	cusp and ending at the diagonal.} of the funnel.
	The path from the apex to an endpoint of $\ell$ forms an
	\emph{inward-convex} polygonal chain (i.e., a convex path through vertices of $P$
	with the bend protruding into the interior of $P$).
\end{description}

Below we present bounds for operations in a subpolygon.
We modify our perspective slightly by
re-defining the diagonal to which we refer
to make the discussion simpler.
Let $\tau_{up}$ be the parent of $\tau$ in $T_B$.
We change the definition of $\ell$ to be the diagonal stored in $\tau_{up}$.
Note that  $\ell \in \partial P_{\tau}$ (which is a diagonal of $P$).
We consider $\ell$ to be the $x$-axis with
the downward direction the side of $\ell$ bordering the exterior of
$P_{\tau}$;
$\ell_1$ to be the left endpoint of $\ell$;
and $\ell_2$ to be
the right endpoint of $\ell$.
Without loss of generality, assume the points of $S_{\tau}$ are \emph{above} $\ell$.
Recall that $|P_{\tau}|= r'$ and $|S_{\tau}| = n'$.
A careful reading of Guibas and Hershberger \cite{GUIBAS1989126,HERSHBERGER1991231}
gives us the following. 

\begin{observation}
\label{lemma: sp query size}
The returned 
tree for a shortest-path query for two points in 
$P_{\tau}$
has $O(r')$ nodes and height $O(\log r')$.
\end{observation}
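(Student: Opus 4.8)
The plan is to prove the two bounds separately. For the node count I will use the fact that the shortest path $\Pi(a,b)$ between two points $a,b \in P_\tau$ is entirely contained in $P_\tau$; for the height bound I will use the fact that the Guibas--Hershberger data structure is built on top of $T_B$ \cite{GUIBAS1989126,HERSHBERGER1991231}, so a query for two points of $P_\tau$ is resolved using only precomputed structures stored in the subtree of $T_B$ rooted at $\tau$, and that subtree has height $O(\log r')$ by the balance property of the decomposition.

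First I would establish containment. The boundary $\partial P_\tau$ consists of sub-chains of $\partial P$ together with one or more diagonals of $P$ (the chords inserted along the root-to-$\tau$ path of $T_B$ that end up bounding $P_\tau$). A shortest path $\Pi(a,b)$ with $a,b \in P_\tau$ cannot cross $\partial P$, and it cannot cross any bounding diagonal $d$: since $d$ is a chord of $P$ it separates $P$ into two parts, $P_\tau$ lies entirely in one of them, so $a$ and $b$ are on the same side of $d$; as $d$ is itself a geodesic (a straight segment), a path crossing $d$ two or more times could be strictly shortened by rerouting along $d$, contradicting optimality. Hence $\Pi(a,b) \subseteq P_\tau$. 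The interior vertices of a geodesic shortest path in a simple polygon are reflex polygon vertices \cite{Aronov89,DBLP:journals/networks/CheinS83,DBLP:journals/networks/LeeP84}, and every vertex of $P$ lying in $P_\tau$ lies on $\partial P_\tau$ (a vertex of $P$ is on $\partial P$, hence not in the interior of $P_\tau$), so there are at most $r'$ of them. Thus $\Pi(a,b)$ has $O(r')$ edges, and since the in-order traversal of the returned tree spells out $\Pi(a,b)$ with one leaf per edge and internal nodes encoding the binary structure, the tree has $O(r')$ nodes.

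For the height, I would argue that a two-point shortest-path query is answered by locating the triangulation triangles $\Delta_a, \Delta_b$ containing $a$ and $b$, taking their lowest common ancestor $\mu$ in $T_B$, and assembling $\Pi(a,b)$ by combining the $O(1)$-per-level funnel/hourglass structures stored along the two downward paths from $\mu$ to $\Delta_a$ and to $\Delta_b$; the returned tree is the balanced merge of these $O(\text{depth of } T_\mu)$ pieces, where $T_\mu$ is the subtree of $T_B$ rooted at $\mu$. Since $a,b \in P_\tau$, every triangle of the triangulation contained in $P_\tau$ is a leaf descendant of $\tau$; in particular $\Delta_a$ and $\Delta_b$ are, so $\mu$ lies in the subtree rooted at $\tau$. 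Because each split in $T_B$ divides a face with $|f|$ vertices into two faces of between $|f|/3$ and $2|f|/3$ vertices, the subtree rooted at $\tau$, and hence $T_\mu$, has height $O(\log r')$. Therefore the returned tree has height $O(\log r')$.

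I expect the main obstacle to be making the height argument rigorous against the internals of \cite{GUIBAS1989126,HERSHBERGER1991231}: confirming that the query assembles the path solely from structures within the subtree of the LCA (rather than walking from the root of $T_B$ and combining $\Theta(\log r)$ pieces) and that the height of the resulting balanced tree is governed by the LCA-subtree height rather than by $O(\log r)$. This is precisely the ``careful reading'' the statement alludes to. The $O(r')$ node bound, by contrast, follows cleanly from the containment argument together with the standard fact that geodesic shortest paths turn only at reflex vertices.
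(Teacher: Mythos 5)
Your proof is correct and fills in exactly the ``careful reading'' that the paper leaves implicit: the paper states this as an unproved observation justified only by citing Guibas and Hershberger, and your two ingredients---containment of the geodesic in $P_\tau$ (so the path turns only at the $O(r')$ reflex vertices of $P$ lying on $\partial P_\tau$) and the fact that the query assembles precomputed balanced pieces from the subtree of $T_B$ below the LCA of the two triangles, a subtree of height $O(\log r')$ by the $1/3$--$2/3$ balance of the decomposition---are precisely what that reading yields. The caveat you flag about the query internals is the right one to flag, and it resolves in your favour.
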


Guibas and Hershberger \cite{GUIBAS1989126}
and Oh and Ahn \cite{Oh2019} point out that given the trees
representing the shortest paths between a fixed source and two 
distinct destination points of a chord, 
the apex of their funnel can be
computed in $O(\log r)$ time.

\begin{observation}
\label{lemma: apex in log time}
The apex of a funnel 
from a source point in 
$P_{\tau}$
to the diagonal $\ell$ 
can be computed in
$O(\log r)$ time and $O( r')$ space.
\end{observation}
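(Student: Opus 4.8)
The plan is to reduce the computation of the apex to two shortest-path queries followed by a single tandem descent through the resulting trees, invoking the representations and techniques already in hand. First I would query the Guibas--Hershberger data structure (built on $P$) with the same source $p$ twice: once with destination $\ell_1$ and once with destination $\ell_2$, obtaining trees $T_1$ and $T_2$ that represent $\Pi(p,\ell_1)$ and $\Pi(p,\ell_2)$. Each query runs in $O(\log r)$ time, and since $p \in P_\tau$ and $\ell_1,\ell_2 \in \partial P_\tau$, \cref{lemma: sp query size} tells us that each of $T_1$ and $T_2$ has $O(r')$ nodes and height $O(\log r')$, so holding both in memory costs $O(r')$ space.

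Next, recall that the funnel of $p$ with respect to $\ell$ is $\Pi(p,\ell_1)\cup\Pi(p,\ell_2)\cup\ell$, and its apex $p_a$ is precisely the last vertex common to $\Pi(p,\ell_1)$ and $\Pi(p,\ell_2)$, i.e.\ the vertex at which the two paths diverge. Since the in-order traversal of $T_i$ spells out $\Pi(p,\ell_i)$ starting from $p$, locating $p_a$ amounts to finding the first position at which the in-order sequences of $T_1$ and $T_2$ disagree. I would do this with the technique of Guibas and Hershberger and of Oh and Ahn~\cite{GUIBAS1989126,Oh2019}, which was already recalled above: descend $T_1$ and $T_2$ in tandem from their roots, at each node performing a constant-size comparison of a representative vertex of the two path segments to decide whether the shared prefix still agrees (recurse toward the side of the trees nearer the diagonal) or has already split (recurse into the part that contains the divergence). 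Because the trees are balanced of height $O(\log r')$ and each step does $O(1)$ work, this locates $p_a$ in $O(\log r') = O(\log r)$ time, using only the space already occupied by $T_1$ and $T_2$, namely $O(r')$; the degenerate cases where $p_a = p$ (the paths split immediately, e.g.\ when $p$ lies on $\ell$) are handled by the same descent.

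The main obstacle is arguing that we never have to traverse the common subpath $\Pi(p,p_a)$ vertex by vertex, which could have $\Theta(r')$ edges; the logarithmic bound rests entirely on the fact that each path is stored in a balanced, $O(\log r')$-height tree whose in-order traversal is the path, and that at each node of the descent a constant-size comparison suffices to decide which subtree still contains the divergence. That is exactly the structural property established by the cited results, so the proof is a matter of applying them to $T_1$ and $T_2$ and accounting for the two query costs and the $O(r')$ tree sizes.
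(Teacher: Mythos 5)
Your proposal is correct and follows essentially the same route as the paper, which justifies this observation by citing Guibas--Hershberger and Oh--Ahn for exactly the tandem-descent technique you describe: query the two paths $\Pi(p,\ell_1)$ and $\Pi(p,\ell_2)$, use the $O(r')$-node, $O(\log r')$-height query trees, and locate the divergence vertex in $O(\log r)$ time. You simply spell out the mechanics in more detail than the paper, which defers them to the cited works.
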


\begin{definition}[{Aronov $1989$ \cite[Definition $3.1$]{Aronov89}}]
For any two points $u$ and $v$ of $P$, 
the last vertex (or $u$ if there is none) before $v$ on 
$\Pi(u, v)$ is referred to as the \textbf{anchor} of $v$ 
(with respect to $u$).
\end{definition}

\begin{definition}[{Aronov $1989$ \cite[Definition $3.7$]{Aronov89}}]
The \emph{shortest-path tree} of $P$ from a point $s$ of $P$, 
$T(P,s)$, is the
union of the geodesic shortest paths from $s$ to vertices of $P$.
This type of tree, which is a union of paths in the polygon,
should not be confused with the type of tree returned from a 
query to the shortest path data structure, 
which is a tree connecting data-structure nodes.
\end{definition}

\begin{definition}[{Aronov $1989$ \cite[paragraph between $3.8$ and $3.9$]{Aronov89}}]
Let $e$ be an edge of $T(P,s)$ and let its endpoint furthest from
$s$ be $v$. 
Let $h$ be the open half-line collinear with $e$ and 
extending from $v$ in the direction of increasing distance from $s$.
If some initial section of $h$ is contained in the interior of $P$, we 
will refer to the maximal such initial section as the 
\emph{\textbf{extension segment}} of $e$.
\end{definition}

\begin{definition}[{Aronov $1989$ \cite[Definition $3.9$]{Aronov89}}]
Let the collection of extension segments of edges of $T(P,s)$ be denoted
by $E(P,s)$.
\end{definition}

\begin{description}
	\item[Distance Function of a Point $\bm{u \in S_{\tau}}$] Next, let us review 
	the graph we get by
	plotting the distance from a point $u$ to a line $\ell$
	where the position along $\ell$ is parameterized by $x$.
	Abusing notation, we call the $x$-monotone 
	curve representing this graph 
	the \emph{\textbf{distance function}}.
	Without loss of generality, we can assume that the $x$-axis is the line
	in question.
	For a point $u$, $u_x$ is the $x$-coordinate of $u$ and $u_y$ is its 
	$y$-coordinate.
	This distance function is actually a branch of a
	\emph{right hyperbola}\footnote{Also called a \emph{rectangular} or
	\emph{equilateral} hyperbola.} 
	whose eccentricity is $\sqrt{2}$ and whose focus
	is therefore at $\sqrt{2}\cdot u_y$.
	In our polygon 
    $P_{\tau}$,
	this distance function measuring the distance
	from a point 
    $u \in S_{\tau}$
	to a chord $\ell$ of the polygon becomes a 
	continuous piecewise hyperbolic function.
	If the funnel from $u$ to the endpoints $\ell_1$ and $\ell_2$ 
	of $\ell$ is trivial
	(i.e., a Euclidean triangle), 
	then the distance function of $u$ to $\ell$ has one piece 
	expressed as $\operatorname{dist}_u(x)=\sqrt{(x - u_x)^2 + u_y^2}$.
	We write $\operatorname{dist}_u(\cdot)$ to refer to the function
	in general.
	If, on the other hand, there are reflex vertices of $P$  
	in $u$'s funnel, the distance function has multiple pieces.
	The formula for each piece is
	$\operatorname{dist}_u(x)=\sqrt{(x - w_x)^2 + w_y^2} + d_g(u,w)$, where $w$ is the 
	anchor (i.e., the last vertex of $P$ on $\Pi(u,x)$), and
	the \emph{\textbf{domain}} of this hyperbolic piece is the set of values of $x$ for which $w$ is 
	the anchor. 
    The domain
    is defined by the intersection of $\ell$ with the 
	two extension segments of 
    $E(P_{\tau}, u)$
	that go through $w$, i.e., without loss of generality, 
	if $w$ is on the path from $u$ to $\ell_1$ but is not the apex of the funnel,
	then we are referring to the extension segment through $w$ and the predecessor
	of $w$ on $\Pi(u, \ell_1)$, and the extension segment through $w$ and the
	successor of $w$ on $\Pi(u, \ell_1)$.
	The domain of the apex is similarly defined, except there will be one extension
	segment through the apex and its successor in $\Pi(u, \ell_1)$, and
	one through the apex and its successor in $\Pi(u, \ell_2)$.	
    Given a vertex of the funnel, 
    we can build the hyperbolic piece associated with the vertex
    in $O(1)$ time and space
    since the shortest-path query data structure of 
    Guibas and Hershberger \cite{GUIBAS1989126}
    stores the distance from the source.
	Refer to Fig.~\ref{fig: dist function fig} for an example of a multi-piece
	distance function.
\end{description}	

\begin{figure}
\includegraphics[page=1,scale=0.5]{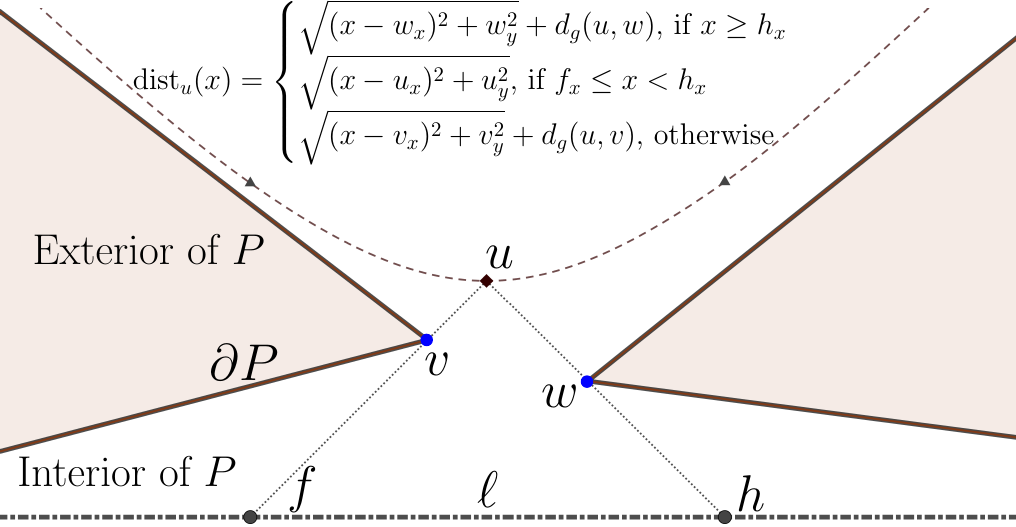}
\centering
\caption{Considering the chord $\ell$ of $P$ to be the $x$-axis,
given a point $u \in S$
we refer to the dashed 
graph of the function
$\operatorname{dist}_u(\cdot)$ as the distance function
of $u$ to $\ell$.
The points $f$ and $h$ on $\ell$ mark where different pieces of 
$\operatorname{dist}_u(\cdot)$ begin. 
}
\label{fig: dist function fig}
\end{figure}

Consider a point 
$u \in S_{\tau}$
and the subset 
$E \subseteq E(P_{\tau}, u)$
whose elements form
the domains of the pieces of $\operatorname{dist}_u(\cdot)$ along $\ell$.
For a given $e \in E$, we refer to $e \cap \ell$ as a \emph{\textbf{domain marker}} 
(or just \emph{\textbf{marker}}).
Sometimes we will need to identify domains
that have specific properties so that an appropriate hyperbolic piece of
$\operatorname{dist}_u(\cdot)$ can be analyzed.
Similar to other papers that find intervals of interest along 
shortest paths and chords
\cite{DBLP:conf/compgeom/AgarwalAS18,DBLP:journals/corr/abs-1803-05765,DBLP:journals/corr/ArgeS17,Oh2019},
we can use the funnel between $u$ and $\ell$ to perform a binary search among the domain markers
to find a domain of interest.
We have the following observation.

\begin{restatable}{observation}{intervalSearchLemma}
\label{lemma: logr time interval search}
	For an extension segment $e \in E$, 
 	if it takes $O(1)$ time and space to 
	determine which side of $\ell \cap e$
	contains a domain of interest along $\ell$,
	then we can find a domain of 
	interest along $\ell$ and its corresponding hyperbolic piece
	of $\operatorname{dist}_u(\cdot)$
 	in $O(\log r )$ time
 	and $O(r')$ space.
\end{restatable}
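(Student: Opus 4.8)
The plan is to reduce the search for a domain of interest along $\ell$ to a binary search over the extension segments of $E$, exploiting the convexity of the funnel from $u$ to $\ell$. First I would recall the structure of the funnel: the apex $p_a$ of the funnel splits $\ell$'s relevant portion into (at most) two inward-convex chains, one toward $\ell_1$ and one toward $\ell_2$, each passing through an ordered sequence of reflex vertices of $P_{\tau}$. The extension segments in $E$ emanating from these vertices partition $\ell$ into the domains of the hyperbolic pieces of $\operatorname{dist}_u(\cdot)$, and crucially the markers $e\cap\ell$ appear along $\ell$ in the same order as their defining vertices appear along the funnel chains. This ordering is what makes a binary search possible.

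The key steps, in order, are: (1) Query the shortest-path data structure for $\Pi(u,\ell_1)$ and $\Pi(u,\ell_2)$, obtaining the two $O(\log r)$-height trees; by \cref{lemma: apex in log time} compute the apex $p_a$ in $O(\log r)$ time and $O(r')$ space, which tells us on which side of $\ell$ the domain of interest lies, or lets us handle each side separately. (2) On the appropriate chain, perform a binary search: the balanced tree returned by the shortest-path query already encodes the chain of vertices as a balanced binary structure of height $O(\log r)$, so at each node we can extract the vertex $w$ sitting there, build in $O(1)$ time the extension segment(s) through $w$ (the data structure stores $d_g(u,w)$ and the predecessor/successor of $w$ on the chain, as noted in the distance-function discussion), compute $e\cap\ell$, and by hypothesis decide in $O(1)$ time which side of this marker contains the domain of interest. (3) Descend accordingly; after $O(\log r)$ steps we land on the vertex $w$ whose two flanking extension segments bound the domain of interest, and we output that domain together with the hyperbolic piece $\operatorname{dist}_u(x)=\sqrt{(x-w_x)^2+w_y^2}+d_g(u,w)$, built in $O(1)$ time.

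For the bookkeeping I would note that the traversal of the query tree to reach a given vertex of the funnel chain, and the evaluation of the predecessor/successor of a vertex on that chain, are both $O(\log r)$-time operations supported by the Guibas–Hershberger structure (this is the same kind of traversal used to extract the first or last edge of a shortest path, or its midpoint, as stated earlier in the excerpt), so the whole search is $O(\log r)$ time; the space is dominated by holding the $O(r')$-node query trees, giving $O(r')$ space.

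The main obstacle I anticipate is justifying that the comparison at each step of the binary search is consistent — that is, that ``the side of $\ell\cap e$ containing a domain of interest'' is well-defined and monotone along the chain, so that the $O(1)$-time local test actually steers the search to the correct domain rather than getting stuck at a local decision. This follows from the inward-convexity of the funnel chain: the extension segments fan out monotonically, so their intersections with $\ell$ are sorted in the chain order, and any ``domain of interest'' that is characterized by a property whose truth value changes monotonically (which is the intended use in the applications of \cref{lemma: logr time interval search}) will lie to one consistent side of each marker. I would make this monotonicity precise and then the binary search correctness is immediate; the remaining details (the apex-side split, the $O(1)$ piece construction from stored distances) are routine given the earlier setup.
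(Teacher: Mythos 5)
Your proposal is correct and follows essentially the same route the paper intends for this observation: it is exactly the binary search over the funnel's domain markers, descending the $O(\log r)$-height tree returned by the Guibas--Hershberger query (after locating the apex), with the $O(1)$-time side test supplied by hypothesis at each marker and the $O(r')$ space coming from the query trees. The paper leaves this argument implicit (stating it as an observation after describing the binary-search idea), and your added remark on the monotone ordering of the markers along $\ell$ induced by the inward-convex chains is the same justification the paper relies on when it applies this observation in its subsequent lemmas.
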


\subsection{Computing Projections}
\label{subsubsec: testing closest points}
After briefly re-defining $\ell$ in 
\cref{app: discussing shortest paths and distance functions},
we now re-adjust our perspective by resetting $\ell$ to be the
diagonal stored in $\tau$ that splits $P_{\tau}$.

Before the main while-loop of 
\ref{alg: merge-me} 
begins,
we precompute 
$S^{c}_{\tau}$, 
i.e., 
the closest point of $\ell$ to $u$
for each point 
$u \in S_{\tau}$.
Recall that $|P_{\tau}|= r'$ and $|S_{\tau}| = n'$.
As before, let $u_c$ be the closest point of $\ell$ to $u$.
Equivalently, $u_c$ is the point along
$\ell$ that minimizes $\operatorname{dist}_u(\cdot)$.

\begin{restatable}{lemma}{logrClosestPoint}
\label{lemma: logr closest point}
In $O(n'\log r)$ time and $O(n' + r')$ space we compute the closest point of $\ell$ to each 
$u \in S_{\tau}$.
\end{restatable}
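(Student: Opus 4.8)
The plan is to compute, for each point $u \in S_{\tau}$, the closest point $u_c$ of the chord $\ell$ by exploiting the funnel structure from $u$ to $\ell$ together with the shortest-path data structure. The key observation is that $u_c$ is the point along $\ell$ minimizing the piecewise-hyperbolic distance function $\operatorname{dist}_u(\cdot)$, and that this function is convex on each hyperbolic piece (each piece is a branch of a right hyperbola, hence convex) and globally convex along $\ell$ (it is the lower envelope structure of geodesic distances, which is unimodal along a chord by geodesic convexity). Therefore the minimum of $\operatorname{dist}_u(\cdot)$ lies in a single domain, and locating that domain is exactly the kind of binary search over domain markers described in \cref{lemma: logr time interval search}.

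\textbf{First I would} build the funnel from $u$ to the two endpoints $\ell_1, \ell_2$ of $\ell$. By \cref{lemma: apex in log time} this (its apex) is obtained in $O(\log r)$ time and $O(r')$ space via the Guibas–Hershberger structure, and the returned shortest-path trees to $\ell_1$ and $\ell_2$ have $O(r')$ nodes and $O(\log r')$ height by \cref{lemma: sp query size}. \textbf{Next I would} perform a binary search over the domain markers (the intersections of $\ell$ with the extension segments of $E(P_\tau, u)$ passing through funnel vertices): for a candidate marker $e \cap \ell$, I determine in $O(1)$ time and space which side contains the minimum of $\operatorname{dist}_u(\cdot)$ — this is a local test on the relevant hyperbolic piece (check whether the piece is still decreasing or already increasing at that marker, i.e., compare the foot of the perpendicular / vertex of the hyperbola against the marker's position), which I can do because each hyperbolic piece is buildable in $O(1)$ time given the anchor vertex and its stored distance to $u$. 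By \cref{lemma: logr time interval search} this search finds the domain of interest and its hyperbolic piece in $O(\log r)$ time and $O(r')$ space. \textbf{Finally}, within that single domain I minimize the one hyperbolic piece $\operatorname{dist}_u(x) = \sqrt{(x-w_x)^2 + w_y^2} + d_g(u,w)$ in $O(1)$ time: its unconstrained minimizer is $x = w_x$, and I clamp to the domain interval to get $u_c$.

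\textbf{The hard part will be} the $O(1)$-time side-determination test needed to invoke \cref{lemma: logr time interval search} — I must argue carefully that from the anchor vertex $w$ associated with a marker and the stored geodesic distance $d_g(u,w)$, I can decide in constant time whether the global minimum of $\operatorname{dist}_u(\cdot)$ lies to the left or right of that marker. This reduces to: the minimum is left of the marker iff the foot $x = w_x$ of the hyperbola for the current piece lies at or to the left of that piece's domain (accounting for which of $\Pi(u,\ell_1)$ or $\Pi(u,\ell_2)$ the anchor sits on, and the apex case separately), and otherwise to the right; unimodality of $\operatorname{dist}_u(\cdot)$ then guarantees a consistent global decision. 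Once this is in place, summing over the $n'$ points of $S_{\tau}$ gives $O(n' \log r)$ time; the $O(r')$ working space for each funnel query is reused across iterations, and storing the $n'$ resulting projections takes $O(n')$ space, for a total of $O(n' + r')$ space, as claimed. I would also note that computing the initial triangulation-based point location and the $T_B$ augmentation is already done in preprocessing, so nothing extra is needed here beyond the $n'$ funnel queries.
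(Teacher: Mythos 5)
Your proposal follows essentially the same route as the paper's proof: build the funnel from $u$ to $\ell$, locate the apex via \cref{lemma: apex in log time}, binary-search the domain markers using \cref{lemma: logr time interval search} with a constant-time unimodality test justified by the convexity of $\operatorname{dist}_u(\cdot)$ (Pollack et al.), and then minimize the single resulting hyperbolic piece in $O(1)$, with the same accounting of $O(n')$ output space plus $O(r')$ reused query space. The only difference is cosmetic: your side-determination test uses the sign of the derivative of the current piece (the position of the foot of the perpendicular from the anchor $w$) where the paper compares the angle that the extension segment makes with $\ell$ against $\pi/2$; these are equivalent formulations of the same geometric fact.
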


\begin{proof}
Consider a point $p \in \ell$ and the last edge $e$ of $\Pi(u,p)$ from 
some 
$u \in S_{\tau}$
to $p$ (i.e., the edge to which $p$ is incident).
Let the \emph{angle of $e$} be the smaller of the two angles
formed by $e$ and $\ell$ at $p$. 
The range of this angle is $[0, \pi / 2]$.
We know from Pollack et al.\ \cite[Corollary $2$]{DBLP:journals/dcg/PollackSR89} that
$\operatorname{dist}_u(\cdot)$ is minimized when $e$ is perpendicular to $\ell$.
We also know from Pollack et al.\ \cite[Corollary $2$]{DBLP:journals/dcg/PollackSR89} 
that given $p' \in \ell$ and an edge $e'$ analogous to $e$, if the angle 
of $e'$ is closer to $\pi/2$ than that of $e$, then 
$\operatorname{dist}_u(p') < \operatorname{dist}_u(p)$.
Lastly, we know from 
Pollack et al.\ \cite[Lemma $1$]{DBLP:journals/dcg/PollackSR89}  
that $\operatorname{dist}_u(\cdot)$ is a convex function which means 
it has a global minimum.

Recall the notions of 
\emph{domains} of funnel vertices along $\ell$
and
\emph{markers} of extension segments collinear with funnel edges.
Given the funnel of $u$ and $\ell$,
by \cref{lemma: apex in log time} we can find 
the apex $u_a$ of the funnel  in $O(\log r)$ time
and $O(r')$ space.
The convex chains $\Pi(u_a, \ell_1)$ and $\Pi(u_a, \ell_2)$ determine
the hyperbolic pieces of $\operatorname{dist}_u(\cdot)$.
Given $u_a$, 
a careful reading of Guibas and Hershberger \cite{GUIBAS1989126,HERSHBERGER1991231} shows
we can compute the representations of $\Pi(u_a, \ell_1)$ and $\Pi(u_a, \ell_2)$
with shortest-path queries
in $O(\log r')$ time and $O(r')$ space.
By \cref{lemma: logr time interval search} 
we can use these convex chains to perform a binary search along $\ell$ to find the 
domain in which $u_c$ lies (see \cref{fig: funnel-search}).
This domain has the property that
the angle of the last edge on the shortest path 
from $u$ to the points in this domain
is closest to $\pi / 2$.

In the binary search,
at each 
marker
(as determined by the node currently being visited in the tree representing the convex chain), 
in $O(1)$ time and space we compute the angle of 
$\ell$ with the extension segment defining the marker.
Since $\operatorname{dist}_u(\cdot)$ is a convex function, 
we know that as we slide a point $p \in \ell$ from $\ell_1$ to $\ell_2$, 
the angle of 
the edge incident to $p$ on $\Pi (u,p)$
will monotonically increase until it reaches $\pi / 2$, then monotonically decrease.
Thus, after computing the angle of 
the extension segment with $\ell$,
we know to which side of 
its marker
to continue our search:
the side that contains the smaller angle 
(because moving in this direction will increase the smaller angle).
Thus by \cref{lemma: logr time interval search}
the search takes $O(\log r)$ time and $O(r')$ space. 

At the end of our search 
we will have the reflex vertex whose domain contains the edge 
that achieves the angle closest to $\pi/2$.
Then in $O(1)$ time and space we can build the corresponding 
piece of $\operatorname{dist}_u(\cdot)$ and find the value along $\ell$ that
minimizes it.

The space bounds follow from the $n'$ 
projections
that are computed 
and the $O(r')$ space used by the shortest-path data structure queries.
\end{proof}

\begin{figure}[t]
		\includegraphics[width=\linewidth]{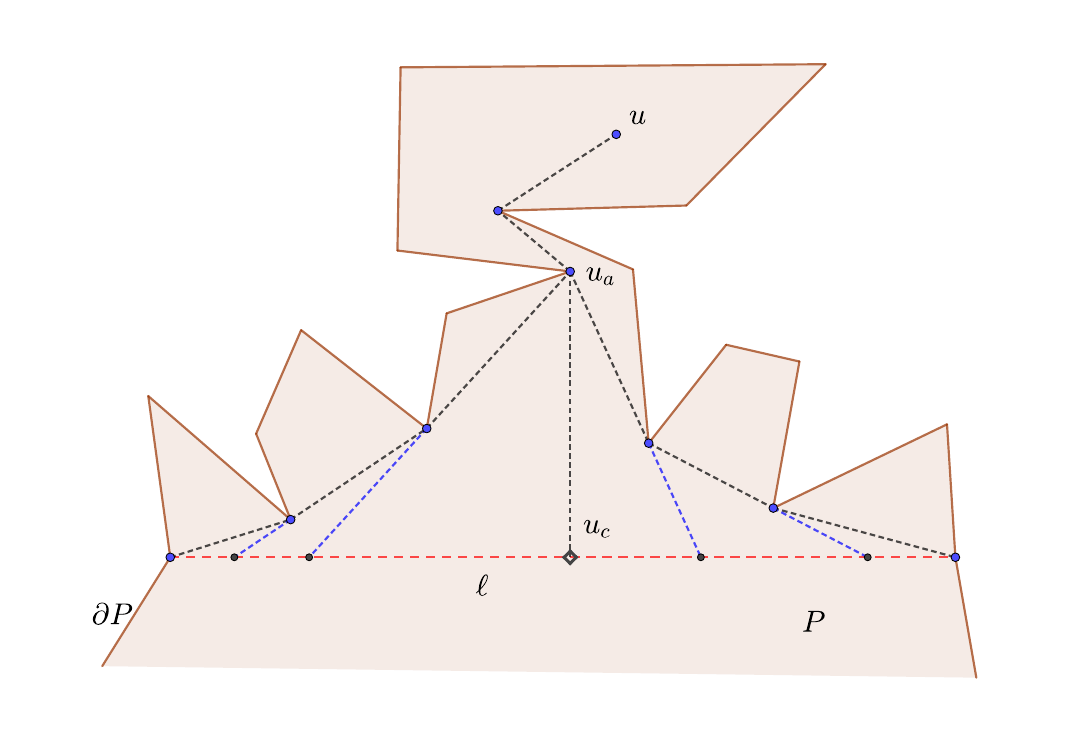}
		\caption{The funnel from $u$ to the endpoints of $\ell$, 
		including the apex $u_a$ and the projection $u_c$ of $u$ 
		onto $\ell$.
		Also seen are the extensions of funnel edges (in blue)
		and their intersection points with $\ell$.
		These intersection points can be used to perform a binary 
		search along $\ell$.} \label{fig: funnel-search}
\end{figure}

\subsection{Intersecting the Chord with Discs}
\label{subsec: decision prob par s}
Let $\partial D(u,\rho)$ be the boundary of
the disc $D(u,\rho)$
for a $u \in S_{\tau}$.
Consider the funnel of $u \in S_{\tau}$ and $\ell$
and its associated domain markers.
Since domain markers are points along $\ell$, 
they provide distances 
against which to compare.
We can use these distances when searching
for points on $\ell$ that are a specified distance away from 
the point 
$u$
(which is the point used to 
define the funnel and hence the markers).
For any point 
$u \in S_{\tau}$,
the distance to $\ell$ increases monotonically as we move
from its 
projection $u_c$
to the endpoints of $\ell$
\cite{DBLP:journals/dcg/PollackSR89}.
Thus if we are given a radius $\rho$, 
we can use 
two funnels to
perform two binary searches among these markers
between $u_c$ and the endpoints of $\ell$
(using \cref{lemma: logr time interval search}, as with \cref{lemma: logr closest point})
to find the at most two domains which contain 
 $\partial D(u,\rho) \cap \ell$.
If we have a particular radius in mind, it takes constant
time and space to test to which side of a domain marker our 
radius lies (since we know $u_c$, and that the distance from $u$ to $\ell$
between $u_c$ and the endpoints of $\ell$ increases monotonically).

As a subroutine for
\ref{alg: merge-me},
for a given radius $\rho$,
for each $u$ in $\mathbb{S}$ we compute the interval $I(u,\rho) = D(u,\rho) \cap \ell$.

\begin{restatable}{lemma}{compIntervals}
\label{lemma: comp intervals}
In $O(|\mathbb{S}|\log r)$ time
and $O(|\mathbb{S}| + r')$ space
we compute $I(u,\rho)$ for each $u$ in the given set $\mathbb{S}$
using the given radius $\rho$.
\end{restatable}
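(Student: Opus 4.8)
The plan is to process each point $u \in \mathbb{S}$ independently in $O(\log r)$ time while reusing $O(r')$ scratch space for the shortest-path queries, so that the total comes to $O(|\mathbb{S}|\log r)$ time and $O(|\mathbb{S}| + r')$ space. First note that the projections $S^{c}_{\tau}$ (and hence, for each $u$, both $u_c$ and the value $d_g(u,u_c)$) have already been computed by \cref{lemma: logr closest point}. If $\rho < d_g(u,u_c)$ then $D(u,\rho)\cap\ell = \emptyset$ and we record $I(u,\rho)=\emptyset$ in $O(1)$ time. Otherwise, because $\operatorname{dist}_u(\cdot)$ is convex with its unique minimum at $u_c$ (Pollack et al.\ \cite[Lemma $1$, Corollary $2$]{DBLP:journals/dcg/PollackSR89}), the set $\{x \in \ell : \operatorname{dist}_u(x) \le \rho\}$ is a single contiguous subinterval of $\ell$ containing $u_c$, whose two endpoints are the (at most two) points of $\partial D(u,\rho) \cap \ell$, with the convention that an endpoint of $\ell$ is taken when $\partial D(u,\rho)$ does not cross $\ell$ on that side.

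To find each of the two endpoints I would recover the funnel of $u$ and $\ell$: by \cref{lemma: apex in log time} the apex $u_a$ is obtained in $O(\log r)$ time and $O(r')$ space, and then the trees representing the convex chains $\Pi(u_a,\ell_1)$ and $\Pi(u_a,\ell_2)$ are obtained with shortest-path queries in $O(\log r')$ time and $O(r')$ space, exactly as in the proof of \cref{lemma: logr closest point}. Walking from $u_c$ toward $\ell_1$, $\operatorname{dist}_u(\cdot)$ increases monotonically, so the left endpoint of $I(u,\rho)$ lies in a unique domain of $\operatorname{dist}_u(\cdot)$; I would locate this domain by a binary search over the domain markers along the chain $\Pi(u_a,\ell_1)$ using \cref{lemma: logr time interval search}. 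The side test at a marker $e \cap \ell$ is $O(1)$: the Guibas--Hershberger structure annotates each funnel vertex with its distance to the source, so for the known vertex $w$ with which the extension segment $e$ is collinear we have $d_g(u,\, e\cap\ell) = d_g(u,w) + \lVert w - (e\cap\ell)\rVert$ in $O(1)$ time, and comparing this value to $\rho$ tells us on which side of the marker the crossing lies (toward $u_c$ if the marker distance exceeds $\rho$, away from $u_c$ otherwise). Once the correct domain is found, its single hyperbolic piece $\operatorname{dist}_u(x) = \sqrt{(x-w_x)^2 + w_y^2} + d_g(u,w)$ is built in $O(1)$ time and the equation $\operatorname{dist}_u(x)=\rho$ is solved in closed form for the left endpoint; if the search reaches $\ell_1$ before the distance reaches $\rho$, the left endpoint is $\ell_1$. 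The right endpoint is found symmetrically along $\Pi(u_a,\ell_2)$. Each search is $O(\log r)$ time and $O(r')$ space by \cref{lemma: logr time interval search}, and the $O(r')$ query space is released and reused before the next point of $\mathbb{S}$.

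Summing over the $|\mathbb{S}|$ points gives $O(|\mathbb{S}|\log r)$ time, while the space is $O(|\mathbb{S}|)$ to store the output intervals plus the $O(r')$ reused query space, i.e.\ $O(|\mathbb{S}| + r')$. The step that needs the most care is confirming that the $O(1)$-time side test genuinely satisfies the hypothesis of \cref{lemma: logr time interval search}: given only a node of the chain-tree (hence a funnel edge, its far endpoint $w$, and the stored value $d_g(u,w)$), one must be able to evaluate and compare $d_g(u,\, e\cap\ell)$ to $\rho$ without unpacking the whole path, which is exactly what the distance annotations on path vertices provide. A secondary point to dispatch is the degenerate case in which the funnel is a Euclidean triangle: then $\operatorname{dist}_u(\cdot)$ has a single piece, no binary search is needed, and both endpoints are recovered directly by solving one quadratic.
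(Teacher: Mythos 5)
Your proposal is correct and follows essentially the same route as the paper's proof: use the convexity of $\operatorname{dist}_u(\cdot)$ and the precomputed projection $u_c$ to reduce the problem to locating at most two crossing points, one on each side of $u_c$, by binary search over the funnel's domain markers (as in \cref{lemma: logr closest point}), then solve the single relevant hyperbolic piece in $O(1)$ time. The only difference is that you spell out the $O(1)$ side test at a marker (comparing $d_g(u,w)+\lVert w-(e\cap\ell)\rVert$ to $\rho$) more explicitly than the paper does, which is a welcome clarification rather than a deviation.
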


\begin{proof}
Consider the geodesic disc of radius $\rho$, $D(u,\rho)$.
Since the disc is geodesically convex, if the chord intersects the disc
in only one point, it will be at the projection $u_c$.
If it does not intersect the disc, then at $u_c$ the distance
from $u$ to $\ell$ will be larger than $\rho$.
Otherwise, if the chord intersects the disc in two points, 
$u_c$ splits $\ell$ up into two intervals, each with one 
intersection point
(i.e., each one contains a point of $\partial D(u,\rho) \cap \ell$). 
If $u_c$ is an endpoint of $\ell$, assuming $\ell$ has positive length,
one of these intervals may degenerate into a point, making $u_c$ coincide with
one of the intersection points.

These two intervals to either side of $u_c$ have the property that on one side of the intersection
point contained within, the distance from $u$ to $\ell$ is larger than $\rho$, and
on the other side, the distance is less than $\rho$.
Therefore, if $\partial D(u, \rho)$ does intersect $\ell$ in two points
we can proceed as in the proof of \cref{lemma: logr closest point}:
in $O(\log r)$ time and $O(r')$ space we can build the two funnels of $u$ 
between $u_c$ and the endpoints of $\ell$ (truncated at the apices)
and then perform a binary search in each to locate the domain in which 
a point at distance $\rho$ lies.
We find the subinterval delimited by the domain markers of the 
reflex vertices
wherein the distance from $u$ to $\ell$ changes from being 
more (less) than $\rho$ to being less (more) than $\rho$.
Once we find this domain,
we can compute $\partial D(u,\rho) \cap \ell$ in $O(1)$ time and space.
\end{proof}

Once the intervals $ \{I(u,\rho) \mid u \in \mathbb{S}\}$ are computed,
we compute the overlay of these intervals as well as
with $\mathbb{C}$ and compute the depth of the elements of $\mathbb{C}$
(i.e., how many intervals contain the element in question).
See \cref{fig: int depths}.

\begin{figure}
		\includegraphics[width=\linewidth]{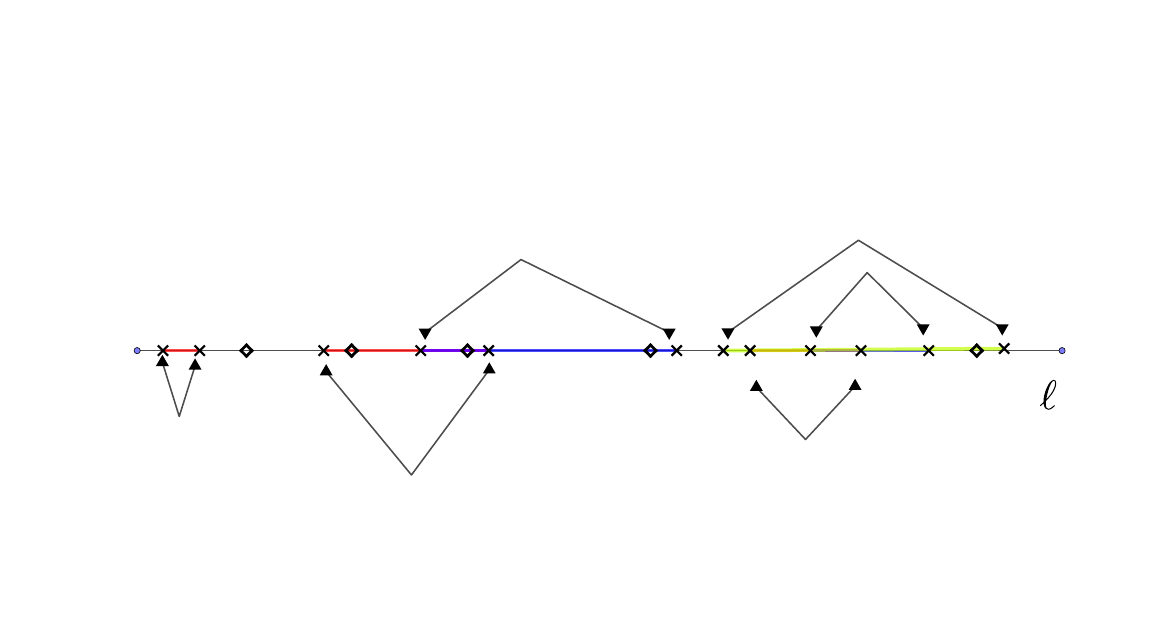}
		\caption{The overlay of intervals 
		$\{I(u,\rho) \mid u \in \mathbb{S}\}$ and $\mathbb{C}$. 
		Intervals (represented with various colours) 
		are delimited by ``$\times$'' symbols with corresponding endpoints 
		matched by conjoined arrows.
		The elements of $\mathbb{C}$ are marked by hollow black diamonds.} \label{fig: int depths}
\end{figure}

\begin{restatable}{lemma}{decisionAlg}
\label{lemma: decision alg}
Given 
$\{I(u,\rho) \mid u \in \mathbb{S}\} \cup \mathbb{C}$, 
in $O( (|\mathbb{S}| + |\mathbb{C}| )\log (|\mathbb{S}| + |\mathbb{C}| ) )$ time
and $O(|\mathbb{S}| + |\mathbb{C}| )$ space:
\begin{itemize}
    \item we compute the overlay $\bigcup_{u \in \mathbb{S}} \{I(u,\rho)\} \cup \mathbb{C}$ of those intervals and $\mathbb{C}$;
    \item we compute the depths of each $w \in \mathbb{C}$.
\end{itemize}
\end{restatable}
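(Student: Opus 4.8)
The plan is a one-dimensional plane sweep along $\ell$. Write $N = |\mathbb{S}| + |\mathbb{C}|$. First I would parameterize $\ell$ by position (as before, by the $x$-coordinate) and build a single array of $O(N)$ events: for each $u \in \mathbb{S}$, the left and right endpoints of $I(u,\rho)$, tagged respectively as an \emph{enter} ($+1$) and an \emph{exit} ($-1$) event and carrying a back-pointer to $u$; and for each $w \in \mathbb{C}$, a \emph{query} event carrying a back-pointer to $w$. A degenerate interval $I(u,\rho) = \{u_c\}$ (the case where $\partial D(u,\rho)$ meets $\ell$ only at the projection of $u$) contributes an enter and an exit event at the same coordinate. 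Since the intervals and the set $\mathbb{C}$ are given, building this array takes $O(N)$ time and $O(N)$ space.

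Next I would sort the event array by coordinate along $\ell$, breaking ties so that at a shared coordinate all enter events precede all query events, which in turn precede all exit events (ties within a group resolved arbitrarily). This sort costs $O(N\log N)$ time and $O(N)$ space and is the only super-linear step. The tie-breaking is exactly what makes closed-interval containment come out right: $w \in I(u,\rho)$ iff the coordinate of $w$ lies between the two endpoints of $I(u,\rho)$ \emph{inclusive}, so an interval whose left (resp.\ right) endpoint coincides with $w$ must be counted at $w$, and ordering enters before queries before exits achieves precisely that.

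Then I would sweep the sorted list from $\ell_1$ to $\ell_2$ maintaining an integer counter $d$ initialized to $0$: on an enter event increment $d$, on an exit event decrement $d$, and on a query event for $w$ record $\mathrm{depth}(w) \leftarrow d$. At the same time I emit the overlay $\bigcup_{u \in \mathbb{S}} \{I(u,\rho)\} \cup \mathbb{C}$: the distinct event coordinates in order are the $0$-cells of the subdivision of $\ell$, and each open cell between two consecutive $0$-cells inherits the current value of $d$ as its depth (cells of depth $0$ may be suppressed). The sweep runs in $O(N)$ time and $O(N)$ space. Correctness of the recorded depths follows from the invariant that, at the moment the counter has absorbed all enter and exit events with coordinate strictly less than $x$ together with all enter events at coordinate $x$ but no exit event at coordinate $x$, the value of $d$ equals $|\{u \in \mathbb{S} : x \in I(u,\rho)\}|$, i.e., the number of discs $D(u,\rho)$ containing the point of $\ell$ at coordinate $x$; by the tie-breaking, a query event for $w$ is processed exactly at such a moment, so $\mathrm{depth}(w)$ is the depth of $w$ as defined. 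Altogether this is $O(N\log N) = O((|\mathbb{S}| + |\mathbb{C}|)\log(|\mathbb{S}| + |\mathbb{C}|))$ time and $O(|\mathbb{S}| + |\mathbb{C}|)$ space.

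There is no genuinely hard part here: the only things to handle with care are the degenerate case $I(u,\rho) = \{u_c\}$ and, more generally, the behaviour at coordinates shared by several endpoints and/or points of $\mathbb{C}$, both of which the fixed enter/query/exit ordering handles uniformly. Everything else is a textbook one-dimensional sweep, and no auxiliary search structure is needed, since the events live on a line and a scalar counter suffices, leaving the initial sort as the dominant cost.
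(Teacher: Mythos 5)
Your proposal is correct and follows essentially the same approach as the paper: sort the interval endpoints of $\{I(u,\rho) \mid u \in \mathbb{S}\}$ together with the points of $\mathbb{C}$ in $O((|\mathbb{S}|+|\mathbb{C}|)\log(|\mathbb{S}|+|\mathbb{C}|))$ time, then perform a linear sweep along $\ell$ maintaining a counter of currently open intervals and assigning the running depth to each element of $\mathbb{C}$ as it is encountered. Your explicit treatment of tie-breaking (enter, then query, then exit) and of degenerate single-point intervals is a welcome refinement of details the paper leaves implicit, but it does not change the argument.
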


\begin{proof}
We are given $O( |\mathbb{S}| )$
labelled intervals along $\ell$, one for each geodesic disc of 
radius $\rho$ centred
on each $u \in \mathbb{S}$.
In other words, the set of these intervals is 
$\{D(u, \rho) \cap \ell : u \in \mathbb{S}\}$.
We sort the points of $\mathbb{C}$ and these interval endpoints
in $O((|\mathbb{S}| + |\mathbb{C}|) \log (|\mathbb{S}| + |\mathbb{C}|))$ time 
and $O(|\mathbb{S}| + |\mathbb{C}|)$ space,
associating each interval endpoint with the interval it opens or closes as well as
the other endpoint for the interval, and the point of $\mathbb{S}$
whose disc created the interval.

When we walk along $\ell$ and enter the interval $D(u, \rho) \cap \ell$ for 
some $u \in \mathbb{S}$,
we say we are in the disc of $u$.
Our last step, done in $O(|\mathbb{S}| + |\mathbb{C}|)$ time and space, 
is to walk along $\ell$ and count the 
number of discs of $\mathbb{S}$ 
we are concurrently in at any given point.
In other words, we count the 
number of overlapping intervals 
(i.e., the depths of the subintervals) for points of $\mathbb{S}$.
As we walk along $\ell$,
when we encounter an element of $\mathbb{C}$
we assign to it the current depth.
\end{proof}

\section{Concluding Remarks}
\label{sec: concluding remarks}

In this paper we have described two methods for computing a 
$2$-SKEG disc
using randomized algorithms. 
With \ref{alg: DI},
we find a $2$-SKEG disc 
using $O(n + m)$ expected space\footnote{The space is considered expected 
because in the base case of the Divide-and-Conquer approach we run the planar approximation
algorithm which expects to use space 
linear in the number of input points. \cite{har2011geometric,har2005fast} }
in $O(n\log^2 n \log r  + m)$ expected time;
and using \ref{alg: RSAlgo},
in deterministic time that is faster
by polylogarithmic factors when $k \in \omega\left(n / \log n\right)$,
we find
a $2$-approximation with high probability 
using worst-case deterministic space $O(n+m)$. 
We leave as an open problem to solve the $2$-SKEG 
problem in $O(n\log r + m)$ time.

At first glance, one may be surprised that the runtime of \ref{alg: DI}
presented in \cref{cor: DC using quickselect} is an expression 
independent of $k$.
One might expect a dependency on $k$ to appear in the runtime of the base case
or the merge step when computing an approximation to a SKEG disc.
In the base case we avoid a dependence on $k$ by using the (expected) 
linear-time $2$-approximation algorithm for planar instances \cite{har2005fast}.
The first few operations in the merge step are projecting the (remaining) candidate points of $S$ 
onto a chord $\ell$, picking a candidate at random, and finding its $k$\textsuperscript{th}-closest 
point from the set $S$.
Here we avoid a $k$ in the runtime by computing (in a brute-force style) all the distances 
from our random candidate to the points of $S$, and then using a linear-time rank-finding
algorithm \cite{DBLP:conf/wea/Alexandrescu17,DBLP:journals/jcss/BlumFPRT73,DBLP:books/daglib/0023376} 
to find the $k$\textsuperscript{th}-smallest distance.
There is an opportunity here to sensitize the runtime of \ref{alg: DI} to $k$ 
(and thus possibly make it more efficient) by finding a clever way to 
find this $k$\textsuperscript{th}-closest point.
The final part of the merge step is to use the discovered distance as a radius
to compute the intersections of geodesic discs centred on points of $S$ with $\ell$,
then perform a plane sweep over the resulting overlapping intervals to count their depth.
However, even if some of the runtimes of these operations are able to be sensitized to $k$ 
it may not help the overall runtime of the merge step
unless a candidate set (our $n$ projections) can be computed in quicker than $O(n\log r)$ time,
and unless the filtering of remaining candidates 
(our intersection of discs with $\ell$ and counting the resulting depths) can be done 
in quicker than $O(n\log n)$ time.

\subsection{Using \texorpdfstring{$k$}{k}-Nearest Neighbour Queries}
\label{subsec: knn disc}

In this section, we discuss modifying \ref{alg: RSAlgo} and \ref{alg: DI}
to use the $k$-nearest neighbour query data structure of de~Berg and Staals
\cite{DEBERG2023101976} and compare the runtimes of these modified versions to 
the original runtimes presented above.

\subsubsection{\ref{alg: RSAlgo}:}
\label{subsubsec: knn rs}

For \ref{alg: RSAlgo},
rather than explicitly computing the distances from the points
in the random sample to the rest of the sites to find the 
$(k-1)$\textsuperscript{st}-closest neighbour in $S$, we can use the 
$k$-nearest neighbour query data structure of de~Berg and Staals
\cite{DEBERG2023101976}.

\begin{restatable}{theorem}{kNNRSAlgoThm}
\label{theorem: knn rs algo}
Using the $k$-nearest neighbour query data structure,
\textbf{\ref{alg: RSAlgo}} computes
a $2$-SKEG disc with 
high probability
in $O(n\log n \log^2 r + n\log ^3 r + m +
(n /k)\log n \log (n+r) \log r)$ expected time
using $O(n\log n \log m + m)$ expected space.
\end{restatable}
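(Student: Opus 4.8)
The plan is to reuse the correctness argument of \cref{theorem: rs algo} unchanged and only re-cost the inner loop. As an algorithm, \textbf{\ref{alg: RSAlgo}} is identical except that the $(k-1)$\textsuperscript{st}-closest point of $S$ to a sampled $c\in S$ is now found by a single $k$-nearest-neighbour query: the query returns the $k$ sites of $S$ closest to $c$ --- the closest being $c$ itself at distance $0$ --- so the farthest of these is exactly the $(k-1)$\textsuperscript{st}-closest point of $S\setminus\{c\}$, and the geodesic disc centred at $c$ through it is precisely the smallest KEG disc centred at $c$ that \ref{alg: RSAlgo} would otherwise have produced. Hence \cref{lemma: rsalgo probably chooses a point in opt disc} and \cref{lemma: bfdisc 2-approx} still apply, the sample of $(n/k)\ln n$ sites still yields a $2$-SKEG disc with probability at least $1-n^{-1}$, and the randomness of the data structure affects only the running time, not correctness --- so the guarantee stays ``with high probability'' while the time becomes an expectation.

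For the cost, I would first apply the polygon-simplification step of \cite{aichholzer2014geodesic} in $O(m)$ time to replace $P_{in}$ by a polygon $P$ of size $O(r)$ that preserves the shortest paths, hence all geodesic distances, between points of $P_{in}$; since the sites lie in $P_{in}\subseteq P$, every nearest-neighbour relationship among the points of $S$ is preserved. Then I would build the de~Berg and Staals $k$-nearest-neighbour data structure \cite{DEBERG2023101976} on $P$ and $S$: as $|P|\in O(r)$, this takes $O(n(\log n\log^2 r+\log^3 r))$ expected time and $O(n\log n\log r+r)$ expected space, and a query then costs $O(\log(n+r)\log r+k\log r)$ expected time. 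I would also build the $O(\log r)$-query-time shortest-path structure of \cite{GUIBAS1989126,HERSHBERGER1991231} in $O(r)$ time so that, given the farthest returned site, the query radius is evaluated in $O(\log r)$ time. All of this is absorbed into $O(n\log n\log^2 r+n\log^3 r+m)$, and the space is $O(n\log n\log r+m)\subseteq O(n\log n\log m+m)$.

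It remains to bound the loop. I would draw the random sample of $(n/k)\ln n$ sites; for each I would issue one $k$-nearest-neighbour query, extract its farthest returned site in $O(k)$ time, and evaluate the radius in $O(\log r)$ time, finally reporting the smallest radius found. Summing the per-iteration cost $O(\log(n+r)\log r+k\log r)$ over $(n/k)\ln n$ iterations yields $O\bigl((n/k)\log n\,\log(n+r)\log r\bigr)+O(n\log n\log r)$, whose second term is absorbed into $O(n\log n\log^2 r)$; the $O(k)$ extraction work sums to $O(n\log n)$ and the final minimum to $O((n/k)\log n)$, both absorbed. Adding the preprocessing gives the stated bound. The one delicate point is that \cite{DEBERG2023101976} states its guarantees in terms of the polygon size $m$ while we want dependence on $r$; the remedy is precisely the simplification above, together with noting that the structure admits query points that coincide with sites. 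So I expect the only genuine work is verifying that the simplified polygon plugs correctly into the de~Berg and Staals construction, everything else being bookkeeping inherited from \cref{theorem: rs algo}.
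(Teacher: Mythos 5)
Your proposal is correct and takes essentially the same approach as the paper: replace the brute-force distance computation in each of the $(n/k)\ln n$ iterations of \ref{alg: RSAlgo} by a single de~Berg--Staals query on the simplified $O(r)$-vertex polygon, inherit correctness from \cref{lemma: rsalgo probably chooses a point in opt disc,lemma: bfdisc 2-approx}, and charge the construction to $O(n\log n\log^2 r + n\log^3 r)$ and the queries to $O((n/k)\log n\log(n+r)\log r + n\log n\log r)$. Your sharper space accounting $O(n\log n\log r+m)\subseteq O(n\log n\log m+m)$ is consistent with the paper's stated (looser) bound.
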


When might \cref{theorem: knn rs algo} be expected to be more efficient 
in terms of runtime than
\cref{theorem: rs algo}?
Although \ref{alg: RSAlgo} is used if $k \in \omega(n/ \log n)$,
we will simplify the analysis by comparing the two runtimes when $k= n / \log n$.
In this case,
\cref{theorem: rs algo} runs in $O(n\log^2 n \log r + m)$ time,
and \cref{theorem: knn rs algo} runs in expected
$O(n\log n \log^2 r + n\log ^3 r + m +
\log^2 n \log (n+r) \log r)$ time.
If $\log n > \log r$, we have that $\log (n+r) \in O(\log n)$ and
\[
n\log^2 n \log r > (n\log n \log^2 r + n\log ^3 r) 
\] 
Therefore, 
\[
\log^2 n \log (n+r) \log r \in O(\log^3 n  \log r) \in O(n\log^2 n \log r)
\]
Thus, when $k= n / \log n$ and $\log n > \log r$,
\cref{theorem: knn rs algo} is expected to be more efficient than \cref{theorem: rs algo}.
If $\log r > \log n$, then $n\log ^3 r > n\log^2 n \log r$, thus
\cref{theorem: rs algo} is expected to be more efficient.
If $k \in \Theta(n)$, \cref{theorem: rs algo} runs in  
$O(n\log n \log r + m)$ time, whereas \cref{theorem: knn rs algo} runs in expected
$O(n\log n \log^2 r + n\log ^3 r + m +
\log n \log (n+r) \log r)$ time.
Thus, \cref{theorem: rs algo} is always expected to be more efficient when $k \in \Theta(n)$.

\subsubsection{\ref{alg: DI}:}
\label{subsubsec: knn di algo}
Omitted details from this section appear in \cref{app: knn queries}.
If we wanted to incorporate the $k$-nearest neighbour query data structure
into \ref{alg: DI}, we would use it in the merge step, 
i.e., in \ref{alg: merge-me} to find the $k$\textsuperscript{th}-nearest neighbour
of our randomly chosen projection.
When might we get a more efficient algorithm using the $k$-nearest neighbour query data structure?
Perhaps we can get some intuition by noticing that although finding 
the $k$\textsuperscript{th}-nearest neighbour of our randomly chosen projection
will be faster, the time for the other steps (such as computing intervals along the chord)
remains unchanged.
We use the same notation as in \cref{section: quickselect merge step}.

The first option is to build the data structure over the whole 
polygon using all points of $S$, meaning that every query will give the $k$-nearest neighbours
from $S$, not just the points in the subpolygon under consideration.
Since we already get a $2$-approximation using only the points of the subpolygon being
considered and since $S$ contains this set, we still get a $2$-approximation 
after the recursion if we use the data structure built on all points of $S$ over all of $P$.
Using this approach, the expected space of the algorithm becomes $O(n\log n \log r + m)$
and the expected runtime becomes 
$O( n\log^2 n \log r + r\log n \log (n+r) \log r  +m + rk\log n \log r)$
which is not
as efficient as the expected $O(n\log^2 n \log r + m)$ time 
presented by \cref{cor: DC using quickselect}.

The second option is to build the data structure for each subpolygon of our recursion tree.
The expected space usage becomes $O(n \log n \log^2 r + m)$.
The expected time to perform \ref{alg: DI} becomes
$O(m + n\log^2 n \log r + n\log n \log^3 r + n \log^4 r + r \log n \log (n + r)\log r +  rk \log n \log r )$ using this approach,
which is not
as efficient as the expected time 
presented by \cref{cor: DC using quickselect}.

\subsection*{Acknowledgements}
The authors thank Pat Morin, Jean-Lou de Carufel, Michiel Smid,  
and Sasanka Roy for helpful discussions as well as anonymous reviewers.

\clearpage

\bibliographystyle{plain}
\bibliography{2-skeg-disc}

\clearpage

\appendix
\label{app}

\section{A Note on the Doubling Dimension of the Geodesic Metric}
\label{app: doubling geodesic}

\begin{figure}
	\begin{subfigure}{0.45\textwidth}
		\includegraphics[width=\linewidth]{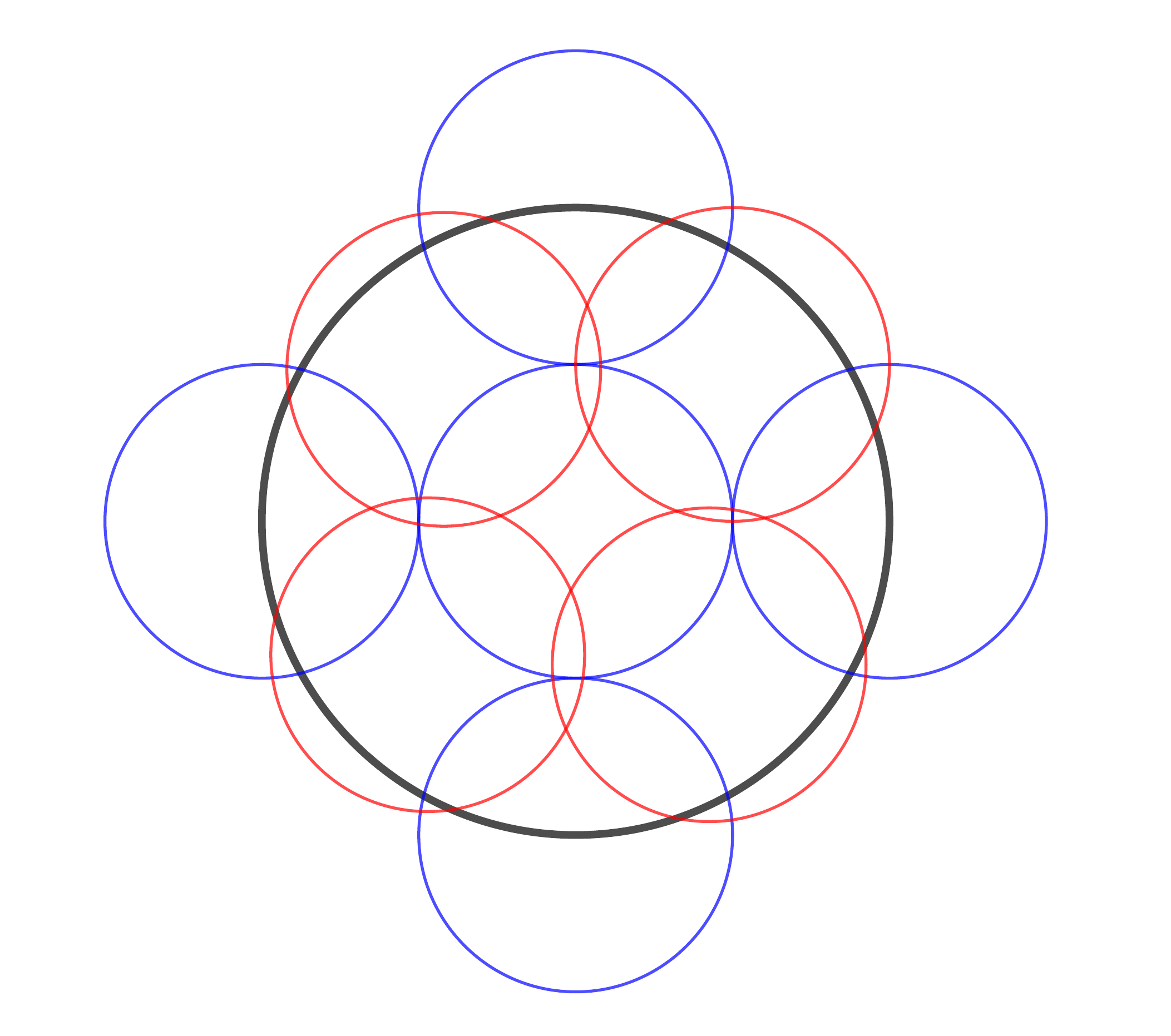}
		\caption{A disc in $\mathbb{R}^2$ covered by nine discs of half its radius.} \label{fig: bounded dd disc}
	\end{subfigure}
	\hspace*{\fill} 
	\begin{subfigure}{0.45\textwidth}
		\includegraphics[width=\linewidth]{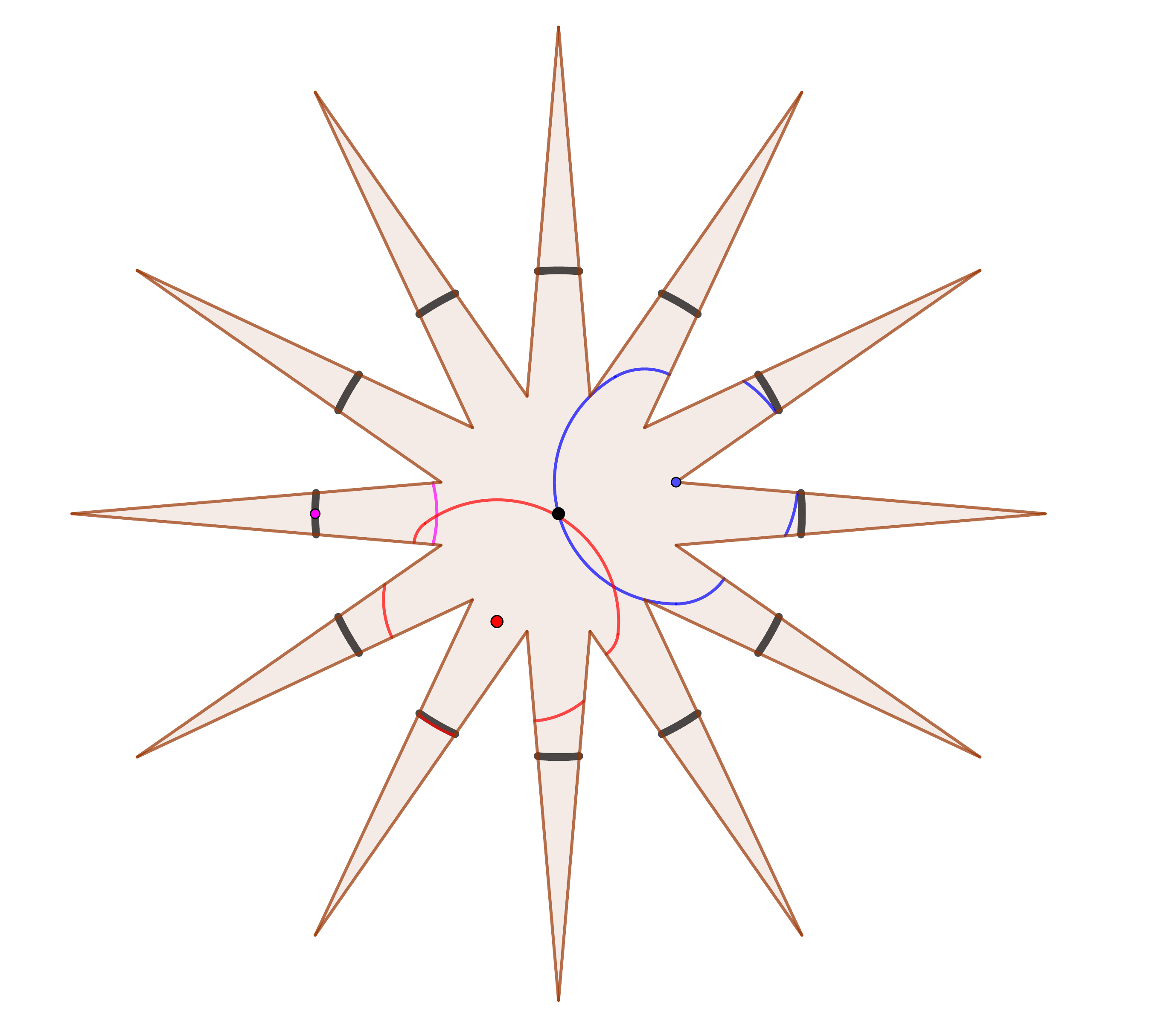}
		\caption{A star-shaped simple polygon with a geodesic disc of radius $1$ and some attempts to cover multiple spikes with geodesic discs of radius $0.5$.} \label{fig: bounded dd geod disc}
	\end{subfigure}
\caption{The larger disc in \cref{fig: bounded dd disc} can be covered by $O(1)$ discs of half its radius, while in \cref{fig: bounded dd geod disc} the geodesic disc of radius $1$ requires $\Omega(r)$ geodesic discs of radius $0.5$ to cover it.}
\label{fig: bounded dd}
\end{figure}

\cref{fig: bounded dd disc} shows a black Euclidean disc of radius $1$ being covered by  
nine red and blue Euclidean discs of radius $0.5$.
On the other hand, \cref{fig: bounded dd geod disc} shows a star-shaped simple polygon
with twelve reflex vertices with symmetric spikes equally-angularly-spaced.
At its centre is a black point from which we draw the black geodesic disc of radius $1$.
The tips of the spikes are at distance $2$ from the centre and the reflex vertices
are at distance $0.5$.
The pink geodesic disc of radius $0.5$ has its pink centre in the left horizontal spike.
This disc centre is in the middle (angularly-speaking)
of the black disc's boundary arc contained in that spike.
The pink disc does not even contain the two reflex vertices of the spike, so using
discs such as this would require at least one disc for each spike.
The red geodesic disc of radius $0.5$, on the other hand, 
has its red centre at such a point as to 
minimize its distance from the black centre while containing the black boundary arc in one of the spikes.
Discs such as this also require at least one disc for each spike since it too covers
the black disc boundary arc contained in just one spike and it cannot
move its centre to cover more of the black disc's boundary without failing to cover
the spike it currently covers.
The blue geodesic disc of radius $0.5$ is centred on a reflex vertex.
It fails to cover the black disc boundary in any spike,
showing that discs of this nature also require at least one disc for each spike.

\clearpage

\section{Figures: Geodesic Bisector Crosses Chord}
\label{app: bisector crosses chord}

\begin{figure}[!h]
		\includegraphics[width=\linewidth]{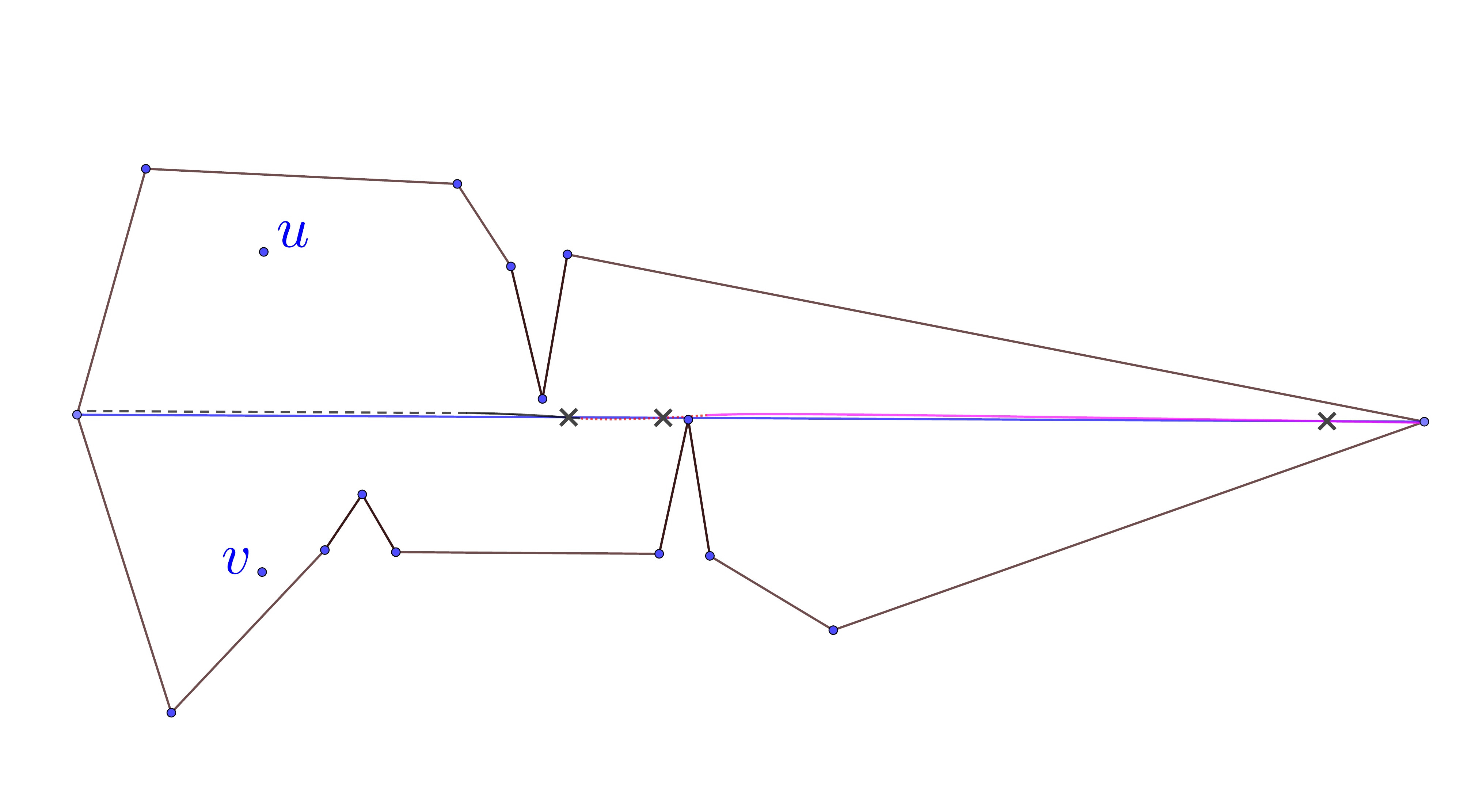}
\caption{The bisector of points $u$ and $v$ on opposite sides of the 
		blue chord of the polygon can intersect the chord 
  multiple times.
		The intersections are labelled with ``$\times$''.
		The different arcs that form the bisector are drawn with 
		different ink
		styles (e.g., dashed vs dotted vs normal ink).}
\label{fig: bisector r crossings}
\end{figure}

\begin{figure}
		\includegraphics[width=\linewidth]{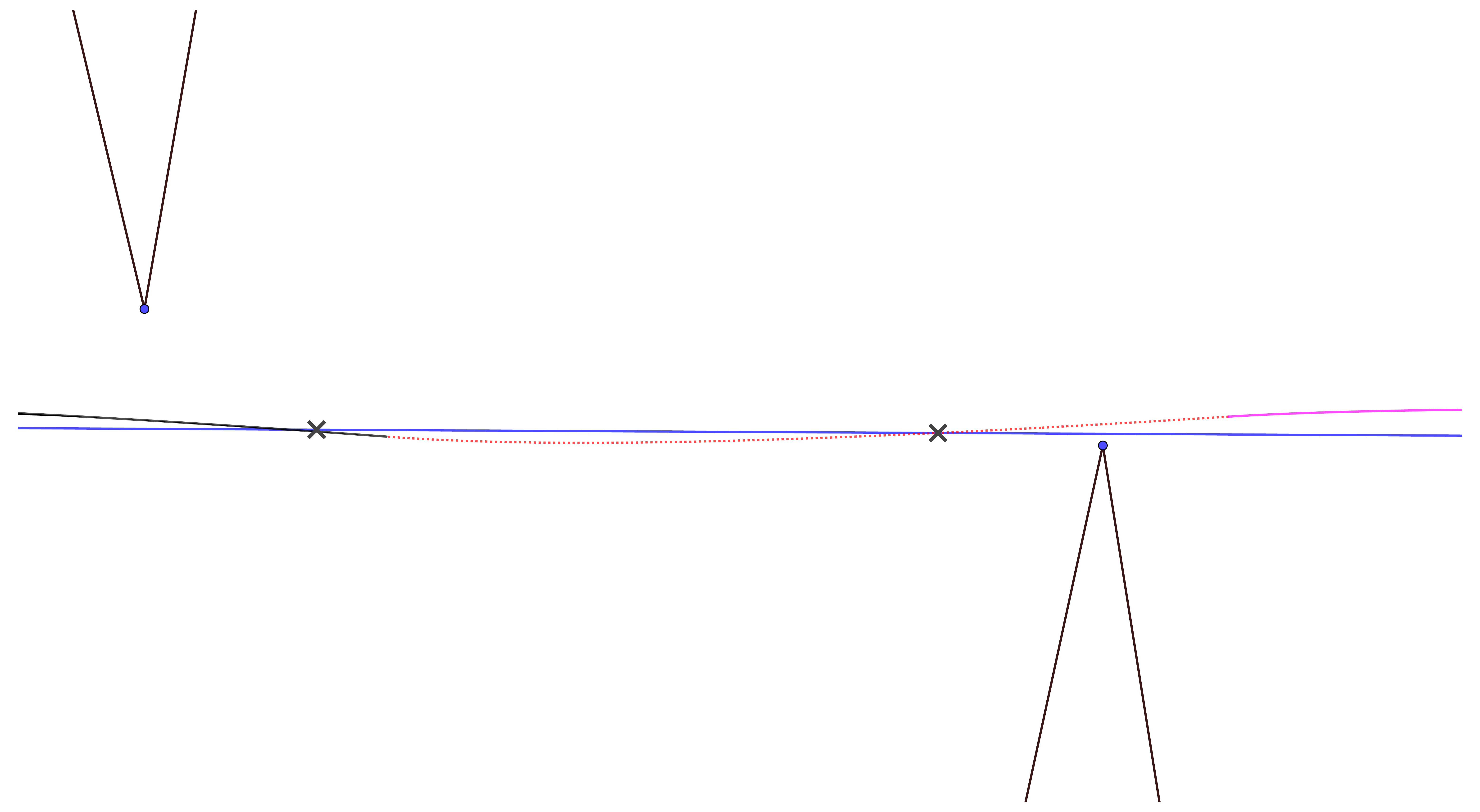}
		\caption{Zoomed-in view of the first two crossings of \cref{fig: bisector r crossings}.}
\label{fig: bisector r crossings zoomed a}
\end{figure}

\begin{figure}
		\includegraphics[width=\linewidth]{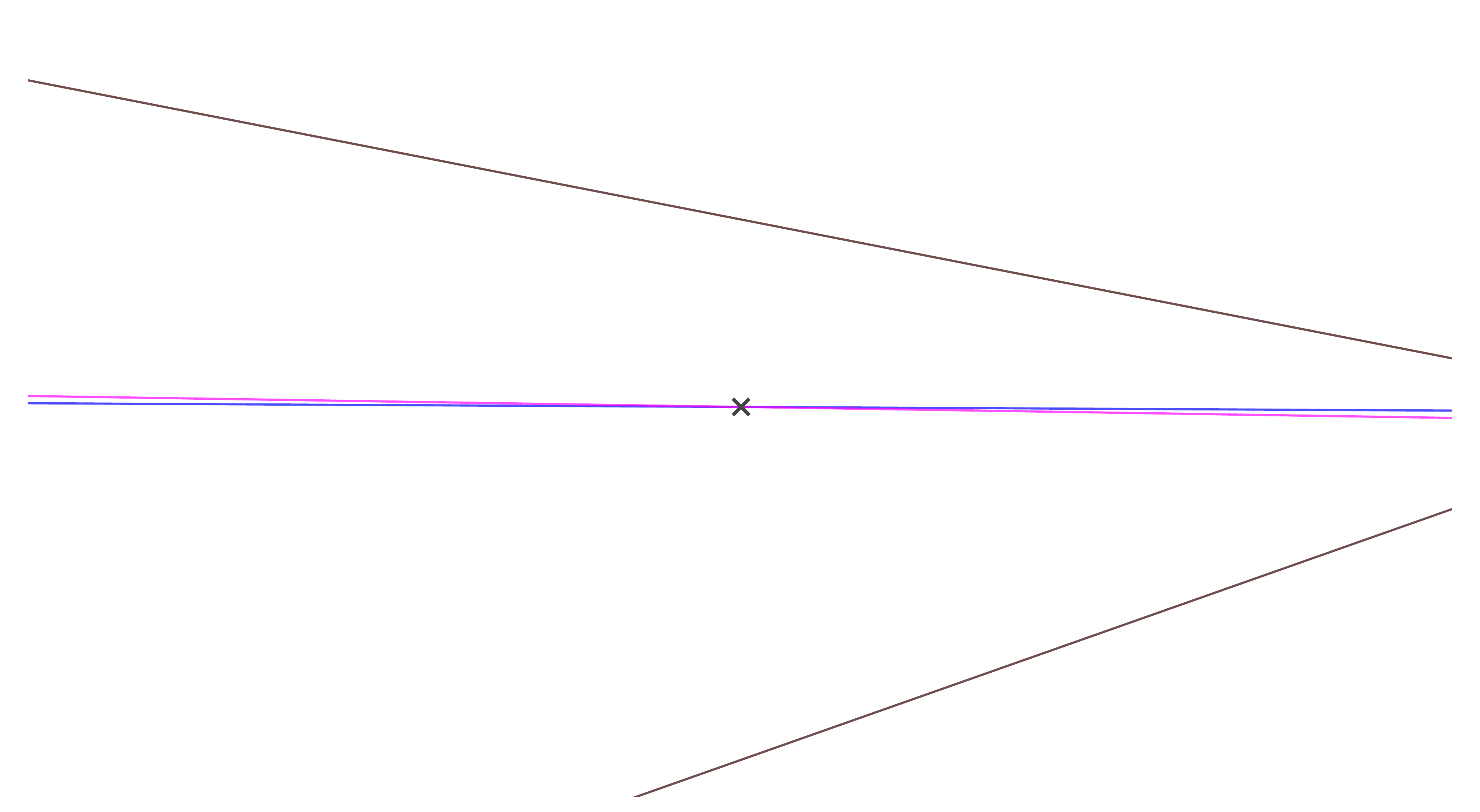}
		\caption{Zoomed-in view of the third crossing of \cref{fig: bisector r crossings}. 
        }
\label{fig: bisector r crossings zoomed b}
\end{figure}

\clearpage

\section{An Exact Algorithm with Voronoi Diagrams}
\label{app: vd}

In this appendix we provide details omitted from \cref{subsec: exact}.

\begin{restatable}{lemma}{okvdCellsEnough}
\label{lemma: okgvd cells enough}
The $k$ points in the SKEG disc define a face of the OKGVD.
\end{restatable}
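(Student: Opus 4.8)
The plan is to locate the center $c^*$ of a SKEG disc $D^* = D(c^*, \rho^*)$ inside the OKGVD and to show that the face of the OKGVD containing $c^*$ is defined precisely by the set $K^* \subseteq S$ of the $k$ points enclosed by $D^*$. First I would record two consequences of general position: by the standing assumption (see the discussion preceding \cref{rem: gen pos}) $D^*$ contains exactly $k$ points of $S$, namely $K^*$; and, since the points of $P$ at geodesic distance at most $\rho^*$ from $c^*$ are exactly the points enclosed by $D^*$, exactly $k$ points of $S$ lie within geodesic distance $\rho^*$ of $c^*$.

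Next I would argue that $\rho^*$ equals the geodesic distance from $c^*$ to its $k$-th nearest point of $S$. For the fixed center $c^*$, the smallest geodesic disc centered at $c^*$ that encloses $k$ points of $S$ has radius $d_g(c^*, p_k)$, where $p_k$ denotes the $k$-th nearest neighbour of $c^*$ in $S$. Since $D^*$ has minimum radius over all centers that enclose $k$ points, when centered at $c^*$ it cannot have radius larger than $d_g(c^*, p_k)$, and it cannot have radius smaller than that (it would then enclose fewer than $k$ points); hence $\rho^* = d_g(c^*, p_k)$. Combining this with the previous paragraph, the $k$ nearest neighbours of $c^*$ are exactly the elements of $K^*$, and the $(k+1)$-th nearest neighbour $p_{k+1}$ satisfies $d_g(c^*, p_{k+1}) > \rho^*$ \emph{strictly} — otherwise $D^*$ would enclose at least $k+1$ points, contradicting general position.

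From the strict inequality $d_g(c^*, p_k) < d_g(c^*, p_{k+1})$ I would conclude that $c^*$ does not lie on any edge or vertex of the OKGVD. An order-$k$ Voronoi edge (resp.\ vertex) is a locus at which the $k$-th and $(k+1)$-th nearest-neighbour distances coincide, so that the identity of the $k$-nearest-neighbour set changes; that is exactly the configuration we have just ruled out at $c^*$. Hence $c^*$ lies in the interior of a single face of the OKGVD, and by definition that face is the locus of points whose $k$ nearest neighbours are the points of $K^*$. Therefore $K^*$, the set of $k$ points enclosed by the SKEG disc, defines a face of the OKGVD.

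I expect the only delicate point to be this dimensionality argument, i.e.\ excluding the possibility that $c^*$ sits on a lower-dimensional feature of the diagram where the defining set would be ambiguous; the strict separation between the $k$-th and $(k+1)$-th nearest-neighbour distances, forced by the fact that $D^*$ encloses exactly $k$ points, is what makes this clean. The degenerate case $k = n$ — where the order-$n$ diagram consists of a single face defined by all of $S$ and $D^*$ trivially encloses all $n$ points — can be handled in one line separately.
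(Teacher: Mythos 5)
Your proposal is correct and follows essentially the same route as the paper's proof: locate $c^*$ in the OKGVD, rule out its lying on an edge or vertex (the paper does this by noting such a position would force more than $k$ points into $D^*$, you do it via the strict gap between the $k$-th and $(k+1)$-th nearest-neighbour distances, which is the same fact), and conclude that the face containing $c^*$ is defined by the $k$ nearest neighbours of $c^*$, i.e.\ the points in $D^*$. The only cosmetic difference is that the paper phrases the final step as a contradiction while you argue directly.
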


\begin{proof}
The SKEG disc centred at a point $c^*$ with radius $\rho^*$
has the properties that only the $k$ points of $S$ in the disc 
are within distance $\rho^*$ of $c^*$ 
(by the general position assumption)
and this is a disc with the smallest radius that 
contains $k$ points of $S$.

We show that the $k$ points in the SKEG disc define a face of the OKGVD
by contradiction.
Assume that it is not the case.
A face in the OKGVD has the property that the $k$ points of $S$ defining the face
are the $k$ closest neighbours of $S$ to each point within the face. 
The point $c^*$, which is in the 
polygon, 
must belong to some face $F$ of the OKGVD since:
belonging to an edge or vertex of the OKGVD means there are more than $k$ points in the SKEG disc 
and thus restricting the disc to the $k$ points in just one of the incident faces will allow for a smaller disc;
and because the OKGVD subdivides\footnote{Otherwise we either contradict that we had a simple polygon, 
that there are no two sites equidistant to a vertex of the polygon,
or that there are at least $k$ points of $S$ in the polygon.}
the polygon.
The fact that the points of $S$ defining $F$ are not the same as the set of points
in the SKEG disc contradicts that those points are the $k$ closest points of $S$ to $c^*$.
\end{proof}

\genPosRemark*

\begin{restatable}{lemma}{okvdMinusOneCellsEnough}
\label{lemma: ok minus one gvd cells enough}
If there are three points on the boundary of the SKEG disc,
they define a vertex of the order-$(k-2)$ geodesic Voronoi diagram.
If there are only two points $s, t \in S$
on the boundary of the SKEG disc,
then their geodesic bisector $b(s,t)$
defines an edge of the order-$(k-1)$ geodesic Voronoi diagram
and the position along the edge for the centre of the SKEG disc is the
midpoint along $\Pi(s,t)$.
\end{restatable}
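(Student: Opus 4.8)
The plan is to prove the two claims separately, following the structure already set up by \cref{lemma: okgvd cells enough} (which tells us the $k$ points in the SKEG disc define a face of the OKGVD) together with \cref{rem: gen pos} (which tells us a SKEG disc has two or three points of $S$ on its boundary). The key observation is that the order-$k$ diagram refines the order-$(k-1)$ and order-$(k-2)$ diagrams in a controlled way: a vertex of the order-$(k-2)$ diagram is equidistant to three points, all of which lie among the $k$ nearest points at that location, and a generic point of an order-$(k-1)$ edge is equidistant to exactly two points which are the $(k-1)$-st and $k$-th nearest. So the argument is really about identifying which $k$-element subset is relevant and then using the general-position assumption (no four points geodesically co-circular, no two equidistant to a vertex) to pin down the combinatorial type.

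\textbf{Three points on the boundary.} Suppose $D^* = D(c^*,\rho^*)$ has three points $p,q,s \in S$ on its boundary. Then $c^*$ is geodesically equidistant (at distance $\rho^*$) from $p$, $q$, and $s$, and every other point of $S$ in $D^*$ is strictly closer (by general position, no ties). So exactly $k-3$ points of $S$ are strictly within distance $\rho^*$ of $c^*$; call this set $N$. Thus the $k-2$ nearest points to $c^*$ are $N$ together with one of $\{p,q,s\}$ — but which one is ambiguous precisely because $p,q,s$ are tied at rank $k-2, k-1, k$. This is exactly the defining condition of a vertex of the order-$(k-2)$ geodesic Voronoi diagram: $c^*$ lies on the common boundary of three order-$(k-2)$ faces, namely the faces with defining sets $N\cup\{p,q\}$, wait — I need to be careful: the three faces meeting at the vertex have defining sets $N \cup \{q,s\}$, $N \cup \{p,s\}$, $N \cup \{p,q\}$, each of size $k-2+2 = k$... that's not right either. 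Let me reconsider: the order-$(k-2)$ faces have defining sets of size $k-2$; the three meeting at $c^*$ are $N\cup\{p\}$, $N\cup\{q\}$, $N\cup\{s\}$ (each of size $(k-3)+1 = k-2$), and the shared vertex is where all three sets of ``$(k-2)$ nearest'' coincide modulo the tie, i.e., where $p,q,s$ are simultaneously the candidates for the $(k-2)$-th slot. The radius $\rho^*$ is then recovered as the geodesic distance from this Voronoi vertex to any of $p,q,s$. I would phrase this using the standard characterization of higher-order Voronoi vertices and cite that the general-position assumption rules out degenerate coincidences (four co-circular points, or a vertex-tie) that could create a higher-degree vertex.

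\textbf{Two points on the boundary.} Suppose instead $D^*$ has exactly two points $s,t \in S$ on its boundary, so $c^*$ is equidistant to $s$ and $t$ at distance $\rho^*$, with exactly $k-2$ points of $S$ strictly inside; call them $N'$, $|N'| = k-2$. Then the $(k-1)$ nearest points to $c^*$ are $N' \cup \{s\}$ or $N' \cup \{t\}$ (tied), so $c^*$ lies on an edge of the order-$(k-1)$ diagram separating the faces with defining sets $N'\cup\{s\}$ and $N'\cup\{t\}$; the underlying curve of that edge is a sub-arc of the geodesic bisector $b(s,t)$. It remains to show $c^*$ is the midpoint of $\Pi(s,t)$ along this bisector. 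The plan here is a local optimality / perturbation argument: if $c^*$ were not the midpoint of $\Pi(s,t)$, then moving the center slightly along $b(s,t)$ toward the midpoint strictly decreases $\max(d_g(c,s), d_g(c,t))$ while keeping all of $N'$ strictly inside (for a small enough move) and keeping $s,t$ the only candidates near the boundary — contradicting minimality of $\rho^*$. For this I would use that along a geodesic bisector the function $c \mapsto d_g(c,s) = d_g(c,t)$ is minimized exactly where the bisector crosses $\Pi(s,t)$, at its midpoint (this is essentially the geodesic analogue of the planar fact, and follows from the triangle inequality / the structure of geodesic bisectors as concatenations of line segments and hyperbolic arcs).

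\textbf{Main obstacle.} The routine parts are the subset-counting and the appeal to general position. The part requiring genuine care is the second claim's ``midpoint of $\Pi(s,t)$'' assertion: I must rule out that $c^*$ sits at an endpoint of the order-$(k-1)$ \emph{edge} (i.e., at an order-$(k-1)$ vertex or where the edge meets $\partial P$) rather than in its relative interior at the $\Pi(s,t)$-midpoint, and I must handle the possibility that the midpoint of $\Pi(s,t)$ is not actually on the relevant Voronoi edge — in which case the true minimizer of $\max(d_g(\cdot,s),d_g(\cdot,t))$ over that edge is an endpoint, and then $D^*$ would have a third point on its boundary or a point of $S$ exactly at distance $\rho^*$, re-routing us to the three-point case (or contradicting general position). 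So the clean statement is: \emph{when} the SKEG disc has exactly two points on its boundary, the midpoint of $\Pi(s,t)$ lies on the order-$(k-1)$ edge and equals $c^*$; the case analysis confirming this dichotomy is where the real work lies, and I would organize it by first establishing the local-optimality characterization of $c^*$ on $b(s,t)$ and then checking the endpoint cases collapse to the three-point scenario.
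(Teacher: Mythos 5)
Your proposal is correct and follows essentially the same route as the paper: both arguments rest on the tie structure among the $k$ nearest neighbours of $c^*$ (three tied boundary points give an order-$(k-2)$ Voronoi vertex whose incident faces are $N\cup\{p\}$, $N\cup\{q\}$, $N\cup\{s\}$ with $|N|=k-3$; two tied boundary points place $c^*$ on an order-$(k-1)$ edge carried by $b(s,t)$), combined with the fact that the point of $b(s,t)$ minimizing $d_g(\cdot,s)=d_g(\cdot,t)$ is the midpoint of $\Pi(s,t)$ and an optimality argument forcing that midpoint onto the Voronoi edge. The characterization you settle on after your mid-paragraph hesitation is the correct one and matches the paper's, and the ``main obstacle'' you flag (the midpoint possibly falling off the relevant edge, or $c^*$ sitting at an edge endpoint) is exactly the case the paper dispatches by showing the resulting disc would contain too many points and could be shrunk, contradicting minimality of $\rho^*$.
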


\begin{proof}
The SKEG disc centred at a point $c^*$ with radius $\rho^*$
has the properties that only the $k$ points of $S$ in the disc 
are within distance $\rho^*$ of $c^*$ 
(by the general position assumption)
and this is a disc with the smallest radius 
that contains $k$ points of $S$.

We show that the $k$ points in the SKEG disc are associated with 
a vertex  of the order-$(k-2)$ diagram or an edge of the order-$(k-1)$ diagram.

By our general position assumption, a Voronoi vertex of the 
order-$(k-2)$ Voronoi diagram
(that is not the intersection of a Voronoi edge with the boundary of the polygon) 
has degree three.
This means that there are three cells incident to such a vertex,
and these three cells have $(k-3)$ points in common with three points tied for being
the $(k-2)$\textsuperscript{nd}-closest at the vertex in question
for a total of $k$ points.
Since these vertices have the property that the associated $k$ points are the closest
$k$ points of $S$, if $c^*$ coincided with one of 
these Voronoi vertices and the associated $k$
points were not the $k$ points in the SKEG disc, then we contradict that we had a SKEG disc
(or the proper Voronoi diagram).

If $c^*$ is not a Voronoi vertex of the order-$(k-2)$ diagram, 
then it may lie on a Voronoi edge of the order-$(k-1)$ diagram
and would thus contain the $k$ points defining the cells that share the edge
(or we contradict the set of $k$ points in the SKEG disc).
The two points on the disc boundary are the two points $s,t \in S$ 
whose bisector created the edge.
The position for $c^*$ that minimizes the distance to those two points is the 
midpoint of $\Pi(s,t)$.
If this midpoint does not occur on the Voronoi edge,
then we contradict that we had a SKEG disc:
the disc with only these $k$ points centred at the midpoint has a smaller radius
than a disc containing the same points centred anywhere else along $b(s,t)$;
if the midpoint occurs in some cell, then the 
disc centred at the midpoint containing the specified points
contains at least one point too many (since the points of this other
cell are by definition closer), 
meaning the disc can be shrunk, contradicting that it was a SKEG disc.

If $D^*$ were to have two points on its boundary and 
$c^*$ were 
in a cell (i.e., not on an edge) of the order-$(k-1)$ diagram, 
then the disc centred there would have to contain the points associated
with the cell (otherwise we have a contradiction for what should be contained in a SKEG disc),
but it would also have to contain another point to raise the contained total to $k$.
Since this other point is farther than the $(k-1)$ associated with the cell,
it will be a point on the boundary of the disc.
Since we have a SKEG disc with a minimum radius, the distance to this $k$\textsuperscript{th} point
will be minimized, but that means $c^*$ will be on a bisector between a point
of the cell and this farther point from another cell.
As pointed out above, the position on the bisector that minimizes the radius of the disc
is at the midpoint of the shortest path between the two relevant points.
As pointed out above, this means it is along a Voronoi edge,
contradicting that $c^*$ lies 
in the cell.

A similar argument can be made for $c^*$ being a vertex of
the order-$(k-2)$ diagram given that $D^*$ has three points on its boundary.
\end{proof}

When $k=n$, the SKEG disc problem is the same as the problem of
computing the geodesic centre of a set of points inside a simple polygon.

The order-$1$ and the order-$(n-1)$ geodesic Voronoi diagrams
have complexity $\Theta(n + r)$ \cite{Aronov89,DBLP:journals/dcg/AronovFW93}.
There is an algorithm for the geodesic nearest-point Voronoi diagram
that runs in $O( n\log n + r)$ time and uses 
$O(n\log n + r)$
space 
\cite{DBLP:conf/soda/Oh19}, and one for the
farthest-point diagram that runs in  
$O( n\log n + r)$ time and uses $\Theta(n + r)$ space 
\cite{DBLP:conf/compgeom/Barba19,DBLP:conf/compgeom/000121}.

\cref{lemma: ok minus one gvd cells enough} 
tells us we can use the order-$1$ Voronoi diagram
when $k = 2$.
Considering the preprocessing time, the construction time,
and the time to traverse the diagram performing the 
appropriate $O(\log r)$-time queries, we can solve the SKEG disc problem
in $O(n\log n + n\log r + r\log r + m)$ time.
We say a term in an equation (or the equation) dominates another when 
the former is greater than the latter multiplied by a constant.
We can simplify our expression as follows, assuming the $n\log r$ 
term dominates the others.
This implies the following.

\begin{align*}
&& n\log r & > m \nonumber \\
&& n\log r & > n\log n \nonumber \\
\Rightarrow && \log r & > \log n \nonumber \\
\Rightarrow &&  r & >  n^2 \nonumber \\
\Rightarrow &&  m & >  n^2 \nonumber \\
\Rightarrow &&  m^{1/2} & >  n \nonumber \\
\end{align*}

Using the fact that $m^{1/2} > \log r$, 
this brings us to the contradiction that $m > n\log r$.
Thus, we can simplify the expression to $O(n\log n + r\log r + m)$.

The farthest-point geodesic Voronoi
diagram is the same as the order-$(n-1)$ diagram.
When $k=n-1$,
it is more efficient to compute a SKEG disc using the farthest-point 
geodesic Voronoi diagram and, 
for each face, 
compute the geodesic centre of the $n-1$ closest points.
This gives us the SKEG disc
by \cref{lemma: okgvd cells enough}.

When one uses the OKGVD to compute a SKEG disc,
the approach is to compute the geodesic centres of the $k$ points
defining each face, and then choose the SKEG disc.
Oh and Ahn~\cite[Lemma $3.9$]{Oh2019} presented a method to find
the geodesic centre of $k$ points in 
$O(k\log k \log^2 r)$ time and $O(k+r)$ space.
The other method for computing the appropriate geodesic centres is to 
first compute $CH_g$ for the $k$ points defining each face, and then compute
the geodesic centre of the weakly simple polygon formed by the $CH_g$.

The time for the Oh and Ahn approach (i.e., $O(k\log k \log^2 r)$) always dominates
the time to compute $CH_g$ (i.e., $O(k\log r + k \log k)$).
Thus, the Oh and Ahn approach is only more efficient for 
computing the geodesic centres of the requisite points when its runtime
is dominated by the time to compute the geodesic centre of $CH_g$.
As has already been mentioned, 
the time to compute the geodesic centre of a weakly simple polygon is linear in its
size.
The size of $CH_g$ of the $k$ points of $S$ defining
a face of the OKGVD is $O(r + k)$.
Thus, the Oh and Ahn approach is more efficient when 
$k\log k \log^2 r \in O(r)$.

When $k\log k \log^2 r \in O(r)$,
it is more efficient to use
the approach of Oh and Ahn~\cite[Lemma $3.9$]{Oh2019} to find
the geodesic centre of $k$ points in 
$O(k\log k \log^2 r)$ time and $O(k+r)$ space
for each face of the OKGVD,
yielding  $O((r+n)\cdot k\log k \log^2 r)$ time over all faces
and $O(r+n)$ space.
Substituting $k=n-1$, noting that $k\log k \log^2 r \in O(r)$
means $n \in O(r)$, 
and considering the preprocessing,
the time we use is

\begin{align*}
\begin{split}
	&O((r+n)\cdot n\log n \log^2 r + m)
\end{split}\\
\begin{split}
	=\ &O(nr\log n \log^2 r + m)
\end{split}
\end{align*}

Otherwise, it is more efficient to
compute $CH_g$ of the $k$ points of $S$ associated
with each face of the diagram to obtain weakly simple polygons, 
and then compute the geodesic centres of
those geodesic convex hulls.
Observe that the complexity of $CH_g$ 
of $k$-point subsets of $S$
is $O(r+k)$.
We spend 
$O(r+n)\cdot O(k\log r + k \log k)$
time to compute $CH_g$ of each subset of $k$ points
that defines a face;
then we compute the geodesic centres in
$O(r+n)\cdot O(r+k)$ time
and $O(r+n)$ space.
Substituting
$k = n-1$ and considering the preprocessing cost,
the time used becomes 

\begin{align*}
\begin{split}
	&O((r+n)\cdot (n\log r + n\log n + r) + m)
\end{split}\\
\begin{split}
	=\ &O(n^2\log n	+  n^2\log r +
 nr\log n + nr\log r + r^2 
	+  m)
\end{split}
\end{align*}

Oh and Ahn \cite{Oh2019} showed that the complexity 
of the 
OKGVD
of $n$ points in a simple $r$-gon is
$\Theta(k(n-k) + \min(kr, r(n-k)))$.
The fastest algorithms to compute the OKGVD
are the one by Oh and Ahn~\cite{Oh2019} that runs in time 
$O(k^2n\log n \log^2 r + \min(kr, r(n-k)))$,
and the one by Liu and Lee~\cite{DBLP:conf/soda/LiuL13} that runs in
$O(k^2n\log n + k^2 r \log r)$  
time.
They both use $\Omega(k(n-k) + \min(kr, r(n-k)))$ 
space (which is 
$\Omega(n+r)$ 
for constant $k$ or $k$ close to $n$ 
and 
$\Omega(kn + kr) = \Omega(n^2+nr)$ 
for $k$ a constant fraction of $n$). 

We now show that the approach of Oh and Ahn~\cite{Oh2019}
is more efficient when $n\log n \log r \in O(r)$,
which means the approach of Liu and Lee~\cite{DBLP:conf/soda/LiuL13} 
is more efficient the rest of the time.

Consider the largest-order terms in the running time of the Oh and Ahn approach:

\begin{equation} 
\underbrace{k^2n\log n \log^2 r}_{A} + \underbrace{ \min(kr, r(n-k))}_{B} 
\label{okgvd_ohAhn_time}
\end{equation} 

Similarly, let us consider the largest-order terms in the running time of the 
approach of Liu and Lee:

\begin{equation} 
\underbrace{k^2n\log n}_{C} + \underbrace{k^2 r \log r}_{D} 
\label{okgvd_LL_time}
\end{equation}

When is \cref{okgvd_ohAhn_time} dominated by \cref{okgvd_LL_time}?

Term C is always dominated by term A, so if \cref{okgvd_LL_time} dominates 
\cref{okgvd_ohAhn_time} it will be when term D dominates terms A and B.

When the minimum in term B is $kr$, it is dominated by term D.
When the minimum in term B is $r(n-k)$, it will be $O(nr)$. 
But this happens when $k \in O(n)$ (e.g., $k= n / 2$), in which case
term D becomes $O(n^2 r \log r)$ and dominates term B.

Thus we have that \cref{okgvd_ohAhn_time} is dominated by \cref{okgvd_LL_time}
when:

\begin{align}
&& k^{2}n\log n \log^{2} r & \in O(k^{2}r \log r) \nonumber \\
\Rightarrow && n\log n \log r & \in O(r) \nonumber 
\end{align}

The complexity of the OKGVD
is $\Theta(k(n-k) + \min(kr, r(n-k)))$.
For $1 < k < n-1$,
by \cref{lemma: ok minus one gvd cells enough} it suffices
to compute the order-$(k-2)$ and order-$(k-1)$ geodesic Voronoi diagrams
and consider: the vertices of the order-$(k-2)$ diagram;
and the midpoints of the shortest paths 
between the points defining edges shared by adjacent cells
of the order-$(k-1)$ diagram.
We compute the diagrams in
$O(k^2n\log n \log^2 r + \min(kr, r(n-k)))$ time
for 
$n\log n \log r \in O(r)$ 
and 
$O(k^2 n\log n + k^2 r \log r)$  
time
otherwise.
Using shortest-path queries
we traverse the diagrams in 
$O(k(n-k) + \min(kr, r(n-k)))\cdot O(\log r)$ time
while
comparing the candidates from the edges
and vertices.
Thus, for $n\log n \log r \in O(r)$, 
the time used including preprocessing is

\begin{align}
\begin{split}
	&O(k^2 n\log n \log^2 r + \min(kr, r(n-k))\\
    & \quad + \log r \cdot (k(n-k) + \min(kr, r(n-k)))
	+ m) \nonumber
\end{split}\\
\begin{split}
\label{eqn: first gvd comp a}
	=\ &O(k^2n\log n \log^2 r 
	+ \min(kr, r(n-k))\log r
	+ m) 
\end{split}
\end{align} 

When $k \in O(1)$, we have:

\begin{align*}
	&&& \quad\  O(n\log n \log^2 r 
	+ r\log r
	+ m) \\
\Rightarrow	&&& =  O( r\log r
	+ m) 
\end{align*}

When $k \in \Theta(n)$, we have:

\begin{align*}
\begin{split}
	&O(n^3\log n \log^2 r 
	+ nr\log r
	+ m) 
\end{split}
\end{align*} 

The time for the Liu and Lee approach, including preprocessing, is

\begin{align}
\begin{split}
& O(k^2 n\log n + k^2 r \log r 
 + \log r \cdot (k(n-k) + \min(kr, r(n-k)))
+ m) \nonumber
\end{split}\\
\begin{split}
\label{eqn: second gvd comp a}
=\ &O(k^2 n \log n + k^2 r \log r 
+ k(n-k)\log r +  \min(kr, r(n-k)) \log r
+ m)
\end{split}
\end{align}

When $k \in O(1)$, we get:

\begin{align*}
& O(n \log n + n \log r 
+ r\log r + m)
\end{align*}

When $k \in \Theta(n)$, we have:

\begin{align*}
& O(n^3 \log n + n^2 r \log r + m)
\end{align*}

Compare these runtimes to those we get by 
using the $k$\textsuperscript{th}-order geodesic Voronoi diagram.
When $n\log n \log r \in O(r)$, the more efficient construction time is
$O(k^2n\log n \log^2 r + \min{(kr, r(n-k))})$.
When $k\log k \log^2 r \in O(r)$, the more efficient method for the geodesic centre of $k$ points 
is the method by Oh and Ahn \cite{Oh2019} which runs in $O(k\log k \log^2 r)$ time, which we run on each of the 
$\Theta(k(n-k) + \min{(kr, r(n-k)}))$ faces.
This gives us the following, which we can compare to \cref{eqn: first gvd comp a}:
\begin{align}
\begin{split}
	&O(k^2 n\log n \log^2 r + \min(kr, r(n-k)) \\
  & \quad + k\log k \log^2 r \cdot (k(n-k) + \min(kr, r(n-k)))
	+ m) \nonumber
\end{split}\\
\begin{split}
\label{eqn: first gvd comp b1}
	=\ &O(k^2 n\log n \log^2 r + \min(kr, r(n-k)) + k\log k \log^2 r \cdot \min(kr, r(n-k))
	+ m) 
\end{split}
\end{align} 

When $k \in O(1)$, we have the following.

\begin{align*}
	&&& \quad\  O(n\log n \log^2 r 
	+ r\log^2 r
	+ m) \\
\Rightarrow &&&	=\ O(r \log r 
	+ r\log^2 r
	+ m) \\
\Rightarrow &&&	=\ O( r\log^2 r
	+ m) 
\end{align*}

When $k \in \Theta(n)$, we have the following.

\begin{align*}
\begin{split}
	&O(n^3\log n \log^2 r 
	+ n^2 r\log n \log^2 r
	+ m) 
\end{split}
\end{align*} 

When $k\log k \log^2 r \notin O(r)$, 
the more efficient method of finding the geodesic centre of the $k$ points defining a face is to compute $CH_g$ of those points and then use the linear-time algorithm for the geodesic centre of simple polygons.
Thus, for each of the $\Theta(k(n-k) + \min{(kr, r(n-k)}))$ faces, we compute the $CH_g$ in $O(k \log k + k \log r)$ time, then spend $O(r+k)$ time to find the geodesic centre.
We can compare the following to \cref{eqn: first gvd comp a}.

\begin{align}
\begin{split}
	&O(k^2 n\log n \log^2 r + \min(kr, r(n-k))\\ 
	& \quad + (k\log k + k\log r + r) \cdot (k(n-k) + \min(kr, r(n-k)))
	+ m) \nonumber
\end{split}\\
\begin{split}
\label{eqn: first gvd comp b2}
	=\ &O(k^2n\log n \log^2 r 
     + kr(n-k) +  r \cdot \min(kr, r(n-k))\\
    & \quad + (k\log k + k\log r) \cdot \min(kr, r(n-k))
	+ m) 
\end{split}
\end{align}

The case for $k \in O(1)$ is invalid. 
If $k\log k \log^2 r \notin O(r)$, that implies $r \in O(k\log k \log^2 r)$, 
but because $k \in O(1)$, that implies $r \in O(\log^2 r)$, which is false.

When $k \in \Theta(n)$, we get:

\begin{align*}
\begin{split}
	&O(n^3\log n \log^2 r 
    + n^2r\log n + n^2r\log r
    + nr^2
	+ m) 
\end{split}
\end{align*}

When $n\log n \log r \notin O(r)$, the more efficient OKGVD construction time is
$O(k^2 n\log n + k^2 r\log r)$.
When $k\log k \log^2 r \in O(r)$, we have the following to compare against \cref{eqn: second gvd comp a}:

\begin{align}
\begin{split}
	&O(k^2 n\log n + k^2 r \log r 
	+ k\log k \log^2 r \cdot (k(n-k) + \min(kr, r(n-k)))
	+ m) \nonumber
\end{split}\\
\begin{split}
\label{eqn: second gvd comp b1}
	=\ &O(k^2 n\log n + k^2 r \log r 
	+ k^2(n-k) \log k \log^2 r  \\
   & \quad + k\log k \log^2 r \cdot \min(kr, r(n-k))
	+ m) 
\end{split}
\end{align} 

When $k\in O(1)$, we get:

\begin{align*}
\begin{split}
	&O( n\log n +  n\log^2 r + r\log^2 r
	+ m) 
\end{split}
\end{align*} 

The case for $k \in \Theta(n)$ is invalid.
In this case, we have $r \in O(n\log n \log r)$ and $k \log k \log^2 r \in O(r)$ which, when $k \in \Theta(n)$, implies $n \log n \log^2 r \in O(n\log n \log r)$.

When $k\log k \log^2 r \notin O(r)$, we can compare \cref{eqn: second gvd comp a} to:

\begin{align}
\begin{split}
	&O(k^2 n\log n + k^2 r \log r + (k\log k + k\log r + r) \cdot (k(n-k) + \min(kr, r(n-k)))
	+ m) \nonumber
\end{split}\\
\begin{split}
\label{eqn: second gvd comp b2}
	=\ &O(k^2 n\log n + k^2 r \log r 
    + k^2(n-k)\cdot (\log k + \log r) + kr(n-k)\\
    & \quad+ (k\log k + k\log r) \cdot \min(kr, r(n-k))
    + r \cdot \min(kr, r(n-k))
	+ m) 
\end{split}
\end{align} 

Similar to before, the case for $k \in O(1)$ is invalid; 
we would have the implications that
$r \in O(k\log k \log^2 r)$ and thus $r \in O(\log^2 r)$, which is false.

When $k \in \Theta(n)$, we get:

\begin{align*}
\begin{split}
	&O(n^3 \log n + n^3 \log r + n^2 r\log n + n^2 r\log r + nr^2
	+ m) 
\end{split}
\end{align*} 

Thus, we see that for $1 < k < n-1$
it is more efficient to use the order-$(k-2)$ and order-$(k-1)$ geodesic Voronoi diagrams.

\okvdExactApproach*

\clearpage

\section{DI-Algo Details}
\label{app: di-algo}
In this section we provide details omitted from \cref{sec: Divide et Impera}.

\diAnalysis*

\begin{proof}
As discussed earlier (at the beginning of \cref{sec: Divide et Impera}), 
preprocessing takes $O(n\log r + m)$ time and 
$O(n + m)$ space.

\ref{alg: DI}
is a recursive, Divide-and-Conquer algorithm.
The recurrence tree of the Divide-and-Conquer algorithm
mimics 
$T_B$
and has $O(\log r)$ depth.
The recursive algorithm visits each node of the tree
that represents a subpolygon that contains at least $k$ points of $S$.
For nodes representing subpolygons containing fewer than $k$ points of $S$,
there is no work to be done; such a node and the branch of $T_B$ 
stemming from it are effectively 
pruned from the recursion tree.
By \cref{lemma: merge is 2 approx}, the result of the merge step
is a $2$-approximation for $P_{\tau}$.
Therefore, when we finish at the root, we have a $2$-approximation 
to $D^*$.

The base case of the recursion is triggered when we reach
a leaf of $T_B$.
Here 
$P_{\tau}$
is a triangle.
When the base case is reached,
\ref{alg: DI} 
runs the planar $2$-approximation algorithm 
in expected time and expected space 
linear in the number of points of 
$S_{\tau}$ (i.e., $O(n')$)
\cite{har2011geometric,har2005fast}.

We assume for the moment that the merge step
runs in expected time
$O(n' \log n' \log r + n' \log^2 n')$ 
and uses $O(n' + r')$ space
(we show this in \cref{subsec: qs alg descr}).
The expected running time of the base case is
dominated by the time for the merge step.
As we show below,
this implies that the algorithm runs in 
$O(n \log n \log^2 r + n \log^2 n \log r + m)$ expected time.

Let $\sigma$ be the number of nodes in $T_B$ at the current
level of the tree
(i.e., the number of nodes whose depth in the tree is the same as that of $\tau$).
For $1 \leq j \leq \sigma$, denote the nodes of this level by
$\tau_j$, 
and denote by $n_j$ the number of points of $S$ in the polygon 
associated with $\tau_j$.
For $1 \leq j \leq \sigma$,
for each $\tau_j$
we perform an iteration of the Divide-and-Conquer algorithm
(i.e., the merge step)
in expected time $O(n_j \log n_j \log r + n_j \log^2 n_j)$. 
The runtime depends on $n_j$, which may change from node to node.
However, across a given level of the decomposition tree, the sum of
the different values of $n_j$ is $n$.
Let $X_j$ be the random variable denoting the runtime
of the merge step and all non-recursive operations
for the $j\textsuperscript{th}$ node.
Now let $W$ count the work done in this level of the decomposition tree.
We have the following calculation for the time and expected time
spent on each level of the decomposition tree:

\begin{align*}
W & = O(\Sigma_{j=1}^{\sigma}  X_j )\\
E[W] & = O(E[\Sigma_{j=1}^{\sigma}  X_j ])\\
& = O(\Sigma_{j=1}^{\sigma} E[ X_j ])\\
& = O(\Sigma_{j=1}^{\sigma} (n_{j} \log r \log n_{j} + n_{j}\log^{2} n_{j} ) ) \qquad \text{(by~\cref{lemma: simple rand works})} \\ 
& = O( \log r \Sigma_{j=1}^{\sigma} n_{j}\log n + \Sigma_{j=1}^{\sigma} n_{j}\log^2 n)\\
& = O(\log n \log r (\Sigma_{j=1}^{\sigma} n_{j}) + \log^2 n (\Sigma_{j=1}^{\sigma} n_{j}))\\
& = O(n \log n \log r + n \log^2 n)
\end{align*}

The space bound follows from the space for preprocessing
and the space in the merge step 
which is released after the merge.
Since the expected time spent at each level is 
$O(n \log n \log r + n \log^2 n)$ and there are $O(\log r)$
levels, the expected time bound of the whole algorithm is
$O(n \log n \log^2 r + n \log^2 n \log r)$ 
plus $O(r)$ time to traverse $T_B$.

We arrive at $O(n\log n \log^2 r + n\log^2 n \log r + m)$ expected time,
including the time for preprocessing.
We can simplify this expression.
Consider the largest-order terms in the running time:

\begin{equation} 
\underbrace{n\log n\log^2 r}_{A} + \underbrace{n\log^2 n\log r}_{B} 
+ \underbrace{m}_{C} 
\label{expr1}
\end{equation} 

Assume A dominates B and C, by which we mean A is greater than a constant multiplied by either term.
This implies 
Properties~\eqref{prop2} and~\eqref{prop1}:

\begin{align}
&& n\log n\log^2 r & >   m \label{prop2} \\
&& n\log n\log^2 r & > 3 \cdot (n\log^2 n\log r) \label{prop1} \\
&& \log r & >  3 \cdot (\log n) & \text{by~\eqref{prop1}} \label{prop1b} \\
\Rightarrow && r & > n^3 \nonumber \\
\Rightarrow && m & > n^3 \nonumber\\
\Rightarrow && m^{1/2} & > n^{3/2} \nonumber \\
\Rightarrow && m^{1/2} & > n\log n \label{prop1c} \\
&& m^{1/2} & > \log^2 m \nonumber \\
\Rightarrow && m^{1/2} & > \log^2 r \label{prop3} \\
&& m & > n\log n \log^2 r & 
\text{by~\eqref{prop1c} and~\eqref{prop3}}\label{prop4}
\end{align}
Property~\eqref{prop4} contradicts 
Property~\eqref{prop2}, and our assumption must be false.
Consequently, we can simplify Expression~\eqref{expr1} as follows:
\begin{equation} 
n\log^2 n\log r + m
\label{expr2}
\end{equation} 
\end{proof}

\clearpage

\section{Using \texorpdfstring{$k$}{k}-Nearest Neighbour Queries}
\label{app: knn queries}

\subsection*{\ref{alg: DI}:}
\label{app subsubsec: knn di algo}
If we wanted to incorporate the $k$-nearest neighbour query data structure
into \ref{alg: DI}, we would use it in the merge step, 
i.e., in \ref{alg: merge-me} to find the $k$\textsuperscript{th}-nearest neighbour
of our randomly chosen projection.
When might we get a more efficient algorithm using the $k$-nearest neighbour query data structure?
Perhaps we can get some intuition by noticing that although finding 
the $k$\textsuperscript{th}-nearest neighbour of our randomly chosen projection
will be faster, the time for the other steps (such as computing intervals along the chord)
remains unchanged.
We use the same notation as in \cref{section: quickselect merge step}.

The first option is to build the data structure over the whole 
polygon using all points of $S$, meaning that every query will give the $k$-nearest neighbours
from $S$, not just the points in the subpolygon under consideration.
Since we already get a $2$-approximation using only the points of the subpolygon being
considered and since $S$ contains this set, we still get a $2$-approximation 
after the recursion if we use the data structure built on all points of $S$ over all of $P$.
Using this approach, the expected space of the algorithm becomes $O(n\log n \log r + m)$
and the expected runtime becomes 
$O( n\log^2 n \log r + r\log n \log (n+r) \log r  +m + rk\log n \log r)$.
Indeed, we have the same $O(m + n\log^2 n \log r)$ as before, but now each
node of our decomposition tree will be making this $k$-nearest neighbour query 
$O(\log n')$ times with high probability when not in the base case.
More specifically, in one iteration of the merge step we spend expected 
$O(n'\log n' \log r + n'\log^2 n' + \log n' \log (n+r) \log r + k\log n' \log r)$ 
time.
We can proceed as in the proof of \cref{cor: DC using quickselect}
in \cref{app: di-algo}.

Let $\sigma$ be the number of nodes in $T_B$ at the current
level of the tree
(i.e., the number of nodes whose depth in the tree is the same as that of $\tau$).
For $1 \leq j \leq \sigma$, denote the nodes of this level by
$\tau_j$, 
and denote by $n_j$ the number of points of $S$ in the polygon 
associated with $\tau_j$.
For $1 \leq j \leq \sigma$,
for each $\tau_j$
we perform an iteration of the Divide-and-Conquer algorithm
(i.e., the merge step)
in expected time 
$O(n_j\log n_j \log r + n_j\log^2 n_j + \log n_j \log (n+r) \log r + k\log n_j \log r)$. 
The runtime depends on $n_j$ which may change from node to node.
However, across a given level of the decomposition tree, the sum of
the different values of $n_j$ is $n$.
Let $X_j$ be the random variable denoting the runtime
of the merge step and all non-recursive operations
for the $j\textsuperscript{th}$ node.
Now let $W$ count the work done in this level of the decomposition tree.
Since the first two summands are the same as in the proof of \cref{cor: DC using quickselect},
we omit them and focus on the latter two.
We have the following calculation for the time and expected time
spent on each level of the decomposition tree:

\begin{align*}
W & = O(\Sigma_{j=1}^{\sigma}  X_j )\\
E[W] & = O(E[\Sigma_{j=1}^{\sigma}  X_j ])\\
& = O(\Sigma_{j=1}^{\sigma} E[ X_j ])\\
& = O(\Sigma_{j=1}^{\sigma} (\log n_j \log (n+r) \log r + k\log n_j \log r) ) \\
& = O( \log (n+r)\log r (\Sigma_{j=1}^{\sigma} \log n_j) + k\log r (\Sigma_{j=1}^{\sigma} \log n_j))\\
& = O( \log (n+r)\log r (\Sigma_{j=1}^{\sigma} \log n) + k\log r (\Sigma_{j=1}^{\sigma} \log n))\\
& = O( \log n \log (n+r)\log r (\Sigma_{j=1}^{\sigma} 1) + k\log n \log r (\Sigma_{j=1}^{\sigma} 1))\\
& = O( r \log n \log (n+r)\log r + rk\log n \log r )\\
& \qquad \text{(since there are at most $r$ nodes in a level)}
\end{align*}

However, since there are only $O(r)$ nodes in the entire tree,
this implies that we expect to spend $O( r \log n \log (n+r)\log r + rk\log n \log r )$ time
overall for $k$-nearest neighbour queries, thus we expect to spend
$O(n\log^2 n \log r + m + r\log n \log (n+r) \log r  + rk\log n \log r)$ time, which is not
as efficient as the expected $O(n\log^2 n \log r + m)$ time 
presented by \cref{cor: DC using quickselect}.

The second option is to build the data structure for each subpolygon of our recursion tree.
We use familiar $r_j$ notation to denote the number of subpolygon vertices in 
the $j$\textsuperscript{th} node of a level of $T_B$.

\begin{align*}
E[\text{space for a level of $T_B$}] & = O(E[\Sigma_{j=1}^{\sigma}  \text{space for the $j$\textsuperscript{th} node} ])\\
& = O(\Sigma_{j=1}^{\sigma} E[ \text{space for the $j$\textsuperscript{th} node}  ])\\
& = O(\Sigma_{j=1}^{\sigma} (n_j \log n_j \log r_j ) ) \\
& = O(\Sigma_{j=1}^{\sigma} (n_j \log n \log r ) ) \\
& = O(\log n \log r \Sigma_{j=1}^{\sigma} (n_j ) ) \\
& = O(n \log n \log r ) 
\end{align*}
Given that there are $O(\log r)$ levels of $T_B$ and considering the other space used
in preprocessing, the expected space usage becomes $O(n \log n \log^2 r + m)$.
The expected time to build the $k$-nearest neighbour query data structures for each level
of $T_B$ is as follows.

\begin{align*}
E[\text{construction time for a level of $T_B$}] & = O(E[\Sigma_{j=1}^{\sigma}  \text{time for the $j$\textsuperscript{th} node} ])\\
& = O(\Sigma_{j=1}^{\sigma} E[ \text{time for the $j$\textsuperscript{th} node} ])\\
& = O(\Sigma_{j=1}^{\sigma} n_j (\log n_j \log^2 r + \log^3 r) ) \\
& = O((\log n \log^2 r + \log^3 r) \Sigma_{j=1}^{\sigma} (n_j ) ) \\
& = O( n(\log n \log^2 r + \log^3 r)) 
\end{align*}
Thus we expect construction over the whole tree to take $O( n\log n \log^3 r + n \log^4 r)$ time.
Similar to before, the expected time spent performing queries to the data structure is as follows.

\begin{align*}
E[\text{query time for a level of $T_B$}] & = O(E[\Sigma_{j=1}^{\sigma}  \text{query time for the $j$\textsuperscript{th} node} ])\\
& = O(\Sigma_{j=1}^{\sigma} E[ \text{query time for the $j$\textsuperscript{th} node}  ])\\
& = O(\Sigma_{j=1}^{\sigma} ( \log n_j \log (n_j + r)\log r + k \log n_j \log r ) ) \\
& = O(\Sigma_{j=1}^{\sigma} \log n_j \log (n_j + r)\log r +  \Sigma_{j=1}^{\sigma} k \log n_j \log r ) \\
& = O(\Sigma_{j=1}^{\sigma} \log n \log (n + r)\log r +  \Sigma_{j=1}^{\sigma} k \log n \log r ) \\
& = O(\log n \log (n + r)\log r \Sigma_{j=1}^{\sigma} (1) +  k \log n \log r\Sigma_{j=1}^{\sigma} (1) ) \\
& = O(r \log n \log (n + r)\log r +  rk \log n \log r )
\end{align*}
As before, the $O(r)$ nodes in the tree imply $O(r \log n \log (n + r)\log r +  rk \log n \log r )$
expected time spent performing queries across all nodes of the tree.

Thus the expected time to perform \ref{alg: DI} becomes
$O(m + n\log^2 n \log r + n\log n \log^3 r + n \log^4 r + r \log n \log (n + r)\log r +  rk \log n \log r )$ using this approach,
which is not
as efficient as the expected time 
presented by \cref{cor: DC using quickselect}.

\clearpage

\section{Pseudocode}
\label{app: code}

\ref{alg: RSAlgo} is discussed in \cref{sec: random sampling}.
\begin{algorithm}[h!]
\SetAlgoRefName{RS-Algo}
	\SetKwInOut{Input}{input}\SetKwInOut{Output}{output}
	\LinesNumbered	
	
	\Input{simple polygon $P_{in}$, set $S$ of points in $P_{in}$ with $|S|=n$, $k \leq |S|$}
	\Output{A point $c \in S$ and a value $\rho$ such that $D(c, \rho)$
	is a SKEG disc for points of $S$ that
	is \emph{centred on a point of $S$}}
	\BlankLine
	\tcc{Start preprocessing}
	$P =$ simplifyPolygon($P_{in}$)\;
	Build a shortest-path data structure on $P$\;
	\tcc{End preprocessing}
	minPoint = some point in $S$\;
	minRadius = infinity\;
\For{$i = 0, i < (n / k)\ln (n), ++i$}{ \label{ranFor}
		Pick a point $c \in S$ uniformly at random\; \label{ranPick}
		temp = geodesic distance from $c$ to its $(k-1)^{st}$-closest point in $S\setminus \{c\}$\; \label{bfpointdisc}

		\If{temp $<$ minRadius}{
			minRadius = temp\;
			minPoint = $c$\;		
		}
	}
	\Return{minPoint, minRadius}
	\caption{}
	\label{alg: RSAlgo}
\end{algorithm}

\ref{alg: DI preproc} is discussed in \cref{sec: Divide et Impera}.

\begin{algorithm}[!h]
\SetAlgoRefName{Preproc-Algo}
	\SetKwInOut{Input}{input}\SetKwInOut{Output}{output}
	
	\Input{simple polygon $P_{in}$, set of points $S$ in $P_{in}$ with $|S|=n$, $k \leq |S|$}
	\Output{everything computed below}
	\BlankLine
	$P =$ simplifyPolygon($P_{in}$)\;
	Build a shortest-path data structure on $P$\;
	$T_B=$ a balanced hierarchical polygon decomposition of $P$\;	
	Build an $O(\log r)$ query-time point-location data structure on $P$
	using the triangulation underlying $T_B$\;
	Compute which points of $S$ lie in each subpolygon represented by $T_B$\;
    \caption{}
	\label{alg: DI preproc}
\end{algorithm}

\clearpage

\ref{alg: DI} is presented in \cref{subsec: DI alg descr}.

\begin{algorithm}[!h]
\SetAlgoRefName{DI-Algo}
	\SetKwInOut{Input}{input}\SetKwInOut{Output}{output}
	
	\Input{current node $\tau$ in the balanced decomposition tree $T_B$
	containing diagonal $\ell$, simple polygon $P_{\tau}$ split by $\ell$, set of points $S_{\tau}$ in $P_{\tau}$ with $|S|=n$, $k \leq |S|$}
	\Output{a point $c$ in $P_{\tau}$ and a value $\rho$ such that $D(c, \rho)$
	is a 
    $2$-SKEG disc
	for the	
	points of $S_{\tau}$}
	\BlankLine
	\tcc{Preprocessing done in \ref{alg: DI preproc} before the 
	first recursive call}

    \If{$P_{\tau}$ is a triangle}{
		(minPoint, minRadius) = output of the planar approximation algorithm  
		                        for the points of $S_{\tau}$ in the Euclidean plane \cite{har2011geometric,har2005fast}\;
	} 
	\Else{
		Choose a side of $\ell$ to be \emph{left}\;
		$P_{1}$ = subpolygon of $P_{\tau}$ left of $\ell$\; 
		$P_{2}$ = subpolygon of $P_{\tau}$ right of $\ell$\;
		$S_1$ = points of $S_{\tau}$ in $P_1$\;
		$S_2$ = points of $S_{\tau}$ in $P_2$\;		
		(leftPoint, leftRadius) = DI-Algo(left child of $\tau$ in $T_B$, $P_{1}$, $S_1$, $k$)\;
		(rightPoint, rightRadius) = DI-Algo(right child of $\tau$ in $T_B$, $P_{2}$, $S_2$, $k$)\;		

		\If{leftRadius $<$ rightRadius}{
			minRadius = leftRadius\;
			minPoint = leftPoint\;
		}
		\Else{
			minRadius = rightRadius\;
			minPoint = rightPoint\;			
		}
		\tcc{Begin merge step}
        (tempPoint, tempRadius) = Compute a merge disc centred on $\ell$ for the points of $S_1 \cup S_2$\;
			\If{tempRadius $<$ minRadius}
			{
				minRadius = tempRadius\;
				minPoint = tempPoint\;
			}
		\tcc{End merge step}
	}
	\Return{minPoint, minRadius}

    \caption{}
	\label{alg: DI}
\end{algorithm}

\clearpage

\ref{alg: main} is presented at the end of \cref{sec: Divide et Impera}.

\begin{algorithm}[h!]
\SetAlgoRefName{Main-Algo}
	\SetKwInOut{Input}{input}\SetKwInOut{Output}{output}
	
	\Input{simple polygon $P_{in}$, set of points $S$ in $P_{in}$ whose cardinality is $n$, $k \leq |S|$}
	\Output{a point $c$ in $P_{in}$ and a value $\rho$ such that $D(c, \rho)$
	is a 
    $2$-SKEG
	disc for the	
	points of $S$}
	\BlankLine
	\tcc{Start preprocessing}
	Run \ref{alg: DI preproc} and get polygon $P$\;
	\tcc{End preprocessing}
	\If{$P$ is convex}{
        (minPoint, minRadius) = output of the planar approximation algorithm \cite{har2011geometric,har2005fast}\;
	} 
	\ElseIf{$k \in \omega(n/\log n)$}{
		(minPoint, minRadius) = \ref{alg: RSAlgo}($P$,$S$,$k$)\;
	}
	\Else{
		(minPoint, minRadius) = \ref{alg: DI}(root of $T_B$, $P$, $S$, $k$)\;
	}
	\Return{minPoint, minRadius}

    \caption{}
	\label{alg: main}
\end{algorithm}

\clearpage

\ref{alg: merge-me} is presented in \cref{section: quickselect merge step}.

\begin{algorithm}[htbp]
\SetAlgoRefName{Merge-Algo}
	\SetKwInOut{Input}{input}\SetKwInOut{Output}{output}
	
	\Input{current diagonal $\ell$ from the node $\tau$ in $T_B$ splitting polygon $P_{\tau}$ into $P_1$ and $ P_2$ with $|P_{\tau}| = r'$, 
	set of points $S_{\tau} = S_1 \cup S_2$ in $P_{\tau}$ with $|S_{\tau}|=n'$, $k \leq |S_{\tau}|$}
	\Output{a point $c \in \ell$ and a value $\rho$ such that $D(c, \rho)$ is a 
    $2$-SKEG
	disc for the	
	points of $S_{\tau}$ in $P_{\tau}$ if $D^*$ contains at least one point of $S_1$ and at least one point of $S_2$}
	\BlankLine
	
    Compute $S^{c}_{\tau}$\;
    Initiate candidate set $\mathbb{C} = S^{c}_{\tau}$\;
    Initiate set $\mathbb{S} = S_{\tau}$\;
    
    \While{$|\mathbb{C}| > 0$}
    {
        Pick a point $u_c \in \mathbb{C}$ uniformly at random\;
        Find the point $z \in \mathbb{S}$ that is the $k$\textsuperscript{th}-closest neighbour of $u_c$\;
        Let $\rho = d_g(u_c, z)$\;
        \If{$|\mathbb{C}| == 1$}
        {
            break\;
        }
        \ForEach{$v \in \mathbb{S}$}
        {
            Compute $I(v, \rho)$\;
            \If{$I(v, \rho)$ is empty}
            {
                Remove $v$ from $\mathbb{S}$\;
            }
        }
        \ForEach{$w \in \mathbb{C}$}
        {
            Compute the depth of $w$\;
            \If{the depth of $w$ is less than $k$}
            {
                Remove $w$ from $\mathbb{C}$\;
            }
        } 

        Remove $u_c$ from $\mathbb{C}$\;
    }

	\Return{$u_c$, $\rho$}
    \caption{}
	\label{alg: merge-me}
\end{algorithm}

\end{document}